\newtheorem{proposition}{Proposition}
\newtheorem{theorem}{Theorem}
\newtheorem{lemma}{Lemma}
\newtheorem{corollary}{Corollary}
\begin{document}

\def\QEDclosed{\mbox{\rule[0pt]{1.3ex}{1.3ex}}}
\def\proof{\noindent\hspace{2em}{\itshape Proof: }}
\def\endproof{\hspace*{\fill}~\QEDclosed\par\endtrivlist\unskip}

\title{A new sufficient condition for sum-rate tightness in quadratic Gaussian multiterminal source coding}
\author{Yang Yang, Yifu Zhang, and Zixiang Xiong
\thanks{This work was supported by the NSF grant 0729149
and the Qatar National Research Fund.
The authors are with the department of
electrical and computer engineering, Texas A\&M University, College
Station, TX 77843. Emails: yangyang@tamu.edu, zyf674@tamu.edu, and
zx@ece.tamu.edu. Part of this work was presented at the Information Theory
and Applications Workshop, San Diego, CA, February 2010.}}

\maketitle 
\begin{abstract}
This work considers the quadratic Gaussian multiterminal (MT)
source coding problem and provides a new sufficient
condition for the Berger-Tung sum-rate bound to be tight. The
converse proof utilizes a set of virtual remote sources given which the MT sources are block independent with a maximum block size of two.
The given MT source coding problem is then related
to a set of two-terminal problems with matrix-distortion
constraints, for which a new lower bound on the sum-rate is given.
Finally, a convex optimization problem is formulated and a sufficient condition derived for the optimal BT scheme to satisfy the
subgradient based Karush-Kuhn-Tucker condition.
The set of sum-rate tightness problems defined by our
new sufficient condition subsumes all previously known tight
cases, and opens new direction for a more general partial solution.
\end{abstract}

\begin{keywords}
Quadratic Gaussian multiterminal source coding, sum-rate,
subgradient, and Karush-Kuhn-Tucker condition.
\end{keywords}

\newpage

\section{Introduction}

Multiterminal (MT) source coding, which was introduced by Berger \cite{Berger77} and Tung \cite{Tung77} in 1977, defines the problem of separate compression and joint decompression of multiple correlated sources subject to distortion constraints. Finding the achievable rate region for the general MT problem is very hard, hence research has been focusing on the quadratic Gaussian case when the sources are jointly Gaussian and the distortion measure is the mean-squared error. The sum-rate part of the achievable rate region of the quadratic Gaussian MT problem is of particular interest and has been characterized for several special instances.

By connecting the quadratic Gaussian MT source coding problem to the quadratic Gaussian CEO problem \cite{VisBergerQGCEO,Oohama05}, Wagner {\em et al.} \cite{Wagner05} showed sum-rate tightness of the Berger-Tung (BT) rate region for the two-terminal and positive-symmetric cases. Wang {\em et al.} \cite{WangISIT09} then provided an alternative proof based on an estimation-theoretic result, which also leads to a sufficient condition for BT sum-rate tightness. Yang and Xiong \cite{Allerton:09} started with a generalized quadratic Gaussian CEO problem and proved sum-rate tightness in the bi-eigen equal-variance with equal distortion (BEEV-ED) case. Although the BEEV-ED case satisfies the sufficient condition given in \cite{WangISIT09}, the proof technique for the converse theorem is different and examples more explicit.

Wang {\em et al.}'s sufficient condition \cite{WangISIT09} is so far the most inclusive condition for BT sum-rate tightness, and its converse proof consists of the following steps. First, a set of $L$ virtual sources, referred to as the remote sources, were constructed such that the given $L$ MT sources can be viewed as independently Gaussian corrupted versions of the remote sources. Then, they used an estimation-theoretic result in conjunction with the semidefinite partial ordering of the distortion matrices to give a lower bound on the MT sum-rate. Finally, an optimization problem was formulated to find the best lower bound over possible (conditional and unconditional) distortion matrices, with the Karush-Kuhn-Tucker (KKT) condition given and simplified to prove their main result. An important assumption that enables their proof is that the observation noises between the virtual remote sources and the MT sources are independent Gaussian with a {\em diagonal} covariance matrix. Since the rate-distortion function for independent Gaussian random sources is completely known, this assumption dramatically simplifies the lower bound and hence the optimization problem.

In this paper, we provide a new and more inclusive sufficient condition than Wang {\em et al.}'s \cite{WangISIT09} for BT sum-rate tightness\footnote{The conference version of this work appeared in \cite{YangITA10}.}. The main novelty is to consider a larger set of remote sources, such that the observation noises between the MT and remote sources have a {\em block-diagonal} covariance matrix, instead of a diagonal matrix as assumed in \cite{WangISIT09}. By restricting the noise covariance matrix to have $K$ $2\times 2$ diagonal blocks and $(L-2K)$ $1\times 1$ diagonal blocks, we build a connection between the $L$-terminal problem and $K$ two-terminal problems with {\em matrix-distortion} constraint.

Unfortunately, although the original quadratic Gaussian two-terminal source coding problem with {\em individual} distortion constraint has been completely solved \cite{Oohama97,Wagner05}, the exact minimum sum-rate for its variant with a matrix-distortion constraint is still unknown in general. A composite lower bound is already provided by Wagner {\em et al.} \cite{Wagner05}. We partially improve the composite lower bound in this paper, using a technique inspired by Wang {\em et al.}'s work \cite{WangISIT09}. It is shown that this improvement can be infinitely large in some extreme cases. However, our new lower bound does not always match the BT upper bound, leaving a bounded gap between them.

Our new lower bound for the matrix-distortion constrained two-terminal problem is then utilized to give a new sum-rate lower bound on the $L$-terminal problem. After forming an optimization problem to search for the best $L$-terminal lower bound, we characterize the generalized KKT condition based on the subgradient \cite{subgradientbook} of the objective function, which is convex, continuous, but non-differentiable. Finally, our new sufficient condition is obtained by simplifying the subgradient-based KKT condition. Examples with tight sum-rate bound are also given.

The set of sum-rate tightness problems defined by our new sufficient condition subsumes all previously known tight cases, including Wagner {\em et al.}'s positive-symmetric case, Wang {\em et al.}'s sufficient condition, and Yang and Xiong's BEEV-ED case, thanks to the following two novelties. First, replacing Wang {\em et al.}'s independence assumption on the observation noises with a block-independent one leads to a larger repertoire of remote sources that serve as the basic tools for deriving the sum-rate lower bound. Second, the partially improved composite bound for the matrix-distortion constrained two-terminal problem gives a wider range of subgradients, hence a more relaxed subgradient-based KKT condition.

It is worth noting that, our new condition even includes degraded cases where the target distortions are not simultaneously achieved in the optimal BT scheme. This is the first time a degraded case is proved to have a tight BT sum-rate bound.

In addition, the technique introduced in this paper might be further generalized to allow $3\times 3$ (or even larger) block size in the observation noise covariance matrix to yield even more new tight cases, if one can explicitly give a lower-bound on the corresponding matrix-distortion constrained three-terminal problem.

The rest of this paper is organized as follows. Section II gives the formal definition of the quadratic Gaussian MT source coding problem and reviews existing results on sum-rate tightness. Section III studies the two-terminal source coding problem with matrix-distortion constraint, and provides an improved lower bound on the sum-rate. Section IV states our main results on a new sufficient condition for sum-rate tightness, and presents a degraded example belonging to the block-degraded case that satisfies our new condition. Section V gives a simplified sufficient condition for the sum-rate tightness in the non-degraded cases, followed by two additional examples satisfying the simplified condition. Section VI concludes the paper.

\section{The quadratic Gaussian MT source coding problem and existing results on sum-rate tightness}

\subsection{The quadratic Gaussian MT source coding problem}

For any integer $L$, denote $\mathcal{L}=\{1,2,...,L\}$. Let $Y_\mathcal{L}=(Y_1,Y_2,...,Y_L)^T$ be a length-$L$ vector
Gaussian source with mean $\boldsymbol{0}$ and covariance matrix $\Sigma_{Y_\mathcal{L}}$.
Also denote $Y_{\mathcal{S}_k}$ as the length-$|\mathcal{S}_k|$ subvector of $Y_\mathcal{L}$ indexed by $\mathcal{S}_k$. For an integer $n$, let $\boldsymbol{Y}_{\mathcal{L}}=(Y_{\mathcal{L},1},Y_{\mathcal{L},2},...,Y_{\mathcal{L},n})$ be an $L\times n$ matrix with $Y_{\mathcal{L},i}$, $i=1,2,...,n$ being $n$ independent drawings of $Y_{\mathcal{L}}$. Also denote $\boldsymbol{Y}_{\mathcal{S}_k}$ as the $|\mathcal{S}_k|\times n$ submatrix of $\boldsymbol{Y}_\mathcal{L}$ with column indices $\mathcal{S}_k$. For any $L\times n$ random matrix $\boldsymbol{Y}_\mathcal{L}$ and any random object $\omega$, define the conditional covariance matrix of $\boldsymbol{Y}_\mathcal{L}$ given $\omega$ as
\begin{eqnarray}
\mathrm{cov}(\boldsymbol{Y}_\mathcal{L}|\omega)&\stackrel{\Delta}{=}&\frac{1}{n}E\Big[\big(\boldsymbol{Y}_{\mathcal{L}}-E(\boldsymbol{Y}_{\mathcal{L}}|\omega)\big) \big(\boldsymbol{Y}_{\mathcal{L}}-E(\boldsymbol{Y}_{\mathcal{L}}|\omega)\big)^T\Big].
\end{eqnarray}

Consider the task of separately compressing a length-$n$ block of sources $\boldsymbol{Y}_\mathcal{L}$ at $L$ encoders and jointly reconstructing $\boldsymbol{Y}_\mathcal{L}$ as $\hat{\boldsymbol{Y}}_\mathcal{L}$ at a central decoder subject to individual distortion constraints $D_\mathcal{L}=\{D_1,D_2,...,D_L\}$. This problem is known as the {\em quadratic Gaussian MT source coding problem}, whose block diagram is depicted in Fig \ref{fig:diagram}.

\begin{figure}[t]
\centerline{\epsfxsize=4.2in
\epsfbox{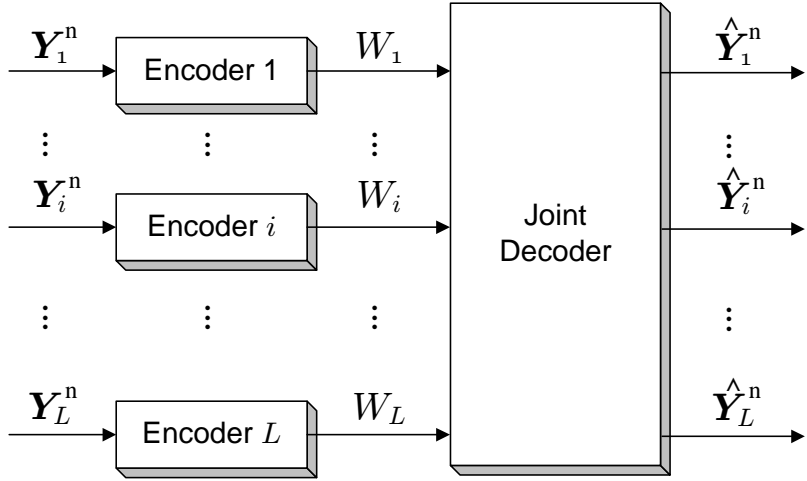}}
\caption{The quadratic Gaussian MT source coding problem.}
\label{fig:diagram}
\end{figure}

Let \begin{eqnarray}
\phi_j^{(n)}:\mathbb{R}^{n}\mapsto\{1,2,...,2^{R_j^{(n)}}-1\},~~j\in\mathcal{L}
\end{eqnarray}
be the $j$-th encoder function and
\begin{eqnarray}
\psi_j^{(n)}:\{1,2,...,2^{R_1^{(n)}}-1\}\times \{1,2,...,2^{R_2^{(n)}}-1\}
\times ...\times \{1,2,...,2^{R_L^{(n)}}-1\}\mapsto\mathbb{R}^n
\end{eqnarray}
be the reconstruction function for $\boldsymbol{Y}_j$.
Denote $W_j$ as the transmitted symbol at the $j$-th encoder,
and $R_{sum}(\phi^{(n)}_\mathcal{L},\psi^{(n)}_\mathcal{L})
=\sum_{j\in\mathcal{L}} R_j^{(n)}$ as the sum-rate of the MT coding scheme
$(\phi^{(n)}_\mathcal{L},\psi^{(n)}_\mathcal{L})$.
We say a rate tuple $(R_1,...,R_L)^T$ is $(\Sigma_{Y_\mathcal{L}},D_\mathcal{L})$-achievable if there exists
a sequence of schemes
$\{(\phi^{(n)}_\mathcal{L},\psi^{(n)}_\mathcal{L}):n\in\mathbb{N}^+\}$ such that
\begin{eqnarray}
\limsup_{n\rightarrow\infty} R_j^{(n)}&\le&R_j,
\mathrm{~~for~any~}j\in\mathcal{L},\\
\limsup_{n\rightarrow\infty}
\frac{1}{n}E\Big[({Y}_{j,i}-\hat{{Y}}_{j,i})^2\Big]&\le&D_j,
\mathrm{~~for~any~}j\in\mathcal{L}.
\end{eqnarray}
Define the
$(\Sigma_{Y_\mathcal{L}},D_\mathcal{L})$-achievable rate region
${\mathcal{R}}_{\Sigma_{Y_\mathcal{L}}}({D}_\mathcal{L})$
as the convex closure of all
$(\Sigma_{Y_\mathcal{L}},{D}_\mathcal{L})$-achievable
rate tuples, i.e.,
\begin{eqnarray}
{\mathcal{R}}_{\Sigma_{Y_\mathcal{L}}}^{}({D}_\mathcal{L})
&=& \mathrm{cl}\{(R_1,R_2, \ldots, R_L)^T: (R_1,R_2, \ldots, R_L)^T
\mathrm{~is~}
({\Sigma}_{{Y}},D_\mathcal{L})\mathrm{~achievable}\}.
\end{eqnarray}
The {\it minimum sum-rate} with respect to
$(\Sigma_{Y_\mathcal{L}},{D}_\mathcal{L})$ is then
defined as
\begin{eqnarray}
R_{\boldsymbol\Sigma_{Y_\mathcal{L}}}^{}(D_\mathcal{L})
&=& \inf\{~\sum_{i=1}^L R_i: (R_1,R_2, \ldots, R_L)^T \in
{\mathcal{R}}_{\boldsymbol\Sigma_{Y_\mathcal{L}}}^{}(D_\mathcal{L})\}.
\end{eqnarray}

Berger and Tung \cite{Berger77,Tung77} provide an {\it inner rate region} inside which all rate tuples are $({\Sigma}_{{Y}_\mathcal{L}},{D}_\mathcal{L})$-achievable.
In this paper, we restrict ourselves to a subset of the Berger-Tung inner rate region inside which all points can be achieved by parallel Gaussian test channels. This subset is referred to as the {\em Berger-Tung (BT)} inner rate region in the sequel. Let ${U}_\mathcal{L} = (U_1,U_2,\ldots,U_L)^T$ be a length-$L$
auxiliary random vector such that
\begin{itemize}
\item $U_i = Y_i + Q_i, i=1,2,\ldots, L$, where
$Q_i\sim\mathcal{N}(0,\sigma_{Q_i}^2)$, and all $Q_i$'s are
independent of each other and of all $Y_i$'s,
\item ${U}_\mathcal{L}$
satisfies $E\Big\{\big(Y_i-E(Y_i|{U}_\mathcal{L})\big)^2\Big\}\le
D_i$ for all $i=1,2,\ldots, L$,
\end{itemize}
and define
$\mathcal{U}({\Sigma}_{{Y}_\mathcal{L}},{D}_\mathcal{L})$
as the set of all auxiliary random vectors ${U}_\mathcal{L}$ that
satisfy the above conditions. Then the following
lemma gives the BT inner rate region, the proof can be found in
\cite{Berger77,Tung77}.

\begin{lemma}
\label{lemmaBT} Define \begin{eqnarray}
{\mathcal{R}}^{BT}_{{\Sigma}_{{Y}_\mathcal{L}}}({D}_\mathcal{L})
&=&\bigcup_{{U}_\mathcal{L}\hspace{0.01in}\in\hspace{0.01in}\mathcal{U}({\Sigma}_{{Y}_\mathcal{L}},{D}_\mathcal{L})}
\Big\{(R_1,R_2,\ldots,R_L)^T: \sum_{i\in \mathcal{A}} R_i \ge
I({Y}_{\mathcal{A}};{U}_{\mathcal{A}}|{U}_{\mathcal{L}-
\mathcal{A}})\Big\},
\end{eqnarray}
then
\begin{eqnarray}
{\mathcal{R}}^{BT}_{{\Sigma}_{{Y}_\mathcal{L}}}({D}_\mathcal{L})
&\subseteq&
{\mathcal{R}}^{}_{{\Sigma}_{{Y}_\mathcal{L}}}({D}_\mathcal{L}).\label{GBTrateregion}
\end{eqnarray}
In particular, the {\it BT minimum sum-rate}
\begin{eqnarray}
R_{sum}^{BT}(\Sigma_{{Y}_\mathcal{L}},{D}_\mathcal{L})
&=& \inf\{~\sum_{i=1}^L R_i: (R_1,R_2, \ldots, R_L)^T \in
{\mathcal{R}}_{{\Sigma}_{{Y}_\mathcal{L}}}^{BT}({D}_\mathcal{L})\}\nonumber\\
&=&
\inf_{ \Sigma_{Q_\mathcal{L}}\in\bblambda_L:\Big[(\Sigma_{Y_\mathcal{L}}^{-1}
+\Sigma_{Q_\mathcal{L}}^{-1})^{-1}\Big]_{j,j}\le D_j,
~\forall j\in\mathcal{L}}\hspace{-0.2in}\frac{1}{2}
\log_2\Big[\frac{|\Sigma_{Y_\mathcal{L}}|}{|(\Sigma_{Y_\mathcal{L}}^{-1}
+\Sigma_{Q_\mathcal{L}}^{-1})^{-1}|}\Big]
\end{eqnarray}
satisfies
\begin{eqnarray}
R_{sum}(\Sigma_{Y_\mathcal{L}},D_\mathcal{L})
&\le& R_{sum}^{BT}(\Sigma_{Y_\mathcal{L}},D_\mathcal{L}),
\end{eqnarray}
where $\bblambda_L$ denotes the set of all $L\times L$ positive definite (p.d.) diagonal matrices.
\end{lemma}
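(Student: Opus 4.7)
The plan is to separate the lemma into two parts: (i) the achievability of ${\mathcal{R}}^{BT}_{{\Sigma}_{{Y}_\mathcal{L}}}({D}_\mathcal{L})$, and (ii) the evaluation of $R_{sum}^{BT}$ as the stated determinantal minimization. Part (i) is a direct specialization of the classical Berger--Tung inner bound \cite{Berger77,Tung77}: fix any ${U}_\mathcal{L}\in\mathcal{U}({\Sigma}_{{Y}_\mathcal{L}},{D}_\mathcal{L})$, use standard random binning with long i.i.d.\ codebooks $\{\boldsymbol{U}_j\}$ generated from the marginal of $U_j$, and apply joint typicality decoding. The usual analysis yields that any rate tuple satisfying $\sum_{i\in\mathcal{A}} R_i \ge I(Y_\mathcal{A};U_\mathcal{A}|U_{\mathcal{L}-\mathcal{A}})$ for every $\mathcal{A}\subseteq\mathcal{L}$ is achievable, while the Gaussian MMSE estimator $\hat{{Y}}_j=E(Y_j|{U}_\mathcal{L})$ attains distortion $\le D_j$ by the defining property of $\mathcal{U}(\cdot)$.

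For part (ii), I would specialize to $\mathcal{A}=\mathcal{L}$ in the sum-rate constraint, which is the binding one for the minimum sum-rate, and then evaluate the Gaussian mutual information. Because $U_i=Y_i+Q_i$ with $Q_\mathcal{L}\sim\mathcal{N}(0,\Sigma_{Q_\mathcal{L}})$ independent of $Y_\mathcal{L}$ and $\Sigma_{Q_\mathcal{L}}$ diagonal, one has
\begin{eqnarray}
I(Y_\mathcal{L};U_\mathcal{L}) &=& h(U_\mathcal{L})-h(U_\mathcal{L}|Y_\mathcal{L}) \;=\; \tfrac{1}{2}\log_2\frac{|\Sigma_{Y_\mathcal{L}}+\Sigma_{Q_\mathcal{L}}|}{|\Sigma_{Q_\mathcal{L}}|}.
\end{eqnarray}
Via the Woodbury identity $(\Sigma_{Y_\mathcal{L}}^{-1}+\Sigma_{Q_\mathcal{L}}^{-1})^{-1}=\Sigma_{Q_\mathcal{L}}(\Sigma_{Y_\mathcal{L}}+\Sigma_{Q_\mathcal{L}})^{-1}\Sigma_{Y_\mathcal{L}}$, taking determinants and rearranging shows this equals $\tfrac{1}{2}\log_2\big[|\Sigma_{Y_\mathcal{L}}|/|(\Sigma_{Y_\mathcal{L}}^{-1}+\Sigma_{Q_\mathcal{L}}^{-1})^{-1}|\big]$, matching the claimed expression.

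Next I would rewrite the distortion constraints. Since $(Y_\mathcal{L},U_\mathcal{L})$ is jointly Gaussian, the conditional covariance $\mathrm{cov}(Y_\mathcal{L}|U_\mathcal{L})$ is deterministic and, using the independent additive-noise structure together with a standard Gaussian inverse-covariance identity, equals $(\Sigma_{Y_\mathcal{L}}^{-1}+\Sigma_{Q_\mathcal{L}}^{-1})^{-1}$. Thus $E\{(Y_j-E(Y_j|U_\mathcal{L}))^2\}\le D_j$ is exactly $\big[(\Sigma_{Y_\mathcal{L}}^{-1}+\Sigma_{Q_\mathcal{L}}^{-1})^{-1}\big]_{j,j}\le D_j$. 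Since varying ${U}_\mathcal{L}\in\mathcal{U}({\Sigma}_{{Y}_\mathcal{L}},{D}_\mathcal{L})$ is equivalent to varying $\Sigma_{Q_\mathcal{L}}$ over $\bblambda_L$, infimizing the evaluated mutual information over this parameter set yields the stated formula for $R_{sum}^{BT}$. The inequality $R_{sum}\le R_{sum}^{BT}$ is then inherited from ${\mathcal{R}}^{BT}\subseteq {\mathcal{R}}$.

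The substantive work here is bookkeeping rather than a single hard step: the random-coding argument is standard, so the main care points are verifying the Woodbury-based determinantal identity, and justifying that allowing $\Sigma_{Q_\mathcal{L}}$ to be only positive definite and diagonal (not arbitrary positive semidefinite) is precisely what the independence restriction on the $Q_i$'s forces. I would also ensure that the infimum may be taken over p.d.\ (rather than p.s.d.)\ $\Sigma_{Q_\mathcal{L}}$ without loss of optimality by a continuity argument, since sending any $\sigma_{Q_i}^2\to 0$ pushes the sum-rate to infinity whenever $\Sigma_{Y_\mathcal{L}}$ is nonsingular.
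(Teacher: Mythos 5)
Your proposal is correct and follows essentially the same route as the paper, which simply defers this lemma to the classical Berger--Tung achievability argument in \cite{Berger77,Tung77} and the standard Gaussian evaluation. The random-binning achievability, the determinantal identity via $(\Sigma_{Y_\mathcal{L}}^{-1}+\Sigma_{Q_\mathcal{L}}^{-1})^{-1}=\Sigma_{Q_\mathcal{L}}(\Sigma_{Y_\mathcal{L}}+\Sigma_{Q_\mathcal{L}})^{-1}\Sigma_{Y_\mathcal{L}}$, and the identification of $\mathrm{cov}(Y_\mathcal{L}|U_\mathcal{L})$ with $(\Sigma_{Y_\mathcal{L}}^{-1}+\Sigma_{Q_\mathcal{L}}^{-1})^{-1}$ are all exactly the intended bookkeeping, and your closing remarks on the p.d.\ restriction and the binding $\mathcal{A}=\mathcal{L}$ constraint are sound.
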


For example, the BT rate region for the quadratic Gaussian two-terminal source coding problem with $\Sigma_{Y_\mathcal{L}}=\left[\begin{array}{cc}\sigma_{Y_1}^2&\rho\sigma_{Y_1}\sigma_{Y_2}\\\rho\sigma_{Y_1}\sigma_{Y_2}&\sigma_{Y_2}^2\end{array}\right]$ is given by
\begin{eqnarray}
{\mathcal{R}}^{BT}_{{\Sigma}_{{Y}_\mathcal{L}}}({D}_\mathcal{L})=
\hat{\mathcal{R}}^{BT}_1(D_1,D_2)\cap
\hat{\mathcal{R}}^{BT}_2(D_1,D_2) \cap
\hat{\mathcal{R}}^{BT}_{12}(D_1,D_2),
\end{eqnarray}
where
\begin{eqnarray}
\hat{\mathcal{R}}^{BT}_i(D_1,D_2) &=&
\{(R_1,R_2):R_i\ge\frac{1}{2}\log^+[(1-\rho^2+\rho^2 2^{-2R_j})
\frac{\sigma_{Y_i}^2}{D_i}]\}, i,j=1,2, i\ne j, \label{DMT-single}\\
\hat{\mathcal{R}}^{BT}_{12}(D_1,D_2) &=& \{(R_1,R_2):
R_1+R_2\ge\frac{1}{2}\log^+[(1-\rho^2)
\frac{\beta_{max}\sigma_{Y_1}^2 \sigma_{Y_2}^2}{2D_1 D_2}]\},
\label{DMTR}
\end{eqnarray}
with $\beta_{max}=1+\sqrt{1+\frac{4\rho^2 D_1 D_2}{(1-\rho^2)^2
\sigma_{Y_1}^2 \sigma_{Y_2}^2}}$, and $\log^+ x=\max\{\log x, 0\}$.
The BT rate region with $\sigma_{Y_1}^2=\sigma_{Y_2}^2=1$, $\rho=0.9$, and $D_\mathcal{L}=(0.1,0.1)^T$ is shown in Fig. \ref{fig:region}, where $\partial\hat{\mathcal{R}}^{BT}_i(D_1,D_2)$ and $\partial\hat{\mathcal{R}}^{BT}_{12}(D_1,D_2)$ are the boundaries of $\hat{\mathcal{R}}^{BT}_i(D_1,D_2)$ and $\hat{\mathcal{R}}^{BT}_{12}(D_1,D_2)$, respectively.

\begin{figure}[tbh]
\centerline{\epsfxsize=3.8in
\epsfbox{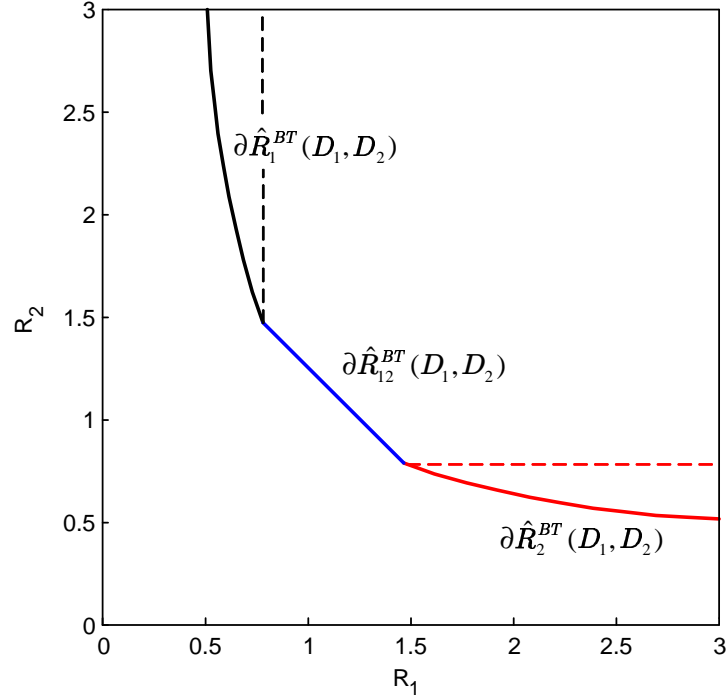}}
\caption{An example of the BT rate region for the quadratic Gaussian two-terminal source coding problem.}
\label{fig:region}
\end{figure}

\subsection{Existing results on sum-rate tightness}

Wagner {\em et al.} \cite{Wagner05} proved that for the two-terminal
case (with $L=2$), the BT minimum sum-rate is equal to the MT minimum sum-rate,
i.e.,
\begin{eqnarray}
R_{sum}(\Sigma_{Y_\mathcal{L}},D_\mathcal{L})
&=& R_{sum}^{BT}(\Sigma_{Y_\mathcal{L}},D_\mathcal{L})\label{Wagnertightcases}
\end{eqnarray}
for any $2\times 2$ positive semidefinite (p.s.d.) symmetric matrix
$\Sigma_{Y_\mathcal{L}}$ and length-$2$ positive vector $D_\mathcal{L}$.
They also showed tightness of the BT sum-rate bound for the positive
symmetric case, i.e., (\ref{Wagnertightcases}) holds for any $L\times L$
{\em positive-symmetric} matrices of the form
\begin{eqnarray}
\Sigma_{Y_\mathcal{L}}=\mathcal{S}_L(a,b)\stackrel{\Delta}{=}
\left[\begin{array}{ccccc}
a&b&b&...&b\\
b&a&b&...&b\\
...&...&...&...&...\\
b&b&b&...&a\\\end{array}\right],\label{symmetricmat}
\end{eqnarray}
for some $a>b>0$ and any $D_\mathcal{L}=(D,D,...,D)^T$
for some $D>0$.

The most general cases of quadratic Gaussian MT source coding problem with tight sum-rate are provided by Wang {\em et al.} \cite{WangISIT09}. Their proof contains four major steps.
\begin{itemize}
\item First, the $L$ MT sources $Y_\mathcal{L}$ are connected to $L$ remote sources $X_\mathcal{L}$ such that
\begin{eqnarray}
Y_\mathcal{L}&=&X_\mathcal{L}+N_\mathcal{L}
\end{eqnarray}
with $N_\mathcal{L}$ being a zero-mean Gaussian vector independent of $X_\mathcal{L}$ with a {\em diagonal} covariance matrix \begin{eqnarray}
\Sigma_{N_\mathcal{L}}&=&\left[\begin{array}{cccc}\sigma_{N_1}^2&0&...&0\\0&\sigma_{N_2}^2&...&0\\...&...&...&...\\0&0&...&\sigma_{N_L}^2\end{array}\right].\label{SigmaND}
\end{eqnarray}
Then they use the Markov chain $\boldsymbol{X}_\mathcal{L}\rightarrow \boldsymbol{Y}_\mathcal{L}\rightarrow W_\mathcal{L}$ to obtain an {\em estimation-theoretic result} that $\mathrm{cov}({\boldsymbol{Y}_\mathcal{L}|\boldsymbol{X}_\mathcal{L},W_\mathcal{L}})$ must also be diagonal.

\item Exploit the {\em semidefinite partial order} of the distortion matrices, which is due to the fact that a linear {\em minimum mean squared error} (MMSE) estimator cannot outperform its optimal MMSE counterpart, to show that
 \begin{eqnarray}
 \mathrm{cov}({\boldsymbol{Y}_\mathcal{L}|\boldsymbol{X}_\mathcal{L},W_\mathcal{L}})~\preceq~ \Big(\big(\mathrm{cov}({\boldsymbol{Y}_\mathcal{L}|W_\mathcal{L}})\big)^{-1}+\Sigma_{N_\mathcal{L}}^{-1}-\Sigma_{Y_\mathcal{L}}^{-1}\Big)^{-1}.\nonumber 
 \end{eqnarray}

\item A lower bound on the MT minimum sum-rate $R_{\boldsymbol\Sigma_{Y_\mathcal{L}}}^{}(D_\mathcal{L})$ is derived by exploiting the diagonal structure of $\mathrm{cov}({\boldsymbol{Y}_\mathcal{L}|\boldsymbol{X}_\mathcal{L},W_\mathcal{L}})$.

\item Form a convex optimization problem that minimizes the above lower bound over $\boldsymbol{D}\stackrel{\Delta}{=}\mathrm{cov}({\boldsymbol{Y}_\mathcal{L}|W_\mathcal{L}})$ and $\gamma_\mathcal{L}\stackrel{\Delta}{=}\mathrm{diag}\Big(\mathrm{cov}({\boldsymbol{Y}_\mathcal{L}|\boldsymbol{X}_\mathcal{L},W_\mathcal{L}})\Big)$, and establish a sufficient condition for the $\boldsymbol{D}$ and $\gamma_\mathcal{L}$ that correspond to the optimal BT scheme to satisfy the the KKT condition of the optimization problem.
\end{itemize}

Specifically, let $\mathscr{P}^\succeq_L$ be the set of $L\times L$ p.s.d. matrices and $\mathbbm{d}$ be the set of diagonal matrices.
Define $\mathscr{D}(D_\mathcal{L},\Sigma_{Y_\mathcal{L}})$ as the set of all BT-achievable distortion matrices that satisfy the distortion constraints, and $\mathscr{N}(\Sigma_{Y_\mathcal{L}})$ as the set of all possible diagonal covariance matrices $\Sigma_{N_\mathcal{L}}$, i.e.,
\begin{eqnarray}
\mathscr{D}(D_\mathcal{L},\Sigma_{Y_\mathcal{L}})&\stackrel{\Delta}
{=}&\Big\{\boldsymbol{D}\in\mathbb{R}^{L\times L}: [\boldsymbol{D}]_{j,j}
=D_j,\forall j\in\mathcal{L}, \mathrm{~and~} \boldsymbol{D}^{-1}
-\Sigma_{Y_\mathcal{L}}^{-1}\in\mathscr{P}^\succeq\cap\mathbbm{d}\Big\},\\
\mathscr{N}(\Sigma_{Y_\mathcal{L}})&\stackrel{\Delta}
{=}&\left\{\Sigma\in\mathscr{P}^\succeq\cap\mathbbm{d}: \Sigma\succeq\Sigma_{Y_\mathcal{L}}\right\}.\label{DNdef}
\end{eqnarray}
Wang {\em et al.}'s result \cite{WangISIT09} is summarized in the following
theorem.

\begin{theorem}[\cite{WangISIT09}]\label{thmWang}
If for some $\boldsymbol{D}\in\mathscr{D}(D_\mathcal{L},\Sigma_{Y_\mathcal{L}})$
and $\Sigma_{N_\mathcal{L}}\in\mathscr{N}(\Sigma_{Y_\mathcal{L}})$,
there exists a diagonal matrix $\Pi=\mathrm{diag}(\pi_1,...,\pi_L)$
such that
\begin{eqnarray}
\boldsymbol{D}\Big(\Pi-\boldsymbol{D}^{-1}+\boldsymbol{D}^{-1}(\boldsymbol{D}^{-1}+\Sigma_{N_\mathcal{L}}^{-1}-\Sigma_{Y_\mathcal{L}}^{-1})^{-1}\boldsymbol{D}^{-1}\Big)\boldsymbol{D}
\end{eqnarray} is a p.s.d. matrix with the same diagonal
elements as those of $(\boldsymbol{D}^{-1}+\Sigma_{N_\mathcal{L}}^{-1}-\Sigma_{Y_\mathcal{L}}^{-1})^{-1}$, then the BT sum-rate bound is tight, i.e.,
\begin{eqnarray}
R_{sum}(\Sigma_{Y_\mathcal{L}},D_\mathcal{L})
&=&R_{sum}^{BT}(\Sigma_{Y_\mathcal{L}},D_\mathcal{L}).
\end{eqnarray}
\end{theorem}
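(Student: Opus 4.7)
The goal is the matching converse $R_{sum}(\Sigma_{Y_\mathcal{L}},D_\mathcal{L})\ge R_{sum}^{BT}(\Sigma_{Y_\mathcal{L}},D_\mathcal{L})$; combined with Lemma \ref{lemmaBT}, tightness follows. The plan is to follow verbatim the four-step outline sketched just above the theorem. For the hypothesized $\Sigma_{N_\mathcal{L}}\in\mathscr{N}(\Sigma_{Y_\mathcal{L}})$, I would first define the Gaussian remote source $X_\mathcal{L}=Y_\mathcal{L}-N_\mathcal{L}$ with $N_\mathcal{L}\sim\mathcal{N}(0,\Sigma_{N_\mathcal{L}})$ independent of $X_\mathcal{L}$; the ordering $\Sigma_{N_\mathcal{L}}\preceq\Sigma_{Y_\mathcal{L}}$ is precisely what makes this decomposition valid. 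Since $\Sigma_{N_\mathcal{L}}$ is diagonal, the coordinates of $\boldsymbol{Y}_\mathcal{L}$ are conditionally independent given $\boldsymbol{X}_\mathcal{L}$, and a standard estimation-theoretic factoring argument through the Markov chain $\boldsymbol{X}_\mathcal{L}\to\boldsymbol{Y}_\mathcal{L}\to W_\mathcal{L}$ shows that $\mathrm{cov}(\boldsymbol{Y}_\mathcal{L}|\boldsymbol{X}_\mathcal{L},W_\mathcal{L})$ is also diagonal.

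I would then invoke the p.s.d.\ dominance of the MMSE over the linear MMSE to deduce
\[
\mathrm{cov}(\boldsymbol{Y}_\mathcal{L}|\boldsymbol{X}_\mathcal{L},W_\mathcal{L})\;\preceq\;\Big((\mathrm{cov}(\boldsymbol{Y}_\mathcal{L}|W_\mathcal{L}))^{-1}+\Sigma_{N_\mathcal{L}}^{-1}-\Sigma_{Y_\mathcal{L}}^{-1}\Big)^{-1},
\]
and lower-bound the sum-rate starting from $n\sum_j R_j\ge I(\boldsymbol{Y}_\mathcal{L};W_\mathcal{L})\ge I(\boldsymbol{X}_\mathcal{L};W_\mathcal{L})$. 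Expanding the latter as $h(\boldsymbol{X}_\mathcal{L})-h(\boldsymbol{X}_\mathcal{L}|W_\mathcal{L})$ and using the Markov structure to decompose the conditional entropy into a $\boldsymbol{Y}$-term (upper-bounded by a log-det of the right-hand side above) plus a $\boldsymbol{Y}|\boldsymbol{X}$-term (a coordinate-wise sum by diagonality) yields a lower bound on $\sum_j R_j$ depending only on $\boldsymbol{D}\stackrel{\Delta}{=}\mathrm{cov}(\boldsymbol{Y}_\mathcal{L}|W_\mathcal{L})$, the diagonal $\gamma_\mathcal{L}\stackrel{\Delta}{=}\mathrm{diag}(\mathrm{cov}(\boldsymbol{Y}_\mathcal{L}|\boldsymbol{X}_\mathcal{L},W_\mathcal{L}))$, and the parameter $\Sigma_{N_\mathcal{L}}$.

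The final step is to relax this into a convex program in $(\boldsymbol{D},\gamma_\mathcal{L})$ subject to $[\boldsymbol{D}]_{jj}\le D_j$, $\gamma_j\ge 0$, and the semidefinite ordering of the previous step, and then verify via KKT that the BT-optimal choice minimizes the relaxation. Introducing the diagonal multiplier $\Pi=\mathrm{diag}(\pi_j)$ for the distortion inequalities and differentiating the Lagrangian using $d\log|\boldsymbol{D}|=\mathrm{tr}(\boldsymbol{D}^{-1}d\boldsymbol{D})$, stationarity in $\gamma_\mathcal{L}$ pins down the diagonal of $(\boldsymbol{D}^{-1}+\Sigma_{N_\mathcal{L}}^{-1}-\Sigma_{Y_\mathcal{L}}^{-1})^{-1}$, while stationarity in $\boldsymbol{D}$, after conjugating by $\boldsymbol{D}$ on both sides, collapses to exactly the p.s.d.\ condition stated in the theorem. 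Plugging the postulated $(\boldsymbol{D},\Sigma_{N_\mathcal{L}},\Pi)$ into these equations certifies optimality and forces the lower bound to equal $R_{sum}^{BT}$. The main obstacle I anticipate is the matrix-calculus bookkeeping in this last step: handling the Schur-complement coupling between $\gamma_\mathcal{L}$ and $\boldsymbol{D}$, identifying which constraints are active so that the diagonal-matching requirement on $\Pi$ drops out naturally as complementary slackness, and verifying convexity of the relaxation carefully enough that KKT is sufficient rather than merely necessary. The role of $\Sigma_{N_\mathcal{L}}$ as a free tuning parameter is the conceptual crux: the hypothesis essentially asserts that it can be chosen together with $\boldsymbol{D}$ so that the BT optimizer becomes a KKT point of the relaxation.
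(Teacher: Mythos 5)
Your proposal is correct and follows essentially the same route as the paper: the paper does not reprove this cited result but sketches exactly your four steps (remote sources $Y_\mathcal{L}=X_\mathcal{L}+N_\mathcal{L}$ with diagonal $\Sigma_{N_\mathcal{L}}$, preservation of the diagonal structure of $\mathrm{cov}(\boldsymbol{Y}_\mathcal{L}|\boldsymbol{X}_\mathcal{L},W_\mathcal{L})$, the semidefinite ordering, and the convex relaxation whose KKT stationarity in $\gamma_\mathcal{L}$ and $\boldsymbol{D}$ yields the diagonal-matching and p.s.d.\ conditions), and its own Theorem~\ref{thmmain} is proved by the block-diagonal generalization of this very argument. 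The only nitpick is that your entropy decomposition should retain both the $I(\boldsymbol{X}_\mathcal{L};W_\mathcal{L})$ term and the coordinate-wise $\sum_i I(\boldsymbol{Y}_i;W_i|\boldsymbol{X}_\mathcal{L})$ terms (rather than lower-bounding by the former alone), which your subsequent description of the bound as a function of both $\boldsymbol{D}$ and $\gamma_\mathcal{L}$ indicates you intend.
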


Using a different technique, sum-rate tightness for a special
{\em bi-eigen equal-variance with equal distortion} class of MT problems was proved by Yang
and Xiong \cite{Allerton:09}. That is, (\ref{Wagnertightcases}) holds
for any $\Sigma_{Y_\mathcal{L}}\in\mathcal{B}$ and
$D_\mathcal{L}=(D,D,...,D)^T$ for some $D>0$,
where $\mathcal{B}$ denotes the set of all $L\times L$ p.s.d. matrices with two distinct eigenvalues and
equal diagonal elements.

\section{The two-terminal source coding problem with a matrix-distortion constraint}

In order to go beyond Wang {\em et al.}'s sufficient condition \cite{WangISIT09}, which assumes independent observation noises as seen in (\ref{SigmaND}) and is derived using classical Gaussian rate-distortion function, in this paper we allow $2\times 2$ block-correlation among the observation noises. Consequently, the derivation of the new lower bound requires us to consider a variant of the two-terminal source coding problem where the two individual distortion constraints are replaced by a $2\times 2$ matrix-distortion constraint. Although the original quadratic Gaussian two-terminal source coding problem is completely solved \cite{Oohama97,Wagner05}, due to the different distortion constraints, the exact achievable rate region for the matrix-distortion constrained two-terminal problem is still unknown. In this section, we derive a lower bound on the sum-rate of the matrix-distortion constrained two-terminal problem, which serves as the key to our main results given in the next section.

Assume that length-$n$ blocks of Gaussian sources $\boldsymbol{Y}_1$ and $\boldsymbol{Y}_2$ are separated compressed at the two encoders, while the decoder tries to reconstruct $\boldsymbol{Y}_\mathcal{L}$ such that
\begin{eqnarray}
\limsup_{n\rightarrow\infty}\frac{1}{n}\sum_{i=1}^nE\Big[({Y}_{\mathcal{L},i}-\hat{{Y}}_{\mathcal{L},i})({Y}_{\mathcal{L},i}-\hat{{Y}}_{\mathcal{L},i})^T\Big] &\preceq&\boldsymbol{D}~=~ \left[\begin{array}{cc}D_{1}&\theta\sqrt{D_1D_2}\\\theta\sqrt{D_1D_2}&D_2\end{array}\right],
\end{eqnarray}
where $\boldsymbol{A}\preceq\boldsymbol{B}$ means $\boldsymbol{B}-\boldsymbol{A}$ is a p.s.d. matrix, and denote the minimum sum-rate of such a problem as $R_{sum}(\Sigma_{Y_\mathcal{L}},\boldsymbol{D})$. Compared to the original quadratic Gaussian two-terminal source coding problem with individual distortion constraints, we have
\begin{eqnarray}
R_{sum}(\Sigma_{Y_\mathcal{L}},(D_1,D_2)^T)&=&\inf_{\theta\in[-1,1]}R_{sum}\left(\Sigma_{Y_\mathcal{L}},\left[\begin{array}{cc}D_1&\theta\sqrt{D_1D_2}\\
\theta\sqrt{D_1D_2}&D_2\end{array}\right]\right).\label{dDrelation}
\end{eqnarray}

Although Wagner {\em et al.}'s paper \cite{Wagner05} focused on the original quadratic Gaussian two-terminal source coding problem, their converse proof has already explored the relationship in (\ref{dDrelation}) to some extent, and provided a composite lower bound on the sum-rate of the two-terminal source coding problem with matrix-distortion constraint, namely,
\begin{eqnarray}
R_{sum}(\Sigma_{Y_\mathcal{L}},\boldsymbol{D})&\ge&\max\Big\{R_{coop}(\Sigma_{Y_\mathcal{L}},\boldsymbol{D}),R_{\mu}(\Sigma_{Y_\mathcal{L}},\boldsymbol{D})\Big\},\label{Wagner2mat}
\end{eqnarray}
where
\begin{eqnarray}
R_{coop}(\Sigma_{Y_\mathcal{L}},\boldsymbol{D})&=&\frac{1}{2}\log\frac{|\Sigma_{Y_\mathcal{L}}|}{|\boldsymbol{D}|},\nonumber\\
R_{\mu}(\Sigma_{Y_\mathcal{L}},\boldsymbol{D})&=&R_{\Sigma_{Y_\mathcal{L}},\mu}({\tilde{\mu}}^T\boldsymbol{D}\tilde{\mu}),\nonumber
\end{eqnarray}
$\tilde{\mu}=(\sqrt{D_2},\sqrt{D_1})^T$, and $R_{\Sigma_{Y_\mathcal{L}},\mu}(d)$ denotes the minimum sum-rate of the $\mu$-sum problem with target distortion $d$.

We now give the exact form of a new lower bound that is inspired by Wang {\em et al.}'s work \cite{WangISIT09} and partially tighter than Wagner {\em et al.}'s bound in (\ref{Wagner2mat}). Note that there is no loss in assuming that the correlation coefficient $\rho$ between $Y_1$ and $Y_2$ is non-negative.

\begin{lemma}\label{lemma2term}
For any pair of $2\times 2$ matrices
\begin{eqnarray}
\Sigma_{Y_\mathcal{L}}&=&\left[\begin{array}{cc}\sigma_{Y_1}^2&\rho\sigma_{Y_1}\sigma_{Y_2}\\\rho\sigma_{Y_1}\sigma_{Y_2}&\sigma_{Y_2}^2\end{array}\right],\label{SigmaYL2x2}\\
\boldsymbol{D}&=&\left[\begin{array}{cc}D_{1}&\theta\sqrt{D_1D_2}\\\theta\sqrt{D_1D_2}&D_2\end{array}\right]\label{D2x2}
\end{eqnarray}
such that
\begin{eqnarray}
\rho\ge 0, \mathrm{~and~}\boldsymbol{D}\preceq\Sigma_{Y_\mathcal{L}},\label{SigmaD2x2cond}
\end{eqnarray}
it holds that
\begin{eqnarray}
R_{sum}(\Sigma_{Y_\mathcal{L}},\boldsymbol{D})
&\ge&\underline{R}_{sum}(\Sigma_{Y_\mathcal{L}},\boldsymbol{D})\nonumber\\
&\stackrel{\Delta}{=}&\max\Big\{R_{lb}(\Sigma_{Y_\mathcal{L}},\boldsymbol{D}), R_{\mu}(\Sigma_{Y_\mathcal{L}},\boldsymbol{D})\Big\}\nonumber\\
&=&\left\{\begin{array}{cc}R_{\mu}(\Sigma_{Y_\mathcal{L}},\boldsymbol{D})&\theta\le\tilde{\theta}\\ R_{lb}(\Sigma_{Y_\mathcal{L}},\boldsymbol{D})&\theta>\tilde{\theta}\end{array}\right.,\label{Lemma1statement}
\end{eqnarray}
where
\begin{eqnarray}
R_{\mu}(\Sigma_{Y_\mathcal{L}},\boldsymbol{D})&=&\frac{1}{2}\log\frac{v_1v_2(v_1v_2(1-\rho^2)+2\rho(1+\theta))}{(1+\theta)^2}\nonumber\\
R_{lb}(\Sigma_{Y_\mathcal{L}},\boldsymbol{D})&=&\frac{1}{2}\log\frac{v_{1}^3v_{2}^3(1-\rho^2)^2}{(1-\theta)^2(v_1v_2(1-\rho^2)+2\rho(1+\theta))},
\end{eqnarray}
with $v_1=\frac{\sigma_{Y_1}}{\sqrt{D_1}}$, $v_2=\frac{\sigma_{Y_2}}{\sqrt{D_2}}$, and
\begin{eqnarray}
\tilde{\theta}&\stackrel{\Delta}{=}&\frac{\sqrt{v_1^2v_2^2(1-\rho^2)^2+4\rho^2}-v_1v_2(1-\rho^2)}{2\rho}.\label{thetastardef}
\end{eqnarray}
Particularly, if $\theta\le\tilde{\theta}$, the lower bound is tight, i.e., $R_{sum}(\Sigma_{Y_\mathcal{L}},\boldsymbol{D})=\underline{R}_{sum}(\Sigma_{Y_\mathcal{L}},\boldsymbol{D})$.
\end{lemma}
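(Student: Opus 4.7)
The plan is to establish the two inequalities $R_{sum}\ge R_\mu$ and $R_{sum}\ge R_{lb}$ separately, and then show that when $\theta\le\tilde\theta$ the larger of the two is attainable by a Berger--Tung scheme with independent Gaussian test noises.

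\textbf{The $\mu$-sum bound.} The inequality $R_{sum}(\Sigma_{Y_\mathcal{L}},\boldsymbol{D})\ge R_\mu(\Sigma_{Y_\mathcal{L}},\boldsymbol{D})$ follows directly from the scalar $\mu$-sum construction of Wagner et al. If $\hat{Y}_\mathcal{L}$ satisfies the matrix distortion constraint $\preceq\boldsymbol{D}$, then for any deterministic $\mu\in\mathbb{R}^2$ the scalar estimator $\mu^T\hat{Y}_\mathcal{L}$ of $\mu^T Y_\mathcal{L}$ has mean squared error at most $\mu^T\boldsymbol{D}\mu$. Thus $R_{sum}\ge R_{\Sigma_{Y_\mathcal{L}},\mu}(\mu^T\boldsymbol{D}\mu)$, and specializing to $\tilde\mu=(\sqrt{D_2},\sqrt{D_1})^T$ gives $\tilde\mu^T\boldsymbol{D}\tilde\mu=2D_1D_2(1+\theta)$. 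Substituting this into the closed-form Gaussian $\mu$-sum rate from \cite{Wagner05} and simplifying in terms of $v_1,v_2,\rho$ recovers the stated $R_\mu$.

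\textbf{The remote-source bound.} The bound $R_{sum}\ge R_{lb}$ is the new ingredient and is proved by a two-dimensional specialization of Wang et al.'s remote-source technique. I would introduce Gaussian remote sources $X_\mathcal{L}$ via $Y_\mathcal{L}=X_\mathcal{L}+N_\mathcal{L}$ with $N_\mathcal{L}$ independent of $X_\mathcal{L}$ and whose covariance $\Sigma_{N_\mathcal{L}}$ is diagonal with $\Sigma_{N_\mathcal{L}}\succeq\Sigma_{Y_\mathcal{L}}$; such decompositions exist precisely because $\rho\ge 0$. Using the Markov chain $\boldsymbol{X}_\mathcal{L}\to\boldsymbol{Y}_\mathcal{L}\to W_\mathcal{L}$ together with the estimation-theoretic identity, $\mathrm{cov}(\boldsymbol{Y}_\mathcal{L}|\boldsymbol{X}_\mathcal{L},W_\mathcal{L})$ must be diagonal, and the semidefinite ordering then controls it via $\mathrm{cov}(\boldsymbol{Y}_\mathcal{L}|W_\mathcal{L})\preceq\boldsymbol{D}$. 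Decomposing $R_1+R_2$ through $I(\boldsymbol{Y}_\mathcal{L};W_\mathcal{L}|\boldsymbol{X}_\mathcal{L})+I(\boldsymbol{X}_\mathcal{L};W_\mathcal{L})$ and writing entropy differences as log-determinants, the matrix constraint enters the bound through $|\boldsymbol{D}|=D_1D_2(1-\theta^2)$, which produces the $(1-\theta)^2$ factor in $R_{lb}$. Optimizing the resulting expression over the two free noise variances $\sigma_{N_1}^2,\sigma_{N_2}^2$ in closed form yields $R_{lb}$. This optimization is the main technical obstacle and requires careful manipulation of $2\times 2$ determinant identities to reach the clean expression displayed in the lemma.

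\textbf{Tightness for $\theta\le\tilde\theta$.} In this regime $\underline{R}_{sum}=R_\mu$, so it suffices to exhibit a BT test channel $U_i=Y_i+Q_i$ with diagonal $\Sigma_{Q_\mathcal{L}}$ such that $(\Sigma_{Y_\mathcal{L}}^{-1}+\Sigma_{Q_\mathcal{L}}^{-1})^{-1}\preceq\boldsymbol{D}$ and whose sum-rate equals $R_\mu$. I would parametrize $\Sigma_{Q_\mathcal{L}}$ so that the conditional covariance lies on the boundary of the matrix constraint at the prescribed $\theta$ and then verify through Lemma \ref{lemmaBT} that the BT sum-rate coincides with $R_\mu$. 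The threshold $\tilde\theta$ in (\ref{thetastardef}) is exactly the largest $\theta$ for which such a diagonal $\Sigma_{Q_\mathcal{L}}$ exists while still lying on the BT sum-rate facet; beyond $\tilde\theta$ one of the individual-rate constraints (\ref{DMT-single}) becomes binding, which is the point at which $R_{lb}$ overtakes $R_\mu$.
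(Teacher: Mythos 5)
Your overall architecture (a $\mu$-sum bound, a separate remote-source bound, and an explicit BT construction for $\theta\le\tilde\theta$) resembles the paper's at a coarse level, and your treatment of $R_\mu$ via Wagner's composite bound with $\tilde\mu^T\boldsymbol{D}\tilde\mu=2D_1D_2(1+\theta)$ is fine. The genuine gap is in the part that carries all the novelty: the derivation of $R_{lb}$. You assert that the matrix constraint enters through $|\boldsymbol{D}|=D_1D_2(1-\theta^2)$ and that this "produces the $(1-\theta)^2$ factor in $R_{lb}$," but $|\boldsymbol{D}|$ contributes $(1-\theta)(1+\theta)$, not $(1-\theta)^2$; a bound driven by $|\boldsymbol{D}|$ is exactly Wagner's cooperative bound $R_{coop}=\frac{1}{2}\log\big(|\Sigma_{Y_\mathcal{L}}|/|\boldsymbol{D}|\big)$, which the paper shows can lie infinitely far below $R_{lb}$ as $\theta\to 1$. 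In the actual proof the $(1-\theta)^2$ arises from minimizing the product $\gamma_1\gamma_2$, with $\gamma_i=\frac{1}{n}\sum_j\mathrm{var}(Y_{i,j}\mid W_i,\boldsymbol{X})$, subject to the semidefinite constraint $\mathrm{diag}(\gamma_1,\gamma_2)\preceq(\boldsymbol{D}^{-1}+\Sigma_{\tilde{N}}^{-1}-\Sigma_{Y_\mathcal{L}}^{-1})^{-1}$: the KKT analysis gives $\gamma_1=\gamma_2=1-\theta$ when $\theta>\tilde\theta$ (constraint binding along the $(1,-1)$ direction) and $\gamma_1=\gamma_2=\frac{v_1v_2(1-\rho^2)(1+\theta)}{v_1v_2(1-\rho^2)+2\rho(1+\theta)}$ when $\theta\le\tilde\theta$, which is precisely where both formulas and the threshold $\tilde\theta$ come from. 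You defer exactly this computation as "the main technical obstacle," so the proposal does not establish $R_{lb}$, and the mechanism you do describe would only reproduce the weaker composite bound.

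Two further points. First, the paper does not use a two-dimensional remote source with diagonal noise covariance; it uses the single scalar hidden source $X=Y_1+Y_2+Z$ with $\sigma_Z^2=v_1v_2(1-\rho^2)/\rho$, chosen so that $Y_1$ and $Y_2$ are conditionally independent given $X$ --- this is where $\rho\ge 0$ is genuinely needed (so that $\sigma_Z^2\ge 0$), not for the existence of a diagonal decomposition. (Also, your inequality $\Sigma_{N_\mathcal{L}}\succeq\Sigma_{Y_\mathcal{L}}$ is reversed: $Y_\mathcal{L}=X_\mathcal{L}+N_\mathcal{L}$ with independence forces $\Sigma_{N_\mathcal{L}}\preceq\Sigma_{Y_\mathcal{L}}$.) With this single construction both branches of the max fall out of one optimization, so the $\mu$-sum bound need not be invoked separately. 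Second, your achievability sketch for $\theta\le\tilde\theta$ is in the right spirit but incomplete: one must exhibit the explicit test-channel variances $q_1,q_2$ and verify $\boldsymbol{D}-\tilde{\boldsymbol{D}}\succeq\boldsymbol{0}$, which is the step where the hypothesis $\theta\le\tilde\theta$ is actually used.
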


\begin{proof}
See Appendix A.
\end{proof}

Note that unlike the original two-terminal problem, the new lower bound $\underline{R}_{sum}(\Sigma_{Y_\mathcal{L}},\boldsymbol{D})$ does not always meet the BT upper bound, which is given by
\begin{eqnarray}
R_{sum}^{BT}(\Sigma_{Y_\mathcal{L}},\boldsymbol{D})&=&\max\Big\{R_{lb}(\Sigma_{Y_\mathcal{L}},\boldsymbol{D}), R_{\mu}(\Sigma_{Y_\mathcal{L}},\boldsymbol{D})\Big\}\nonumber\\
&=&\left\{\begin{array}{cc}R_{\mu}(\Sigma_{Y_\mathcal{L}},\boldsymbol{D})&\theta\le\tilde{\theta}\\ R_{ub}(\Sigma_{Y_\mathcal{L}},\boldsymbol{D})&\theta>\tilde{\theta}\end{array}\right.
\end{eqnarray}
with
\begin{eqnarray}
R_{ub}(\Sigma_{Y_\mathcal{L}},\boldsymbol{D})&=&\frac{1}{2}\log\frac{v_1v_2(v_1v_2(1-\rho^2)-2\rho(1-\theta))}{(1-\theta)^2}.
\end{eqnarray}
Obviously, if $\theta>\tilde{\theta}$, the two bounds do not coincide, and we can easily compute the gap between them as
\begin{eqnarray}
R^\Delta_{sum}(\Sigma_{Y_\mathcal{L}},\boldsymbol{D})&\stackrel{\Delta}{=}& \underline{R}_{sum}(\Sigma_{Y_\mathcal{L}},\boldsymbol{D})-R_{sum}^{BT}(\Sigma_{Y_\mathcal{L}},\boldsymbol{D})\nonumber\\&=& R_{ub}(\Sigma_{Y_\mathcal{L}},\boldsymbol{D})-R_{lb}(\Sigma_{Y_\mathcal{L}},\boldsymbol{D})\nonumber\\
&=&\frac{1}{2}\log\frac{(v_1v_2(1-\rho^2)-2\rho(1-\theta))(v_1v_2(1-\rho^2)+2\rho(1+\theta))}{v_1^2v_2^2(1-\rho^2)^2}.
\end{eqnarray}
To evaluate the maximum value of $R^\Delta_{sum}(\Sigma_{Y_\mathcal{L}},\boldsymbol{D})$, we compute the feasible range of $\theta$, which is constrained by the assumption $\boldsymbol{D}\preceq\Sigma_{Y_\mathcal{L}}$, and given by $\theta\in(\underline{\theta},\overline{\theta})$ with
\begin{eqnarray}
\underline{\theta}&=&\max\left\{-1,-\sqrt{(v_1^2-1)(v_2^2-1)}-\rho v_1v_2\right\},\nonumber\\
\overline{\theta}&=&\min\left\{1,\sqrt{(v_1^2-1)(v_2^2-1)}+\rho v_1v_2\right\}.\label{overthetadef}
\end{eqnarray}
Now due to the assumption that $\rho\ge 0$, $R^\Delta_{sum}(\Sigma_{Y_\mathcal{L}},\boldsymbol{D})$ is monotone increasing in $\theta$ in the range $(\tilde{\theta},\overline{\theta})$. Hence
\begin{eqnarray}
\sup_{\theta\in(\tilde{\theta},\overline{\theta})}R^\Delta_{sum}(\Sigma_{Y_\mathcal{L}},\boldsymbol{D})&=&\lim_{\theta\rightarrow \overline{\theta}}R^\Delta_{sum}(\Sigma_{Y_\mathcal{L}},\boldsymbol{D})\nonumber\\ &\le&\lim_{\theta\rightarrow 1}R^\Delta_{sum}(\Sigma_{Y_\mathcal{L}},\boldsymbol{D})\nonumber\\&=& \frac{1}{2}\log\left(1+\frac{4\rho}{v_1v_2(1-\rho^2)}\right).
\end{eqnarray}
We thus conclude that although the lower bound $\underline{R}_{sum}(\Sigma_{Y_\mathcal{L}},\boldsymbol{D})$ is not always tight, the gap to the upper bound $R_{sum}^{BT}(\Sigma_{Y_\mathcal{L}},\boldsymbol{D})$ cannot exceed a certain threshold that depends only on $v_1$, $v_2$, and $\rho$.

On the other hand, if we calculate the improvement from Wagner {\em et al.}'s lower bound (\ref{Wagner2mat}) to our new one $\underline{R}_{sum}(\Sigma_{Y_\mathcal{L}},\boldsymbol{D})$ with $\theta\in(\tilde{\theta},\overline{\theta})$, we obtain
\begin{eqnarray}
&&\underline{R}_{sum}(\Sigma_{Y_\mathcal{L}},\boldsymbol{D})-\max\Big\{R_{coop}(\Sigma_{Y_\mathcal{L}},\boldsymbol{D}),R_{\mu}(\Sigma_{Y_\mathcal{L}},\boldsymbol{D})\Big\}\nonumber\\
&=&\frac{1}{2}\log\frac{(v_1v_2(1+\theta)(1-\rho^2)}{(1-\theta)(v_1v_2(1-\rho^2)+2\rho(1+\theta))},
\end{eqnarray}
which obviously goes to infinity as $\theta\rightarrow 1$, this means that the improvement can be infinitely large for any value of $v_1$, $v_2$, and $\rho$ such that $\overline{\theta}$ defined in (\ref{overthetadef}) equals to one.

A comparison among Wagner's lower bound \cite{Wagner05}, our partially improved lower bound, and the BT upper bound with $\sigma_{Y_1}^2=\sigma_{Y_2}^2=1$, $\rho=0.9$, $D_1=0.1$, $D_2=0.05$ is shown in Fig. \ref{fig:compare}. We can clearly observe that the gap from our new lower bound to the BT upper bound is much smaller than that to the lower bound in \cite{Wagner05}.

\begin{figure}[tbh]
\centerline{\epsfxsize=5.8in
\epsfbox{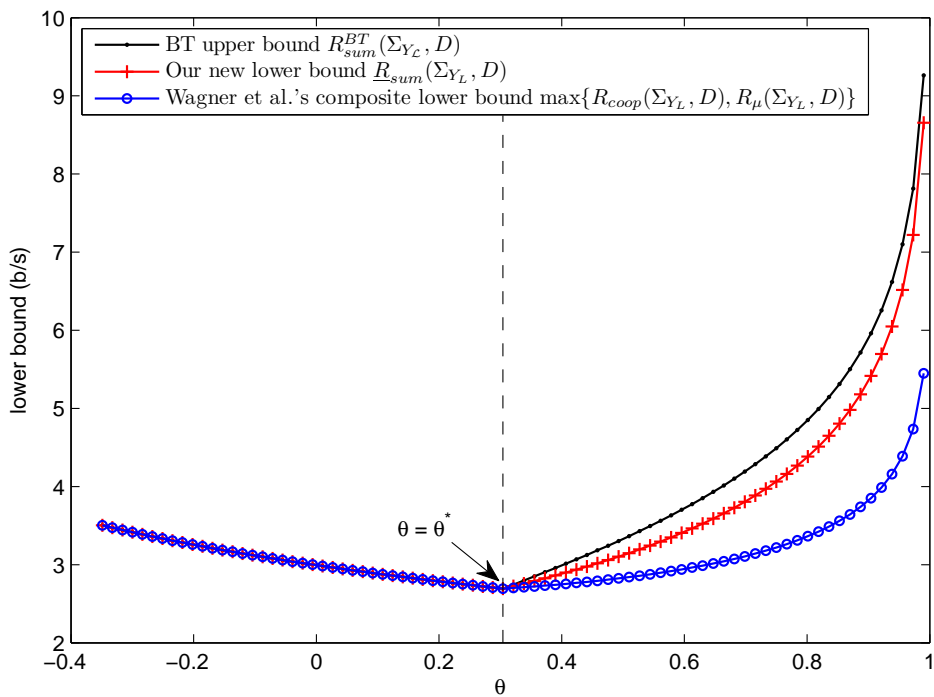}}
\caption{Comparison among Wagner's lower bound \cite{Wagner05}, our partially improved lower bound, and the BT upper bound.}
\label{fig:compare}
\end{figure}

\section{Main results}

\subsection{Definitions and preliminaries}

Before stating our main results, we need to give some definitions and review the subgradient-based KKT condition.

Let $\pi=\{\pi_1,...,\pi_L\}$ be a permutation of $\mathcal{L}$, and $\bbpi$ be the corresponding $L\times L$ permutation matrix such that
$\bbpi\mathcal{L}=\pi.$ We say an $L\times L$ matrix $\Sigma$ is $\pi^{(K)}$ block-diagonal if it is symmetric and can be written as
\begin{eqnarray}
\Sigma=\bbpi\cdot\left[\begin{array}{cccccccccccc}a_{1,1}&a_{1,2}&0&0&...&...&0&0&0&...&0&0\\a_{1,2}&a_{2,2}&0&0&...&...&0&0&0&...&0&0\\ 0&0&a_{3,3}&a_{3,4}&...&...&0&0&0&...&0&0\\
0&0&a_{3,4}&a_{4,4}&...&...&0&0&0&...&0&0\\...&...&...&...&...&...&...&...&...&...\\0&0&0&0&...&...&a_{2K-1,2K-1}&a_{2K-1,2K}&0&...&0&0 \\0&0&0&0&...&...&a_{2K-1,2K}&a_{2K,2K}&0&...&0&0\\0&0&0&0&...&...&0&0&a_{2K+1}&...&0&0
\\0&0&0&0&...&...&0&0&0&...&a_{L-1}&0\\0&0&0&0&...&...&0&0&0&...&0&a_L\end{array}\right]\bbpi^T,\label{SigmaNBL}
\end{eqnarray}
and denote $\Upsilon_K(\pi)$ as the set of all $\pi^{(K)}$ block-diagonal matrices. Equivalently, $\Sigma\in\Upsilon_K(\pi)$ if and only if $\Sigma=\Sigma^T$ and
\begin{eqnarray}
\Sigma_{\pi_i,\pi_j}=0\mathrm{~if~}\left\{\begin{array}{c} i,j\in\{1,2,...,2K\}\mathrm{~such~that~}\lceil \frac{i}{2}\rceil\neq\lceil \frac{j}{2}\rceil\\
i,j\in\{2K+1,2K+2,...,L\}\mathrm{~such~that~}i\neq j\\i\in\{2K+1,2K+2,...,L\}\mathrm{~and~} j\in\{1,2,...,2K\}\\i\in\{1,2,...,2K\}\mathrm{~and~} j\in\{2K+1,2K+2,...,L\} \end{array}\right..
\end{eqnarray}

Comparing (\ref{SigmaNBL}) and (\ref{SigmaND}), it is clear that all diagonal matrices are also $\pi^{(K)}$ block-diagonal, but the converse is not true for $K\ge 1$, i.e., \begin{eqnarray}
\mathbbm{d}\subsetneq\Upsilon_K(\pi)\mathrm{~for~}1\le K\le\lfloor\frac{L}{2}\rfloor\mathrm{~ and~ any~ permutation~} \pi.
\end{eqnarray}
Consequently, if we define
\begin{eqnarray}
\mathscr{N}_{\pi^{(K)}}(\Sigma_{Y_\mathcal{L}})&\stackrel{\Delta}{=}&\left\{\Sigma\in\mathscr{P}^\succeq\cap\Upsilon_K(\pi): \Sigma\succeq\Sigma_{Y_\mathcal{L}}\right\},
\end{eqnarray}
and compare with $\mathscr{N}(\Sigma_{Y_\mathcal{L}})$ defined in (\ref{DNdef}), it holds that
\begin{eqnarray}
\mathscr{N}(\Sigma_{Y_\mathcal{L}})&=&\mathscr{N}_{\mathbb{I}^{(0)}}(\Sigma_{Y_\mathcal{L}})~\subseteq~\mathscr{N}_{\pi^{(K)}}(\Sigma_{Y_\mathcal{L}})
\end{eqnarray}
for any $0\le K\le\lfloor \frac{L}{2}\rfloor$ and permutation $\pi$, where $\mathbb{I}$ denotes the identity permutation that maps $\mathcal{L}$ to itself.

For a set of $L$ Gaussian sources $Y_\mathcal{L}$ and a $\Sigma_{N_\mathcal{L}}\in\Upsilon_K(\pi)$ such that $\Sigma_{N_\mathcal{L}}\preceq\Sigma_{Y_\mathcal{L}}$, let $M=\mathrm{rank}(\Sigma_{Y_\mathcal{L}}-\Sigma_{N_\mathcal{L}})$ and the singular value decomposition of $\Sigma_{Y_\mathcal{L}}-\Sigma_{N_\mathcal{L}}$ be
\begin{eqnarray}
\Sigma_{Y_\mathcal{L}}-\Sigma_{N_\mathcal{L}}=\boldsymbol{T}^T\mathrm{diag}(\sigma_{X_1}^2,\sigma_{X_2}^2,...,\sigma_{X_M}^2,0,...,0)\boldsymbol{T}.\label{Xconst1}
\end{eqnarray}
Then define $\Sigma_{X_\mathcal{M}}=\mathrm{diag}(\sigma_{X_1}^2,\sigma_{X_2}^2,...,\sigma_{X_M}^2)$, $\boldsymbol{H}=\boldsymbol{T}_{\mathcal{M},\mathcal{L}}$, and let
\begin{eqnarray}
X_\mathcal{M}&\stackrel{\Delta}{=}&\boldsymbol{A} Y_\mathcal{L} + Z_\mathcal{L},\label{Xconst2}
\end{eqnarray}
with $Z_\mathcal{L}\sim\mathcal{N}(\boldsymbol{0},\boldsymbol{B})$ independent of $Y_\mathcal{L}$, where
\begin{eqnarray}
\boldsymbol{A}&=&\Sigma_{X_\mathcal{M}}\boldsymbol{H}\Sigma_{Y_\mathcal{L}}^{-1},\nonumber\\
\boldsymbol{B}&=&\Sigma_{X_\mathcal{M}}-\Sigma_{X_\mathcal{M}}\boldsymbol{H}\Sigma_{Y_\mathcal{L}}^{-1}\boldsymbol{H}^T\Sigma_{X_\mathcal{M}}.\label{ABdef}
\end{eqnarray}
It is trivial to verify that the $M$ Gaussian {\em remote sources} $X_\mathcal{M}\sim\mathcal{N}(\boldsymbol{0},\Sigma_{X_\mathcal{M}})$ satisfy
\begin{eqnarray}
Y_\mathcal{L}&=&\boldsymbol{H}^T X_\mathcal{M}+N_\mathcal{L},\label{correlationstruct}
\end{eqnarray}
with the $L$ {\em observation noises} $N_\mathcal{L}\sim\mathcal{N}(\boldsymbol{0},\Sigma_{N_\mathcal{L}})$ independent of $X_\mathcal{M}$.

Next, we briefly review the subgradient-based KKT conditions for non-differentiable convex optimization problems. The original KKT condition is a necessary condition for global optimality in a convex optimization problem with differentiable objective function and equality/inequality constraints. However, when dealing with non-differentiable convex optimization problems, subgradient-based KKT condition has to be used instead.
We call $\boldsymbol{g}$ a subgradient \cite{subgradientbook} of a non-differentiable scalar-valued vector function $f$ at point $\boldsymbol{x}$, if
\begin{eqnarray}
f(\boldsymbol{y})&\ge&f(\boldsymbol{x})+\boldsymbol{g}^T(\boldsymbol{y}-\boldsymbol{x})~\mathrm{for~all~}\boldsymbol{y}.
\end{eqnarray}
In particular, if $f=\max\{f_1,f_2\}$ with $f_1$ and $f_2$ being convex and differentiable such that $f_1(\boldsymbol{x}_0)=f_2(\boldsymbol{x}_0)$, then the subgradients of $f$ at $\boldsymbol{x}_0$ form a line segment between $\nabla f_1(\boldsymbol{x}_0)$ and $\nabla f_2(\boldsymbol{x}_0)$. The set of all subgradients of a function $f$ at some point $\boldsymbol{x}$ is called the subdifferential of $f$ at $\boldsymbol{x}$, and denoted as $\partial f(\boldsymbol{x})$. The subdifferential of $\underline{R}_{sum}(\Sigma_{Y_{\mathcal{L}}},\boldsymbol{\Gamma})$ is given in the following lemma, with a detailed proof given in Appendix B.

\begin{lemma}\label{lemmasubdiff}
Assume that $\Sigma_{Y_\mathcal{L}}$ and $\boldsymbol{D}$ take forms of (\ref{SigmaYL2x2}) and (\ref{D2x2}), respectively, such that $\boldsymbol{D}\preceq\Sigma_{Y_\mathcal{L}}$. Then the subdifferential of $\underline{R}_{sum}(\Sigma_{Y_{\mathcal{L}}},\boldsymbol{D})$ (as a function of $\boldsymbol{D}$) at
\begin{eqnarray}
\boldsymbol{D}=\tilde{\boldsymbol{D}}&\stackrel{\Delta}{=}&\left[\begin{array}{cc}{D}_1&{\tilde{\theta}}\sqrt{{D}_1{D}_2}\\{\tilde{\theta}}\sqrt{{D}_1{D}_2}&{D}_2\end{array}\right]
\end{eqnarray}
is a line segment
\begin{eqnarray}
{\partial} \underline{R}_{sum}(\Sigma_{Y_{\mathcal{L}}},{\boldsymbol{D}})\mid_{\boldsymbol{D}=\tilde{\boldsymbol{D}}}&{=}& \Big\{-\frac{1}{2}\tilde{\boldsymbol{D}}^{-1}\Psi~\tilde{\boldsymbol{D}}^{-1}: \Psi\in\bbdelta(\Sigma_{Y_\mathcal{L}},D_\mathcal{L})\Big\},\nonumber
\end{eqnarray}
\begin{eqnarray}
\bbdelta(\Sigma_{Y_\mathcal{L}},D_\mathcal{L})\stackrel{\Delta}{=} \left\{ \left[\begin{array}{cc}\hspace{-0.1in}{D}_1\hspace{-0.1in}&\hspace{-0.1in}\big(\alpha +(1-\alpha)(2{|\tilde{\theta}|}-1)\big)s\sqrt{{D}_1{D}_2}\hspace{-0.1in}\\ \big(\alpha +(1-\alpha)(2{|\tilde{\theta}|}-1)\big)s\sqrt{{D}_1{D}_2}&\hspace{-0.1in}{D}_2\hspace{-0.1in}\end{array}\right]: \alpha\in[0,1]\right\}\nonumber,
\end{eqnarray}
with $\tilde{\theta}$ defined in (\ref{thetastardef}) and $s\stackrel{\Delta}{=}\mathrm{sign}(\tilde{\theta})$.
\end{lemma}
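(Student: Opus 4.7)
The plan is to invoke the subgradient calculus for the pointwise maximum of two convex differentiable functions, which was reviewed immediately before the lemma. By the definition (\ref{thetastardef}) of $\tilde{\theta}$, the two branches of (\ref{Lemma1statement}) meet exactly at $\theta = \tilde{\theta}$, so $R_{\mu}(\Sigma_{Y_\mathcal{L}},\tilde{\boldsymbol{D}}) = R_{lb}(\Sigma_{Y_\mathcal{L}},\tilde{\boldsymbol{D}})$, and both are smooth convex functions of $\boldsymbol{D}$ on a neighborhood of $\tilde{\boldsymbol{D}}$. Consequently, the subdifferential at $\tilde{\boldsymbol{D}}$ is the line segment
\begin{equation*}
\partial\underline{R}_{sum}\big|_{\tilde{\boldsymbol{D}}}
=\big\{\alpha\,\nabla R_{\mu}(\tilde{\boldsymbol{D}}) + (1-\alpha)\,\nabla R_{lb}(\tilde{\boldsymbol{D}}) : \alpha\in[0,1]\big\},
\end{equation*}
and all that remains is to compute each endpoint gradient and cast it in the sandwich form $-\tfrac{1}{2}\tilde{\boldsymbol{D}}^{-1}\Psi\,\tilde{\boldsymbol{D}}^{-1}$ with the $\Psi$ of the statement.

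To compute the gradients I would reparametrize $\boldsymbol{D}$ by $(D_1,D_2,\theta)$ with $\theta = D_{12}/\sqrt{D_1 D_2}$, differentiate the explicit expressions of Lemma \ref{lemma2term}, and convert back to the symmetric matrix gradient via the chain rule ($G_{ii} = \partial R/\partial D_{ii}$, $G_{12} = \tfrac{1}{2}\partial R/\partial D_{12}$). For $R_{\mu}$ the three partials share a common factor $K:=2v_1v_2(1-\rho^2)+2\rho(1+\theta)$, so after the chain-rule substitution they collapse into the rank-one expression $\nabla R_{\mu} = -\tfrac{K}{4W(1+\theta)}\,uu^T$ with $u=(1/\sqrt{D_1},1/\sqrt{D_2})^T$ and $W = v_1v_2(1-\rho^2)+2\rho(1+\theta)$. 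On the other side, a direct inversion of $\tilde{\boldsymbol{D}}$ gives $\tilde{\boldsymbol{D}}^{-1}\tilde{\nu} = (1+|\tilde{\theta}|)^{-1}(1/\sqrt{D_1},s/\sqrt{D_2})^T$ with $\tilde{\nu}=(\sqrt{D_1},s\sqrt{D_2})^T$, so $-\tfrac{1}{2}\tilde{\boldsymbol{D}}^{-1}\tilde{\nu}\tilde{\nu}^T\tilde{\boldsymbol{D}}^{-1}$ is the same rank-one matrix $uu^T$ times a scalar. Matching the two scalar prefactors at $\theta=\tilde{\theta}$ is equivalent to verifying $\rho\tilde{\theta}^2 + v_1v_2(1-\rho^2)\tilde{\theta} - \rho = 0$, which is exactly the quadratic satisfied by (\ref{thetastardef}); this identifies $\nabla R_{\mu}(\tilde{\boldsymbol{D}})$ with the $\alpha = 1$ endpoint.

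The $R_{lb}$ endpoint is carried out in parallel but is more tedious because $\nabla R_{lb}$ is no longer rank-one. Here the candidate is $\Psi_0$ with diagonal $(D_1,D_2)$ and off-diagonal $(2\tilde{\theta}-s)\sqrt{D_1 D_2}$. I would expand $-\tfrac{1}{2}\tilde{\boldsymbol{D}}^{-1}\Psi_0\tilde{\boldsymbol{D}}^{-1}$ by explicit $2\times 2$ matrix multiplication and compare with the chain-rule expression for $\nabla R_{lb}$ entry by entry. After applying the auxiliary factorizations $1+2\tilde{\theta}-3\tilde{\theta}^2=(1-\tilde{\theta})(1+3\tilde{\theta})$ and $2\tilde{\theta}^3-\tilde{\theta}^2-1=(\tilde{\theta}-1)(2\tilde{\theta}^2+\tilde{\theta}+1)$, both the diagonal and the off-diagonal matches again reduce to the same defining equation $\rho\tilde{\theta}^2 + v_1v_2(1-\rho^2)\tilde{\theta} - \rho = 0$ of $\tilde{\theta}$, so the identification succeeds. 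A convex combination of $\Psi_0$ and $\Psi_1$ with weight $\alpha$ preserves the diagonal $(D_1,D_2)$ and interpolates the off-diagonal linearly between $(2\tilde{\theta}-s)\sqrt{D_1D_2}$ and $s\sqrt{D_1D_2}$; using $s|\tilde{\theta}|=\tilde{\theta}$ this rewrites cleanly as the off-diagonal $[\alpha+(1-\alpha)(2|\tilde{\theta}|-1)]s\sqrt{D_1D_2}$ of $\bbdelta(\Sigma_{Y_\mathcal{L}},D_\mathcal{L})$, completing the identification.

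The main obstacle is precisely this second matching: neither side of $\nabla R_{lb} = -\tfrac{1}{2}\tilde{\boldsymbol{D}}^{-1}\Psi_0\tilde{\boldsymbol{D}}^{-1}$ is manifestly rank-one, so one must push the $2\times 2$ algebra through and then recognize that the residual polynomial cross-terms are annihilated by exactly the quadratic that defines $\tilde{\theta}$. Everything else is routine bookkeeping around the chain rule and the subgradient rule for a max of two convex differentiable functions.
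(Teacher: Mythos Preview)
Your approach is essentially the same as the paper's: use the subgradient rule for a pointwise maximum of two convex differentiable functions (reviewed just before the lemma), compute $\nabla R_{\mu}$ and $\nabla R_{lb}$ at $\tilde{\boldsymbol{D}}$, and verify that each equals $-\tfrac{1}{2}\tilde{\boldsymbol{D}}^{-1}\Psi\,\tilde{\boldsymbol{D}}^{-1}$ for the corresponding endpoint $\Psi$ of $\bbdelta(\Sigma_{Y_\mathcal{L}},D_\mathcal{L})$. The paper's Appendix~B simply records the two gradient matrices (as scalar multiples of explicit $2\times 2$ matrices) and then states the sandwiched forms $-\tilde{\boldsymbol{D}}\,\nabla R\,\tilde{\boldsymbol{D}}$, whereas you spell out how the algebraic matching reduces to the defining quadratic $\rho\tilde{\theta}^{2}+v_1v_2(1-\rho^2)\tilde{\theta}-\rho=0$; but the underlying computation and logical structure are identical.
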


For a convex optimization problem with objective function $f$, inequality constraints $g_i\le 0$ for $j=1,...,m$ and equality constraints $h_j=0$ for $j=1,...,l$, the global optimal point $\boldsymbol{x}=\boldsymbol{x}^*$ must satisfy
\begin{eqnarray}
\nonumber
\boldsymbol{0}&\in&\partial f(\boldsymbol{\boldsymbol{x}}^*)+\sum_{i=1}^m\mu_i\partial g_i(\boldsymbol{x}^*)+\sum_{j=1}^l\lambda_j\partial h_i(\boldsymbol{x}^*),\\
\nonumber
g_i(\boldsymbol{x}^*)&\le& 0,i=1,2,...,m,\\
\nonumber
h_j(\boldsymbol{x}^*)&=&0,j=1,2,...,l,\\
\nonumber
\mu_i&\ge& 0,i=1,2,...,m,\\
\nonumber
\mu_ig_i(\boldsymbol{x}^*)&=&0,i=1,2,...,m,
\end{eqnarray}
for some $\mu_i$'s and $\lambda_j$'s.

\subsection{A new sufficient condition for sum-rate tightness}

Now we are ready to state our main result on a new sufficient condition for the tightness of BT minimum sum-rate. Consider an MT source coding problem defined by $\Sigma_{Y_\mathcal{L}}$ and $D_\mathcal{L}$. Denote the BT minimum sum-rate as $R_{sum}^{BT}(\Sigma_{Y_\mathcal{L}},D_\mathcal{L})$, and assume that the optimal BT scheme achieves a distortion matrix $\tilde{\boldsymbol{D}}$.
The main result of this paper is given in the following theorem.

\begin{theorem}\label{thmmain}
$R_{sum}^{BT}(\Sigma_{Y_\mathcal{L}},D_\mathcal{L})=R_{sum}(\Sigma_{Y_\mathcal{L}},D_\mathcal{L})$ if there exists a permutation $\pi$, a $\pi^{(K)}$ block diagonal p.d. matrix $\Sigma_{N_\mathcal{L}}$ such that $\Sigma_{N_\mathcal{L}}\preceq\Sigma_{Y_\mathcal{L}}$, an $L\times L$ p.s.d. matrix $\boldsymbol{\Omega}$, an $L\times L$ p.s.d. diagonal matrix $\boldsymbol{\Pi}$, and a set of $K$ $2\times 2$ p.s.d. matrices $\boldsymbol{\Theta}_j$, $j\in\mathcal{K}$ such that the following conditions are satisfied:
\begin{eqnarray}
\tilde{\boldsymbol{D}}\Big(\boldsymbol{\Pi}-\tilde{\boldsymbol{D}}^{-1} +\tilde{\boldsymbol{D}}^{-1}(\tilde{\boldsymbol{D}}^{-1}+\Sigma_{N_\mathcal{L}}^{-1}-\Sigma_{Y_\mathcal{L}}^{-1})^{-1}\tilde{\boldsymbol{D}}^{-1}\Big)\tilde{\boldsymbol{D}} &=&\boldsymbol{\Lambda}-\boldsymbol{\Omega},\label{mainKKT1}\\
\langle\boldsymbol{\Lambda}\rangle^\pi_j+\boldsymbol{\Theta}_j-\bbdelta\left(\langle\Sigma_{N_\mathcal{L}}\rangle^\pi_j, \mathrm{diag}(\langle\tilde{\boldsymbol{\Gamma}}\rangle^\pi_j)\right) &\ni&\boldsymbol{0},\forall j\in\mathcal{K},\label{mainKKT2}\\
\big[{\boldsymbol{\Lambda}}\big]_{\pi_k,\pi_k}&=&\big[\tilde{\boldsymbol{\Gamma}}\big]_{\pi_k,\pi_k},~k=2K+1,...,L,\label{mainKKT3}\\
\boldsymbol{\Omega}(\Sigma_{Y_\mathcal{L}}^{-1}-\tilde{\boldsymbol{D}}^{-1})&=&\boldsymbol{0},\label{mainKKT4}\\
\boldsymbol{\Theta}_j(\langle\Sigma_{N_\mathcal{L}}\rangle^\pi_j-\langle\tilde{\boldsymbol{\Gamma}}\rangle^\pi_j)&=&\boldsymbol{0},\forall j\in\mathcal{K},\label{mainKKT5}\\
~[\boldsymbol{\Pi}]_{j,j}([\tilde{\boldsymbol{D}}]_{j,j}-D_j)&=&0,\forall j\in\mathcal{K},\label{mainKKT6}
\end{eqnarray}
where $\langle\boldsymbol{C}\rangle^\pi_j$ denotes the $2\times 2$ submatrix constructed from the $(\pi_{2j-1},\pi_{2j})$-th row and $(\pi_{2j-1},\pi_{2j})$-th column of $\boldsymbol{C}$, and
\begin{eqnarray}
\tilde{\boldsymbol{\Gamma}}&\stackrel{\Delta}{=}&\Big(\tilde{\boldsymbol{D}}^{-1}+\Sigma_{N_\mathcal{L}}^{-1}-\Sigma_{Y_\mathcal{L}}^{-1}\Big)^{-1}.\label{Gammadef}
\end{eqnarray}
\end{theorem}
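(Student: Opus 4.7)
The plan is to follow the four-step template that Wang \emph{et al.} used in \cite{WangISIT09}, but generalized at every step to accommodate the block-diagonal (rather than diagonal) noise covariance. First I would, given the hypothesized $\pi^{(K)}$ block-diagonal $\Sigma_{N_\mathcal{L}}\preceq\Sigma_{Y_\mathcal{L}}$, construct the virtual remote sources $X_\mathcal{M}$ via (\ref{Xconst1})--(\ref{ABdef}), so that $Y_\mathcal{L}=\boldsymbol{H}^T X_\mathcal{M}+N_\mathcal{L}$ with $N_\mathcal{L}$ independent of $X_\mathcal{M}$. Exploiting the Markov chain $\boldsymbol{X}_\mathcal{M}\to\boldsymbol{Y}_\mathcal{L}\to W_\mathcal{L}$ together with a block generalization of the estimation-theoretic argument in \cite{WangISIT09}, I would show that $\mathrm{cov}(\boldsymbol{Y}_\mathcal{L}|\boldsymbol{X}_\mathcal{M},W_\mathcal{L})$ must itself be $\pi^{(K)}$ block-diagonal, and that the semidefinite partial order yields $\mathrm{cov}(\boldsymbol{Y}_\mathcal{L}|\boldsymbol{X}_\mathcal{M},W_\mathcal{L})\preceq\bigl(\mathrm{cov}(\boldsymbol{Y}_\mathcal{L}|W_\mathcal{L})^{-1}+\Sigma_{N_\mathcal{L}}^{-1}-\Sigma_{Y_\mathcal{L}}^{-1}\bigr)^{-1}$, whose right-hand side coincides with $\tilde{\boldsymbol{\Gamma}}$ from (\ref{Gammadef}) when $\mathrm{cov}(\boldsymbol{Y}_\mathcal{L}|W_\mathcal{L})=\tilde{\boldsymbol{D}}$.

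Next I would derive a sum-rate lower bound by writing $nR_{sum}\ge I(\boldsymbol{Y}_\mathcal{L};W_\mathcal{L})=I(\boldsymbol{X}_\mathcal{M};W_\mathcal{L})+I(\boldsymbol{Y}_\mathcal{L};W_\mathcal{L}|\boldsymbol{X}_\mathcal{M})$. For the first mutual information, the maximum-entropy principle lower-bounds it by a log-determinant in $\mathrm{cov}(\boldsymbol{X}_\mathcal{M}|W_\mathcal{L})$, which, via a Woodbury-type identity, will reduce to a log-determinant expression in $\boldsymbol{D}$. For the second, the $\pi^{(K)}$ block-diagonality of $\Sigma_{N_\mathcal{L}}$ renders the components of $\boldsymbol{Y}_\mathcal{L}$ conditionally independent across the $K$ two-blocks and the $L-2K$ singletons given $\boldsymbol{X}_\mathcal{M}$; for each $2\times 2$ block I would invoke Lemma~\ref{lemma2term} to replace the block's contribution with $\underline{R}_{sum}(\langle\Sigma_{N_\mathcal{L}}\rangle^\pi_j,\langle\boldsymbol{\Gamma}\rangle^\pi_j)$, and for each singleton I would use the classical single-letter Gaussian rate--distortion formula. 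Assembling these gives a convex but non-differentiable lower bound $f(\boldsymbol{D},\boldsymbol{\Gamma})$ whose direct evaluation at $(\tilde{\boldsymbol{D}},\tilde{\boldsymbol{\Gamma}})$ equals exactly $R_{sum}^{BT}(\Sigma_{Y_\mathcal{L}},D_\mathcal{L})$, since the tight regime $\theta\le\tilde\theta$ of Lemma~\ref{lemma2term} makes the block lower bound meet the BT upper bound inside each block.

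It remains to certify that $(\tilde{\boldsymbol{D}},\tilde{\boldsymbol{\Gamma}})$ is a global minimizer of $f$ over the admissible set defined by $[\boldsymbol{D}]_{j,j}\le D_j$ for $j\in\mathcal{L}$, by $\boldsymbol{D}\preceq\Sigma_{Y_\mathcal{L}}$, by $\langle\boldsymbol{\Gamma}\rangle^\pi_j\preceq\langle\Sigma_{N_\mathcal{L}}\rangle^\pi_j$ for $j\in\mathcal{K}$, and by the diagonal matching on the scalar indices. Attaching dual variables $\boldsymbol{\Pi}$, $\boldsymbol{\Omega}$, $\{\boldsymbol{\Theta}_j\}_{j\in\mathcal{K}}$ to these constraints and using Lemma~\ref{lemmasubdiff} to describe the subdifferential of the two-terminal piece, the subgradient KKT conditions read off precisely as (\ref{mainKKT1})--(\ref{mainKKT6}): stationarity in $\boldsymbol{D}$ yields (\ref{mainKKT1}) (with $\boldsymbol{\Lambda}$ collecting the stationarity contributions from the $\boldsymbol{\Gamma}$ block), stationarity in each $\langle\boldsymbol{\Gamma}\rangle^\pi_j$ yields (\ref{mainKKT2}) with the subgradient taken from the line segment $\bbdelta(\cdot,\cdot)$, stationarity on the scalar indices of $\boldsymbol{\Gamma}$ yields (\ref{mainKKT3}), and the three complementary-slackness conditions yield (\ref{mainKKT4})--(\ref{mainKKT6}). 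Convexity of $f$ upgrades satisfaction of these conditions at $(\tilde{\boldsymbol{D}},\tilde{\boldsymbol{\Gamma}})$ to global optimality, so the lower bound is achieved and $R_{sum}=R_{sum}^{BT}$.

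The hard part will be the non-differentiability of $f$ at the threshold $\theta=\tilde\theta$ inside each $2\times 2$ block: the correct subgradient must be selected from the one-parameter line segment described in Lemma~\ref{lemmasubdiff}, and the freedom $\alpha\in[0,1]$ is precisely what is needed to absorb, via (\ref{mainKKT1}), the awkward cross term $\tilde{\boldsymbol{D}}^{-1}(\tilde{\boldsymbol{D}}^{-1}+\Sigma_{N_\mathcal{L}}^{-1}-\Sigma_{Y_\mathcal{L}}^{-1})^{-1}\tilde{\boldsymbol{D}}^{-1}$; verifying that the hypothesized $\boldsymbol{\Lambda},\boldsymbol{\Omega},\boldsymbol{\Pi},\boldsymbol{\Theta}_j$ correspond to a valid subgradient choice is the technical core of the argument. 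A second subtlety is carrying the $\pi^{(K)}$ block-structure through both the estimation-theoretic step and the mutual-information chain rule without losing the block-by-block tightness of Lemma~\ref{lemma2term}; the $\langle\cdot\rangle^\pi_j$ extraction of $2\times 2$ submatrices must commute cleanly with the various covariance inverses that appear, which is easiest if one works throughout in the permuted coordinate system $\bbpi^T(\cdot)\bbpi$.
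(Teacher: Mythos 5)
Your proposal follows essentially the same route as the paper's proof: construct the remote sources from the block-diagonal $\Sigma_{N_\mathcal{L}}$, use the estimation-theoretic structure-preservation result (the paper's Lemma \ref{lemmasamestruct}) to force $\mathrm{cov}(\boldsymbol{Y}_\mathcal{L}|\boldsymbol{X}_\mathcal{M},W_\mathcal{L})$ into $\Upsilon_K(\pi)$, decompose the sum-rate via the chain rule into a remote-source term, $K$ matrix-distortion two-terminal terms bounded by Lemma \ref{lemma2term}, and $L-2K$ scalar terms (the paper's Lemma \ref{lemmaconnect}), and then read off (\ref{mainKKT1})--(\ref{mainKKT6}) as the subgradient KKT conditions of the resulting convex program using Lemma \ref{lemmasubdiff}. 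The argument and the role of each dual variable match the paper's proof, so the approach is correct and not materially different.
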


\vspace{0.1in}
\begin{proof}
To prove Theorem \ref{thmmain}, we need the following two lemmas, whose proofs are given in Appendices C, and D, respectively.

\vspace{0.1in}
\begin{lemma}\label{lemmasamestruct}
For any random objects $\boldsymbol{Y}_\mathcal{L}$ and $\boldsymbol{X}_\mathcal{M}$, if
\begin{eqnarray}
\Big[\mathrm{cov}(\boldsymbol{Y}_\mathcal{L}|\boldsymbol{X}_\mathcal{M})\Big]_{i,j}&=&0\label{lemmasamestruct1}
\end{eqnarray}
for some $i,j\in\mathcal{L}$, then
\begin{eqnarray}
\Big[\mathrm{cov}(\boldsymbol{Y}_\mathcal{L}|\boldsymbol{X}_\mathcal{M},W_\mathcal{L})\Big]_{i,j}&=&0\label{lemmasamestruct2}
\end{eqnarray}
for any $L$ functions $W_\mathcal{L}\stackrel{\Delta}{=}\big\{\psi_1^{(n)}(\boldsymbol{Y}_1), \psi_2^{(n)}(\boldsymbol{Y}_2), ...,  \psi_L^{(n)}(\boldsymbol{Y}_L)\big\}$. \end{lemma}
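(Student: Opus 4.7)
The plan is to reduce the lemma to a conditional-independence argument that exploits the Gaussian block structure of $\mathrm{cov}(\boldsymbol{Y}_\mathcal{L}|\boldsymbol{X}_\mathcal{M})$ together with the separable nature of the encoders $\psi_k^{(n)}$.

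First I would use the construction in (\ref{Xconst1})--(\ref{correlationstruct}) to write $\boldsymbol{Y}_\mathcal{L} = \boldsymbol{H}^T \boldsymbol{X}_\mathcal{M} + \boldsymbol{N}_\mathcal{L}$, so that $\mathrm{cov}(\boldsymbol{Y}_\mathcal{L}|\boldsymbol{X}_\mathcal{M}) = \Sigma_{N_\mathcal{L}} \in \Upsilon_K(\pi)$. The hypothesis $[\Sigma_{N_\mathcal{L}}]_{i,j}=0$ together with the $\pi^{(K)}$ block-diagonal structure places $i$ and $j$ in distinct $2\times 2$ blocks $\mathcal{B}_i$ and $\mathcal{B}_j$; by joint Gaussianity of $\boldsymbol{N}_\mathcal{L}$ the noise sub-vectors for $\mathcal{B}_i$, $\mathcal{B}_j$, and the residual index set $\mathcal{R}=\mathcal{L}\setminus(\mathcal{B}_i\cup\mathcal{B}_j)$ are mutually independent, so conditional on $\boldsymbol{X}_\mathcal{M}$ the three groups $\{\boldsymbol{Y}_k:k\in\mathcal{B}_i\}$, $\{\boldsymbol{Y}_k:k\in\mathcal{B}_j\}$, and $\{\boldsymbol{Y}_k:k\in\mathcal{R}\}$ are likewise mutually independent.

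Next I would exploit the fact that each $W_k=\psi_k^{(n)}(\boldsymbol{Y}_k)$ depends only on a single row of $\boldsymbol{Y}_\mathcal{L}$. This lifts the independence across blocks to the enlarged groups $\{(\boldsymbol{Y}_k,W_k):k\in\mathcal{B}_i\}$, $\{(\boldsymbol{Y}_k,W_k):k\in\mathcal{B}_j\}$, and $\{(\boldsymbol{Y}_k,W_k):k\in\mathcal{R}\}$, which remain mutually independent conditional on $\boldsymbol{X}_\mathcal{M}$. A routine manipulation of $\sigma$-algebras then shows $\boldsymbol{Y}_i\perp\boldsymbol{Y}_j\mid(\boldsymbol{X}_\mathcal{M},W_\mathcal{L})$, since the only portion of $W_\mathcal{L}$ carrying information about $\boldsymbol{Y}_i$ given $\boldsymbol{X}_\mathcal{M}$ lies in the $\mathcal{B}_i$ block, the only portion relevant for $\boldsymbol{Y}_j$ lies in the $\mathcal{B}_j$ block, and these two blocks are conditionally independent of each other and of the $\mathcal{R}$-portion of $W_\mathcal{L}$.

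Finally, the conditional independence $\boldsymbol{Y}_i\perp\boldsymbol{Y}_j\mid\boldsymbol{X}_\mathcal{M},W_\mathcal{L}$ forces the $(i,j)$ entry of the conditional cross-covariance to vanish at every time step $t$, and averaging over $t$ and taking the outer expectation in the definition of $\mathrm{cov}(\cdot|\cdot)$ gives $[\mathrm{cov}(\boldsymbol{Y}_\mathcal{L}|\boldsymbol{X}_\mathcal{M},W_\mathcal{L})]_{i,j}=0$. The main obstacle I anticipate is the second step, where one has to verify that the per-coordinate structure of the encoders truly prevents the $W_k$'s outside $\mathcal{B}_i\cup\mathcal{B}_j$ from coupling $\boldsymbol{Y}_i$ and $\boldsymbol{Y}_j$; both the Gaussianity of the remote-source/noise model and the fact that $\psi_k^{(n)}$ does not mix rows of $\boldsymbol{Y}_\mathcal{L}$ are indispensable, since a $W_k$ depending jointly on $\boldsymbol{Y}_i$ and $\boldsymbol{Y}_j$ would inject an information path through the cross-block correlations and destroy the decoupling.
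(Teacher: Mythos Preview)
Your argument is correct in the Gaussian block-diagonal setting in which the lemma is actually applied, but the route differs from the paper's. The paper never invokes the full block partition $\mathcal{B}_i,\mathcal{B}_j,\mathcal{R}$; it works with the pair $(i,j)$ alone, reads off the Markov chains $W_i\to\boldsymbol{Y}_i\to\boldsymbol{X}_\mathcal{M}\to(\boldsymbol{Y}_j,W_j)$ and its mirror, and then manipulates the cross term $E\big[(\boldsymbol{Y}_i-E(\boldsymbol{Y}_i\mid\cdot))(\boldsymbol{Y}_j-E(\boldsymbol{Y}_j\mid\cdot))^T\big]$ directly, finishing with the orthogonality identity $E\big[(\boldsymbol{Y}_i-E(\boldsymbol{Y}_i\mid\omega))^T g(\omega)\big]=\boldsymbol{0}$ (Proposition~\ref{proplemma1}, taken from \cite{WangISIT09}); it never passes through the stronger assertion $\boldsymbol{Y}_i\perp\boldsymbol{Y}_j\mid(\boldsymbol{X}_\mathcal{M},W_\mathcal{L})$. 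Your route establishes precisely that stronger conditional independence via mutual independence of the noise blocks, and you are right that the block structure is what makes it work: a $W_k$ with $k\notin\{i,j\}$ can couple $\boldsymbol{Y}_i$ and $\boldsymbol{Y}_j$ whenever $\boldsymbol{N}_k$ is correlated with both $\boldsymbol{N}_i$ and $\boldsymbol{N}_j$, which the block-diagonal hypothesis rules out. In this respect your argument is more explicit about the role of the remaining $W_k$'s than the paper's step (\ref{newlemma4proof1}), which reduces $W_\mathcal{L}$ to $W_i$ (resp.\ $W_j$) citing only the two-index Markov chain. Two small caveats: you are proving a narrower statement than the lemma's literal ``for any random objects'' (you use both Gaussianity and the block structure), though this is exactly the case the paper needs; and $[\Sigma_{N_\mathcal{L}}]_{i,j}=0$ does not literally force $i,j$ into distinct $2\times2$ blocks (either could lie in a $1\times1$ block, or both in a $2\times2$ block whose off-diagonal happens to vanish), but in every such case $\boldsymbol{N}_i\perp\boldsymbol{N}_j$ and your argument goes through with $\mathcal{B}_i,\mathcal{B}_j$ chosen accordingly.
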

\vspace{0.1in}

\begin{lemma}\label{lemmaconnect}
For any pair $({X_\mathcal{M}},{Y_\mathcal{L}})$ satisfying (\ref{correlationstruct}) and any $D_\mathcal{L}$, there exists a $\boldsymbol{D}\in\mathbb{R}^{L\times L}$ and a \begin{eqnarray}
\boldsymbol{\Gamma}&=&\bbpi^T\mathrm{diag}(\boldsymbol{\Gamma}_1, ..., \boldsymbol{\Gamma}_K, \gamma_{K+1},...,\gamma_L)\bbpi~\in~\Upsilon_K(\pi)\label{Gammastruct}
\end{eqnarray}

\vspace{-0.15in}
\noindent
such that
\vspace{-0.25in}

\begin{eqnarray}
\mathrm{diag}(\boldsymbol{D})&\le& D_\mathcal{L}\nonumber\\
\boldsymbol{\Gamma}&\preceq&(\boldsymbol{D}^{-1}+\Sigma_{N_\mathcal{L}}^{-1}-\Sigma_{Y_\mathcal{L}}^{-1})^{-1},
\end{eqnarray}
and the sum-rate of the quadratic Gaussian $L$-terminal problem satisfies
\begin{eqnarray}
&&R_{sum}(\Sigma_{Y_\mathcal{L}},D_\mathcal{L})\nonumber\\&\ge&
\frac{1}{2}\log\frac{|\Sigma_{X_\mathcal{M}}|}{|\boldsymbol{A}\boldsymbol{D}\boldsymbol{A}^T+\boldsymbol{B}|}
+\sum_{k=1}^K\underline{R}_{sum}(\Sigma_{Y_{\{\pi_{2k-1},\pi_{2k}\}}|{X}_\mathcal{M}},\boldsymbol{\Gamma}_k)+\frac{1}{2}\sum_{i=K+1}^L\log \frac{\sigma_{N_{\pi_i}}^2}{\gamma_i},\label{lemmaconnentstatement}
\end{eqnarray}
where $\Sigma_{Y_{\{\pi_{2k-1},\pi_{2k}\}}|{X}_\mathcal{M}}$ denotes the conditional covariance matrix of $(Y_{\pi_{2k-1}}, Y_{\pi_{2k}})^T$ given ${X}_\mathcal{M}$, and $\boldsymbol{A}$ and $\boldsymbol{B}$ are defined in (\ref{ABdef}).
\end{lemma}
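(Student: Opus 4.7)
The plan is to construct $\boldsymbol{D}$ and $\boldsymbol{\Gamma}$ as limits of conditional-covariance matrices built from an asymptotically optimal sequence of MT codes, and to assemble the sum-rate bound from the split $I(W;\boldsymbol{X})+I(W;\boldsymbol{Y}\mid\boldsymbol{X})$, exploiting the $\pi^{(K)}$ block-independence of $\boldsymbol{Y}_\mathcal{L}$ given $\boldsymbol{X}_\mathcal{M}$. For a sequence $\{(\phi^{(n)}_\mathcal{L},\psi^{(n)}_\mathcal{L})\}_n$ whose sum-rates converge to $R_{sum}(\Sigma_{Y_\mathcal{L}},D_\mathcal{L})$, with produced messages $W_\mathcal{L}$, I would set
\begin{eqnarray}
\boldsymbol{D}^{(n)}\stackrel{\Delta}{=}\mathrm{cov}(\boldsymbol{Y}_\mathcal{L}|W_\mathcal{L}),\qquad \boldsymbol{\Gamma}^{(n)}\stackrel{\Delta}{=}\mathrm{cov}(\boldsymbol{Y}_\mathcal{L}|\boldsymbol{X}_\mathcal{M},W_\mathcal{L}).\nonumber
\end{eqnarray}
Three structural properties carry the argument: (i) $\mathrm{diag}(\boldsymbol{D}^{(n)})\le D_\mathcal{L}+o(1)$, since the MMSE error matrix lower-bounds the actual reconstruction error matrix; (ii) $\boldsymbol{\Gamma}^{(n)}\in\Upsilon_K(\pi)$, obtained by applying Lemma \ref{lemmasamestruct} to every off-block-diagonal zero of $\Sigma_{N_\mathcal{L}}=\mathrm{cov}(\boldsymbol{Y}_\mathcal{L}|\boldsymbol{X}_\mathcal{M})$; (iii) $\boldsymbol{\Gamma}^{(n)}\preceq(\boldsymbol{D}^{(n),-1}+\Sigma_{N_\mathcal{L}}^{-1}-\Sigma_{Y_\mathcal{L}}^{-1})^{-1}$, since the linear MMSE of $\boldsymbol{Y}_\mathcal{L}$ from $(\boldsymbol{X}_\mathcal{M},W_\mathcal{L})$ attains exactly the right-hand side and cannot improve on the true MMSE.

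Next I would lower-bound the sum-rate via $nR_{sum}^{(n)}\ge H(W_\mathcal{L})\ge I(W_\mathcal{L};\boldsymbol{Y}_\mathcal{L})=I(W_\mathcal{L};\boldsymbol{X}_\mathcal{M})+I(W_\mathcal{L};\boldsymbol{Y}_\mathcal{L}|\boldsymbol{X}_\mathcal{M})$. For the first term, the Markov chain $\boldsymbol{X}_\mathcal{M}\to\boldsymbol{Y}_\mathcal{L}\to W_\mathcal{L}$ together with $\boldsymbol{X}_\mathcal{M}=\boldsymbol{A}\boldsymbol{Y}_\mathcal{L}+\boldsymbol{Z}_\mathcal{L}$ ($\boldsymbol{Z}_\mathcal{L}\perp W_\mathcal{L}$) yields $\mathrm{cov}(\boldsymbol{X}_\mathcal{M}|W_\mathcal{L})=\boldsymbol{A}\boldsymbol{D}^{(n)}\boldsymbol{A}^T+\boldsymbol{B}$, whence the Gaussian vector rate-distortion formula delivers $I(W_\mathcal{L};\boldsymbol{X}_\mathcal{M})\ge\frac{n}{2}\log\bigl(|\Sigma_{X_\mathcal{M}}|/|\boldsymbol{A}\boldsymbol{D}^{(n)}\boldsymbol{A}^T+\boldsymbol{B}|\bigr)$.

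The conditional term is handled block by block. Using the exact decomposition $h(\boldsymbol{Y}_\mathcal{L}|\boldsymbol{X}_\mathcal{M})=\sum_k h(\boldsymbol{Y}_{\{\pi_{2k-1},\pi_{2k}\}}|\boldsymbol{X}_\mathcal{M})+\sum_i h(\boldsymbol{Y}_{\pi_i}|\boldsymbol{X}_\mathcal{M})$ from the $\pi^{(K)}$ block-independence, the matching subadditive upper bound on $h(\boldsymbol{Y}_\mathcal{L}|\boldsymbol{X}_\mathcal{M},W_\mathcal{L})$, and the fact that conditional on $\boldsymbol{X}_\mathcal{M}$ the pair $(\boldsymbol{Y}_{\{\pi_{2k-1},\pi_{2k}\}},W_{\{\pi_{2k-1},\pi_{2k}\}})$ is independent of $W_{\mathcal{L}\setminus\{\pi_{2k-1},\pi_{2k}\}}$, I would obtain
\begin{eqnarray}
I(W_\mathcal{L};\boldsymbol{Y}_\mathcal{L}|\boldsymbol{X}_\mathcal{M})\ge\sum_{k=1}^K I(W_{\{\pi_{2k-1},\pi_{2k}\}};\boldsymbol{Y}_{\{\pi_{2k-1},\pi_{2k}\}}|\boldsymbol{X}_\mathcal{M})+\sum_{i=2K+1}^L I(W_{\pi_i};\boldsymbol{Y}_{\pi_i}|\boldsymbol{X}_\mathcal{M}).\nonumber
\end{eqnarray}
Each block term is the sum-rate of a two-encoder scheme achieving matrix distortion $\boldsymbol{\Gamma}_k^{(n)}=\langle\boldsymbol{\Gamma}^{(n)}\rangle_k^\pi$ on the conditional Gaussian source $\boldsymbol{Y}_{\{\pi_{2k-1},\pi_{2k}\}}\mid\boldsymbol{X}_\mathcal{M}$ (with $\boldsymbol{X}_\mathcal{M}$ acting as decoder side information that, by the Gaussian Wyner--Ziv-type equivalence, can be absorbed into the source), so Lemma \ref{lemma2term} lower-bounds it by $n\underline{R}_{sum}(\Sigma_{Y_{\{\pi_{2k-1},\pi_{2k}\}}|X_\mathcal{M}},\boldsymbol{\Gamma}_k^{(n)})$; each singleton term is lower-bounded by the scalar Gaussian rate-distortion formula $\frac{n}{2}\log(\sigma_{N_{\pi_i}}^2/\gamma_i^{(n)})$, using $\mathrm{var}(\boldsymbol{Y}_{\pi_i}|\boldsymbol{X}_\mathcal{M})=\sigma_{N_{\pi_i}}^2$.

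The main obstacle is converting these per-$n$ estimates into the single-matrix conclusion in (\ref{lemmaconnentstatement}). The sequence $\{(\boldsymbol{D}^{(n)},\boldsymbol{\Gamma}^{(n)})\}_n$ lies in the compact set $\{(\boldsymbol{D},\boldsymbol{\Gamma}):\boldsymbol{0}\preceq\boldsymbol{\Gamma}\preceq\boldsymbol{D}\preceq\Sigma_{Y_\mathcal{L}}\}$, so I would extract a convergent subsequence with limits $(\boldsymbol{D},\boldsymbol{\Gamma})$; continuity of $\log\det$ on the positive-definite cone and lower semicontinuity of $\underline{R}_{sum}$ (a maximum of two continuous functions) transport the inequality to the limit, while closedness of $\Upsilon_K(\pi)$, of the diagonal-entry constraint, and of semidefinite ordering preserve the structural properties of $(\boldsymbol{D},\boldsymbol{\Gamma})$. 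A secondary subtlety is verifying the hypotheses of Lemma \ref{lemma2term} on each block: $\boldsymbol{\Gamma}_k\preceq\Sigma_{Y_{\{\pi_{2k-1},\pi_{2k}\}}|X_\mathcal{M}}$ follows from $\boldsymbol{\Gamma}\preceq\mathrm{cov}(\boldsymbol{Y}_\mathcal{L}|\boldsymbol{X}_\mathcal{M})=\Sigma_{N_\mathcal{L}}$, and the non-negative conditional correlation assumed in the lemma may be arranged, if necessary, by sign-flipping one source inside the block, an operation that leaves every term of (\ref{lemmaconnentstatement}) invariant.
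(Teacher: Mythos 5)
Your proposal is correct and follows essentially the same route as the paper's Appendix D: the same choices $\boldsymbol{D}=\mathrm{cov}(\boldsymbol{Y}_\mathcal{L}|W_\mathcal{L})$ and $\boldsymbol{\Gamma}=\mathrm{cov}(\boldsymbol{Y}_\mathcal{L}|\boldsymbol{X}_\mathcal{M},W_\mathcal{L})$, the same appeal to Lemma~\ref{lemmasamestruct} for the $\Upsilon_K(\pi)$ structure, and the same split of $H(W_\mathcal{L})$ into $I(W_\mathcal{L};\boldsymbol{X}_\mathcal{M})$ plus per-block and per-singleton conditional mutual informations. The only point where the paper does real work that your sketch compresses into a phrase is the step you call ``absorbing $\boldsymbol{X}_\mathcal{M}$ into the source'': the paper proves $I(\boldsymbol{Y}_{\{\pi_{2k-1},\pi_{2k}\}};W_{\{\pi_{2k-1},\pi_{2k}\}}|\boldsymbol{X}_\mathcal{M})\ge nR_{sum}(\Sigma_{Y_{\{\pi_{2k-1},\pi_{2k}\}}|X_\mathcal{M}},\boldsymbol{\Gamma}_k)$ by contradiction, explicitly building from the given encoders a code for the matrix-distortion two-terminal problem on a fresh source with covariance $\Sigma_{Y_{\{\pi_{2k-1},\pi_{2k}\}}|X_\mathcal{M}}$ --- adding $\boldsymbol{H}^T\boldsymbol{X}_\mathcal{M}$ to recreate the marginal of $\boldsymbol{Y}$, applying the same encoders, and Slepian--Wolf coding the messages against decoder side information $\boldsymbol{X}_\mathcal{M}$ --- which is exactly the formalization your ``Wyner--Ziv-type equivalence'' would need.
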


{\bf Remarks:}
\begin{itemize}
\item Lemma \ref{lemmasamestruct} ensures that $\mathrm{cov}(\boldsymbol{Y}_\mathcal{L}|\boldsymbol{X}_\mathcal{M},W_\mathcal{L})$ in (\ref{lemmasamestruct2}) shares the same structure with $\Sigma_{N_\mathcal{L}}=\mathrm{cov}(\boldsymbol{Y}_\mathcal{L}|\boldsymbol{X}_\mathcal{M})$ in (\ref{lemmasamestruct1}), which is assumed to be block-diagonal in this paper. Note that this property holds even for non-block-diagonal $\Sigma_{N_\mathcal{L}}$'s.

\item This structural similarity between $\Sigma_{N_\mathcal{L}}=\mathrm{cov}(\boldsymbol{Y}_\mathcal{L}|\boldsymbol{X}_\mathcal{M})$ and $\mathrm{cov}(\boldsymbol{Y}_\mathcal{L}|\boldsymbol{X}_\mathcal{M},W_\mathcal{L})$ is a key to the proof of Lemma 5, since it restricts $\mathrm{cov}(\boldsymbol{Y}_\mathcal{L}|\boldsymbol{X}_\mathcal{M},W_\mathcal{L})$, which equals to $\boldsymbol{\Gamma}$ in (\ref{Gammastruct}), to be block-diagonal, and hence makes the lower bound (\ref{lemmaconnentstatement}) much simpler.
\end{itemize}

\vspace{0.1in}
Now we proceed to the proof of Theorem \ref{thmmain}.

Due to Lemma \ref{lemmaconnect}, to find the best lower bound on $R_{sum}(\Sigma_{Y_\mathcal{L}},D_\mathcal{L})$, we need to solve the following optimization problem for given  $({X_\mathcal{M}},{Y_\mathcal{L}})$ and $D_\mathcal{L}$ satisfying (\ref{correlationstruct}),
\begin{eqnarray}
\mathrm{Minimizing} &&\frac{1}{2}\log\frac{|\Sigma_{X_\mathcal{M}}|}{|\boldsymbol{A}\boldsymbol{D}\boldsymbol{A}^T+\boldsymbol{B}|}
+\sum_{k=1}^K\underline{R}_{sum}(\Sigma_{Y_{\{\pi_{2k-1},\pi_{2k}\}}|{X}_\mathcal{M}},\boldsymbol{\Gamma}_k)+\frac{1}{2}\sum_{i=K+1}^L\log \frac{\sigma_{N_{\pi_i}}^2}{\gamma_i}
\nonumber\\\mathrm{over~~}&&{\boldsymbol{D},\boldsymbol{\Gamma}_1,...,\boldsymbol{\Gamma}_K,\gamma_{2K+1},...,\gamma_L} \nonumber\\
\mathrm{subject~to}&&\boldsymbol{\Gamma}\preceq(\Sigma_{N_\mathcal{L}}^{-1}+\boldsymbol{D}^{-1}-\Sigma_{Y_\mathcal{L}}^{-1})^{-1},\nonumber\\
&&\boldsymbol{0}~\prec~\boldsymbol{D}~\preceq~\Sigma_{Y_\mathcal{L}},\nonumber\\
&&[\boldsymbol{D}]_{j,j}~\le~ D_j,~~~\mathrm{for~any~}j\in\mathcal{L},\nonumber\\
&&\boldsymbol{0}~\prec~\boldsymbol{\Gamma}_k~\preceq~\Sigma_{N_{\{\pi_{2k-1},\pi_{2k}\}}} \forall k\in\mathcal{K},\nonumber\\
&&0<\gamma_k\le\sigma_{N_{\pi_k}}^2,~k=2K+1,...,L,\nonumber
\end{eqnarray}
which is clearly convex. The Lagrangian is
{\begin{eqnarray}
\mathbbm{L}&=&-\frac{1}{2}\log|\boldsymbol{A}\boldsymbol{D}\boldsymbol{A}^T+\boldsymbol{B}|\nonumber\\&&  +\sum_{k=1}^K\underline{R}_{sum}(\Sigma_{Y_{\{\pi_{2k-1},\pi_{2k}\}}|{X}_\mathcal{M}},\boldsymbol{\Gamma}_k) -\frac{1}{2}\sum_{i=K+1}^L\log {\gamma_i}\nonumber\\&&+\mathrm{tr}\Big(\Lambda\big((\Sigma_{N_\mathcal{L}}^{-1}+\boldsymbol{D}^{-1}-\Sigma_{Y_\mathcal{L}}^{-1})-\boldsymbol{\Gamma^{-1}}\big)\Big)
+\mathrm{tr}\Big(\Omega(\Sigma_{Y_\mathcal{L}}^{-1}-{\boldsymbol{D}}^{-1})\Big) \nonumber\\&&+\sum_{i=1}^K \mathrm{tr}\Big(\Theta_i(\Sigma_{N_{\{\pi_{2i-1},\pi_{2i}\}}}^{-1}-\boldsymbol{\Gamma}_i^{-1})\Big)+\sum_{j=1}^L\mathrm{tr}(\Pi_j \boldsymbol{E}_j{\boldsymbol{D}}\boldsymbol{E}_j),\nonumber
\end{eqnarray}}
where $\Lambda$, $\Omega$, $\Theta_i$, $i\in\mathcal{K}$, $\Pi_j$, $j\in\mathcal{L}$ are p.s.d. matrices, and $\boldsymbol{E}_i$ is the $L\times L$ single-entry matrix whose $(i,i)$-th element is one.

Assume that the optimal BT scheme achieves a distortion matrix $\tilde{\boldsymbol{D}}$, and ${\tilde{\boldsymbol{\Gamma}}}$ as defined in (\ref{Gammadef}), then by applying Lemma \ref{lemmasubdiff}, we obtain the subgradient based KKT conditions at $(\tilde{\boldsymbol{D}},{\tilde{\boldsymbol{\Gamma}}})$, which are
\begin{eqnarray}
\tilde{\boldsymbol{D}}\Big(\boldsymbol{\Pi}-\tilde{\boldsymbol{D}}^{-1} +\tilde{\boldsymbol{D}}^{-1}(\tilde{\boldsymbol{D}}^{-1}+\Sigma_{N_\mathcal{L}}^{-1}-\Sigma_{Y_\mathcal{L}}^{-1})^{-1}\tilde{\boldsymbol{D}}^{-1}\Big)\tilde{\boldsymbol{D}} &=&\boldsymbol{\Lambda}-\boldsymbol{\Omega},\nonumber\\
\langle\boldsymbol{\Lambda}\rangle^\pi_j+\boldsymbol{\Theta}_j-\bbdelta\left(\langle\Sigma_{N_\mathcal{L}}\rangle^\pi_j, \mathrm{diag}(\langle\tilde{\boldsymbol{\Gamma}}\rangle^\pi_j)\right) &\ni&\boldsymbol{0},\forall j\in\mathcal{K},\nonumber\\
\big[{\boldsymbol{\Lambda}}\big]_{\pi_k,\pi_k}&=&\big[\tilde{\boldsymbol{\Gamma}}\big]_{\pi_k,\pi_k},~k=2K+1,...,L,\nonumber\\
\boldsymbol{\Omega}(\Sigma_{Y_\mathcal{L}}^{-1}-\tilde{\boldsymbol{D}}^{-1})&=&\boldsymbol{0},\nonumber\\
\boldsymbol{\Theta}_j(\langle\Sigma_{N_\mathcal{L}}\rangle^\pi_j-\langle\tilde{\boldsymbol{\Gamma}}\rangle^\pi_j)&=&\boldsymbol{0},\forall j\in\mathcal{K},\nonumber\\
~[\boldsymbol{\Pi}]_{j,j}([\tilde{\boldsymbol{D}}]_{j,j}-D_j)&=&0,\forall j\in\mathcal{K},\nonumber
\end{eqnarray}
where $\boldsymbol{\Pi}$, $\boldsymbol{\Lambda}$, $\boldsymbol{\Omega}$, and $\boldsymbol{\Theta}_j$'s are the p.s.d. Lagrangian multipliers. Then Theorem \ref{thmmain} readily follows.
\end{proof}

\vspace{0.05in}
\noindent$\bullet$\hspace{0.08in}{Example 1:} the block-degraded case

All known cases of quadratic Gaussian MT source coding problems with tight sum-rate bound belong to the non-degraded subclass, where all target distortions are met with equalities (i.e., all distortion constraints are active \cite{Boydbook}) in the optimal BT scheme. In this subsection, we first study a block-degraded case, and independently show sum-rate tightness in this case (under certain condition). Then we give a numerical example to confirm that the set of block-degraded case with tight sum-rate intersects with the one defined by the sufficient condition in Theorem \ref{thmmain}.

Consider a special case of quadratic Gaussian MT source coding,
where the vector source $Y_\mathcal{L}$ and the target distortion
vector $D_\mathcal{L}$ can be partitioned into $K$ groups, namely,
$(Y_{\mathcal{S}_1},D_{\mathcal{S}_1})$,
$(Y_{\mathcal{S}_2},D_{\mathcal{S}_2}), ...,
(Y_{\mathcal{S}_K},D_{\mathcal{S}_K})$, and for any $k\in\mathcal{K}$,
there exists an integer $\mathbbm{i}(k)\in\mathcal{S}_k$, such that
\begin{eqnarray}
Y_{j}~=~Y_{\mathbbm{i}(k)}+Z_j,\mathrm{~~and~~}D_j
\ge D_{\mathbbm{i}(k)}+\sigma_{Z_j}^2,\forall j\in\mathcal{S}_k,
\end{eqnarray}
where $Z_j~\sim\mathcal{N}(0,\sigma_{Z_j}^2)$ with
$\sigma_{Z_j}^2>0$ for $j\in\mathcal{S}_k-\{\mathbbm{i}(k)\}$ and
$\sigma_{Z_{\mathbbm{i}(k)}}^2=0$ is independent of
$Y_{\mathbbm{i}(k)}$ and $Z_j$'s are mutually independent.
Each $Y_{\mathbbm{i}(k)}$, $k\in\mathcal{K}$ is called the
{\em group leader} in $Y_{\mathcal{S}_k}$, and denote
$\bar{Y}_\mathcal{K}=(Y_{\mathbbm{i}(1)},Y_{\mathbbm{i}(2)},...,
Y_{\mathbbm{i}(k)})^T$, $\bar{D}_\mathcal{K}=(D_{\mathbbm{i}(1)},
D_{\mathbbm{i}(2)},...,D_{\mathbbm{i}(k)})^T$.
We say a pair $(\Sigma_{Y_\mathcal{L}},D_\mathcal{L})$ is
{\em block-degraded (BD)} if they satisfy the above condition.
The $K$ components of $\bar{Y}_\mathcal{K}$ are referred to as
{\em core sources} while the other $L-K$ as {\em redundant sources}.

Equivalently, $(\Sigma_{Y_\mathcal{L}},D_\mathcal{L})$ is BD if there
exists a partition $\mathcal{P}=\{\mathcal{S}_k:k\in\mathcal{K}\}$
of $\mathcal{L}$ and another pair
$(\Sigma_{\bar{Y}_\mathcal{K}},\bar{D}_\mathcal{K})$ such that
\begin{eqnarray}
\hspace{-0.15in}\Sigma_{Y_\mathcal{L}}
&=&\boldsymbol{G}_\mathcal{P}\Sigma_{\bar{Y}_\mathcal{K}}
\boldsymbol{G}_\mathcal{P}^T+\Sigma_{Z_\mathcal{L}},\label{BDcond1}\\
\hspace{-0.15in}D_{\mathbbm{i}(k)}&=&\bar{D}_k,\forall k\in\mathcal{K},
\label{BDcond2}\\
\hspace{-0.15in}D_j&\ge&\bar{D}_k+[\Sigma_{Z_\mathcal{L}}]_{j,j},
\forall j\in\mathcal{S}_k-\{\mathbbm{i}(k)\}\mathrm{~and~}k\in\mathcal{K},
\label{BDcond3}
\end{eqnarray}
where $\boldsymbol{G}_\mathcal{P}$ is an $L\times K$ matrix whose
$(j,\mathbbm{i}(k))$-th element is one for all $j\in\mathcal{S}_k$,
$k\in\mathcal{K}$ with the rest being zero, and
$\Sigma_{Z_\mathcal{L}}$ is a diagonal matrix whose diagonal elements
are positive with exceptions that
$[\Sigma_{Z_{\mathcal{L}}}]_{\mathbbm{i}(k),\mathbbm{i}(k)}=0$.
Then an $L$-terminal quadratic Gaussian MT source coding problem with a
BD pair $(\Sigma_{Y_\mathcal{L}},D_\mathcal{L})$ automatically induces
a $K$-terminal source coding problem defined by the
pair $(\Sigma_{\bar{Y}_\mathcal{K}},\bar{D}_\mathcal{K})$.

Consider a BD pair $(\Sigma_{Y_\mathcal{L}},D_\mathcal{L})$
with partition $\mathcal{P}=\{\mathcal{S}_k:k\in\mathcal{K}\}$
and $(\Sigma_{\bar{Y}_\mathcal{K}},\bar{D}_\mathcal{K},
\Sigma_{Z_\mathcal{L}})$ satisfying (\ref{BDcond1})-(\ref{BDcond3}).
We say a matrix $\Lambda$ is $\mathcal{P}$-block-diagonal if
$[\Lambda]_{i,j}=0$ for any $i\in\mathcal{S}_k,j\in\mathcal{S}_l$
with $k,l\in\mathcal{K}, k\neq l$, and denote $\mathbbm{d}_\mathcal{P}$
as the set of all $\mathcal{P}$-block-diagonal matrices.
For two $L\times L$ matrices $A$ and $B$, we write
$A\stackrel{\mathcal{P}}{\equiv}B$ if $[A]_{i,j}=[B]_{i,j}$
for any $i,j\in\mathcal{S}_k$ with some $k\in\mathcal{K}$.

We claim that for a BD pair $(\Sigma_{Y_\mathcal{L}},D_\mathcal{L})$, tightness of the BT sum-rate bound in the induced $K$-terminal quadratic
Gaussian MT source coding problem implies tightness of the same bound
in the original $L$-terminal problem, which is stated in the following lemma, whose proof is given in Appendix E.

\begin{lemma}\label{lemmaBD}
For any BD pair $(\Sigma_{Y_\mathcal{L}},D_\mathcal{L})$,
if the BT minimum sum-rate is tight for the induced $K$-terminal
source coding problem, i.e.,
\begin{eqnarray}
R_{sum}(\Sigma_{\bar{Y}_\mathcal{K}},\bar{D}_\mathcal{K})
&=&R_{sum}^{BT}(\Sigma_{\bar{Y}_\mathcal{K}},\bar{D}_\mathcal{K}),
\end{eqnarray}
then it must also be tight for the original MT source coding
problem defined by $(\Sigma_{Y_\mathcal{L}},D_\mathcal{L})$, i.e.,
\begin{eqnarray}
R_{sum}(\Sigma_{{Y}_\mathcal{L}},{D}_\mathcal{L})
&=&R^{BT}_{sum}(\Sigma_{{Y}_\mathcal{L}},{D}_\mathcal{L})
~=~R_{sum}^{BT}(\Sigma_{\bar{Y}_\mathcal{K}},\bar{D}_\mathcal{K}).\nonumber
\end{eqnarray}
\end{lemma}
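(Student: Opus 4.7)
The plan is to prove the claim by a four-term sandwich,
\[
R_{sum}^{BT}(\Sigma_{\bar{Y}_\mathcal{K}},\bar{D}_\mathcal{K})\;\ge\;R_{sum}^{BT}(\Sigma_{{Y}_\mathcal{L}},{D}_\mathcal{L})\;\ge\;R_{sum}(\Sigma_{{Y}_\mathcal{L}},{D}_\mathcal{L})\;\ge\;R_{sum}(\Sigma_{\bar{Y}_\mathcal{K}},\bar{D}_\mathcal{K}),
\]
after which the assumed equality $R_{sum}(\Sigma_{\bar{Y}_\mathcal{K}},\bar{D}_\mathcal{K})=R_{sum}^{BT}(\Sigma_{\bar{Y}_\mathcal{K}},\bar{D}_\mathcal{K})$ collapses all four quantities and delivers the lemma. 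The middle inequality is just Berger--Tung achievability (Lemma \ref{lemmaBT}), so the real work sits in the two outer inequalities: one direct achievability argument and one converse reduction.

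For the left (achievability) inequality, I would lift any BT test channel $\bar{U}_\mathcal{K}=\bar{Y}_\mathcal{K}+\bar{Q}_\mathcal{K}$ for the induced $K$-terminal problem to an $L$-terminal test channel by setting $U_{\mathbbm{i}(k)}=\bar{U}_k$ at each leader and taking $\sigma_{Q_j}^2\to\infty$ (so effectively $R_j=0$) at every redundant index $j\in\mathcal{S}_k-\{\mathbbm{i}(k)\}$. Because $Z_j$ in $Y_j=Y_{\mathbbm{i}(k)}+Z_j$ is independent of $\bar{Y}_\mathcal{K}$ and of $\bar{U}_\mathcal{K}$, the MMSE estimate obeys $E[Y_j\mid U_\mathcal{L}]=E[Y_{\mathbbm{i}(k)}\mid\bar{U}_\mathcal{K}]$, so the achieved distortion on $Y_j$ is $\bar{D}_k+\sigma_{Z_j}^2$, which is at most $D_j$ by (\ref{BDcond3}); meanwhile the distortions on the leaders are unchanged. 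The same independence gives $I(Y_\mathcal{L};U_\mathcal{L})=I(\bar{Y}_\mathcal{K};\bar{U}_\mathcal{K})$, and more generally makes every Slepian--Wolf mutual information in Lemma \ref{lemmaBT} collapse to its $K$-terminal counterpart, so the $L$-terminal BT sum-rate is upper-bounded by the $K$-terminal BT sum-rate, which is the first inequality.

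For the right (converse) inequality, I would reduce an arbitrary $L$-terminal code $(\phi_\mathcal{L}^{(n)},\psi_\mathcal{L}^{(n)})$ that achieves $(\Sigma_{Y_\mathcal{L}},D_\mathcal{L})$ to a $K$-terminal code achieving $(\Sigma_{\bar{Y}_\mathcal{K}},\bar{D}_\mathcal{K})$ with identical sum-rate. Core encoder $k$, observing $\bar{\boldsymbol{Y}}_k=\boldsymbol{Y}_{\mathbbm{i}(k)}$, uses private randomness to independently generate the $\boldsymbol{Z}_j$'s for $j\in\mathcal{S}_k-\{\mathbbm{i}(k)\}$, thereby synthesizing $\boldsymbol{Y}_j=\boldsymbol{Y}_{\mathbbm{i}(k)}+\boldsymbol{Z}_j$ with exactly the correct joint law, and then runs each $\phi_j^{(n)}$ for $j\in\mathcal{S}_k$ locally and transmits the concatenated output at total rate $\sum_{j\in\mathcal{S}_k}R_j^{(n)}$. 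The decoder applies the original $\psi_\mathcal{L}^{(n)}$ and returns the $K$ leader reconstructions, whose mean-squared errors are by construction $\le D_{\mathbbm{i}(k)}=\bar{D}_k$. Summing over $k$ yields a valid $K$-terminal scheme with sum-rate $\sum_{j=1}^L R_j^{(n)}$, hence $R_{sum}(\Sigma_{\bar{Y}_\mathcal{K}},\bar{D}_\mathcal{K})\le R_{sum}(\Sigma_{{Y}_\mathcal{L}},{D}_\mathcal{L})$.

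The main obstacle is bookkeeping rather than depth. The achievability step requires a limiting argument: the BT sum-rate in Lemma \ref{lemmaBT} is defined over strictly positive definite diagonal $\Sigma_{Q_\mathcal{L}}$, so the ``$\sigma_{Q_j}^2\to\infty$'' choice must be justified by continuity of the log-determinant objective and of the MMSE distortion constraint as the redundant $\sigma_{Q_j}^2$'s tend to $\infty$. The converse step requires verifying that the locally-simulated $\boldsymbol{Z}_j$'s reproduce the original joint distribution at the decoder, which is immediate from the block-degraded structure (\ref{BDcond1})--(\ref{BDcond3}), where the non-leader sources are \emph{genuinely} conditionally independent of everything else given their group leader. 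Once these two points are handled, the sandwich closes and Lemma \ref{lemmaBD} follows.
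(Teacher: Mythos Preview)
Your proposal is correct and follows essentially the same approach as the paper: both arguments hinge on the simulation reduction in which each group leader locally generates the independent noises $\boldsymbol{Z}_j$ to synthesize its redundant sources and then runs all of the original group encoders, showing that any $L$-terminal scheme induces a $K$-terminal scheme with identical sum-rate. The paper packages this as a proof by contradiction (and dismisses the BT-rate equality $R_{sum}^{BT}(\Sigma_{Y_\mathcal{L}},D_\mathcal{L})=R_{sum}^{BT}(\Sigma_{\bar{Y}_\mathcal{K}},\bar{D}_\mathcal{K})$ as ``obvious''), whereas you frame it as an explicit four-term sandwich, but these are cosmetically different presentations of the same argument.
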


{\bf Remarks:}
\begin{itemize}
\item Although Wang {\em et al.}'s sufficient condition \cite{WangISIT09} for sum-rate tightness does not include any degraded case, one can easily use Lemma \ref{lemmaBD} to generate a BD example with tight sum-rate bound. In fact, with slight modifications (with details omitted), Wang {\em et al.}'s proof \cite{WangISIT09} can also be generalized to directly show sum-rate tightness for such BD cases without explicitly using Lemma \ref{lemmaBD}.
\item We note that Lemma \ref{lemmaBD} only guarantees the sum-rate tightness of a {\em subset} of the BD subclass of quadratic Gaussian MT source coding problems. Moreover, this subset intersects with the one defined by the sufficient condtion in Theorem \ref{thmmain}, as shown in the following numerical example.

\end{itemize}

A specific numerical example that satisfies the requirements in both Theorem \ref{thmmain} and Lemma \ref{lemmaBD} is as follows. Let $L=4$,
\begin{eqnarray}\Sigma_{Y_\mathcal{L}} &=& \left[
\begin{array}{rrrrrr}
    1.0000  &  0.9000 &   0.8000 & 0.8000\\
    0.9000  &  1.0000 &   0.7000  & 0.7000\\
    0.8000  &  0.7000 &   1.0000 & 1.0000\\
    0.8000  &  0.7000 &   1.0000 & 1.1000
\end{array}\right],
\end{eqnarray}
and
\begin{eqnarray}
D_\mathcal{L}&=&(0.3760, 0.35, 0.3, 0.5)^T,
\end{eqnarray}
The optimal BT distortion matrix is
\begin{eqnarray}\tilde{\boldsymbol{D}} &=& \left[
\begin{array}{rrrrrr}
    0.3760  &  0.2740  &  0.1818 &   0.1818\\
    0.2740  &  0.3500  &  0.1231 &   0.1231\\
    0.1818  &  0.1231  &  0.3000 &   0.3000\\
    0.1818  &  0.1231  &  0.3000 &   0.4000
\end{array}\right],
\end{eqnarray}
hence this example is degraded since $D_4=0.5$ is not achieved with equality in the optimal BT distortion matrix $\tilde{\boldsymbol{D}}$.

We first verify that this example satisfies the sufficient condition in Theorem \ref{thmmain}. Let $\pi=\{1,2,3,4\}$ and
\begin{eqnarray}\Sigma_{N_\mathcal{L}} &=& \left[
\begin{array}{rrrrr}
    0.2942  &  0.2852  &       0 &0\\
    0.2852  &  0.4535  &       0 &0\\
         0  &       0  &  0.0923 &0\\
         0  &       0  &  0&0.1923
\end{array}\right]\label{eqn69}
\end{eqnarray}
be a $\pi^{(K)}$ p.d. block diagonal matrix with $K=1$. Then $M=4$,
\begin{eqnarray}
\Sigma_{X_\mathcal{M}} &=& \left[
\begin{array}{rrrrr}
    3.1162    &     0  &       0  &       0\\
         0    &0.0923  &       0  &       0\\
         0    &     0  &  0.0377  &       0\\
         0    &     0  &       0  &  0.0061
\end{array}\right],\nonumber\\\boldsymbol{H} &=& \left[
\begin{array}{rrrrr}
   -0.4712  & -0.4130 &  -0.5511  & -0.5511\\
   0  & 0 &   0.7071  & -0.7071\\
    0.5417  &  0.5619 &  -0.4421  & -0.4421\\
   -0.6961  &  0.7167 &   0.0290  &  0.0290
\end{array}\right].
\end{eqnarray}
Now the following p.s.d. matrices
\begin{eqnarray}
{\boldsymbol{\Lambda}} &=& \left[
\begin{array}{rrrrr}
    0.2248  &  0.2489  &  0.0967 &   0.0967\\
    0.2489  &  0.2791  &  0.1075 &   0.1075\\
    0.0967  &  0.1075  &  0.0783 &   0\\
    0.0967  &  0.1075  &  0 &   0.1923
\end{array}\right],\\
{\boldsymbol{\Omega}} &=& \left[
\begin{array}{rrrrr}
         0    &     0     &    0    &     0\\
         0    &     0     &    0    &     0\\
         0    &     0     &    0    &     0\\
         0    &     0     &    0    &     0.1000
\end{array}\right],\\
{\boldsymbol{\Theta_1}} &=& \left[
\begin{array}{rrrrr}
         0    &     0\\
         0    &     0
\end{array}\right],\\
{\boldsymbol{\Pi}} &=& \left[
\begin{array}{rrrrr}
          1.0377     &     0    &     0 &        0\\
         0    &     1.8957      &   0  &       0\\
         0    &     0        & 2.6331 &        0\\
         0    &     0        & 0 &        0
\end{array}\right],\\
\tilde{\boldsymbol{\Gamma}} &=& \left[
\begin{array}{rrrrr}
    0.2248  &  0.1753  &  0 &0\\
    0.1753  &  0.2791  &0   & 0\\
    0  & 0   & 0.0783  & 0\\
    0  &  0  & 0  &  0.1923
\end{array}\right]\label{eqn75}
\end{eqnarray}
satisfy all the KKT conditions. Note that $\tilde{\boldsymbol{\Gamma}}$ in (\ref{eqn75}) has the same structure as $\Sigma_{N_\mathcal{L}}$ in (\ref{eqn69}), which is consistent with Lemma \ref{lemmasamestruct}. In addition, $\bbdelta(\Sigma_{Y_\mathcal{L}},D_\mathcal{L})$ is a line segment
\begin{eqnarray}
\bbdelta(\Sigma_{Y_\mathcal{L}},D_\mathcal{L})\Big\}&=&\left\{\alpha\cdot\left[
\begin{array}{rrrrr}   0.2248 &  0.2505\\   0.2505 &  0.2791\end{array}\right]+(1-\alpha)\cdot\left[\begin{array}{rrrrr}
   0.2248  & 0.1001\\   0.1001 &  0.2791\end{array}\right]:\alpha\in[0,1]\right\}.
\end{eqnarray}

On the other hand, it is easy to verify that $(\Sigma_{Y_\mathcal{L}},D_\mathcal{L})$ is a BD pair with
\begin{eqnarray}
\mathcal{P}&=&\Big\{\{1\},\{2\},\{3,4\}\Big\},\nonumber\\
\Sigma_{\bar{Y}_\mathcal{K}} &=& \left[
\begin{array}{rrr}
    1.0000  &  0.9000 &   0.8000 \\
    0.9000  &  1.0000 &   0.7000  \\
    0.8000  &  0.7000 &   1.0000
\end{array}\right],\nonumber\\
\Sigma_{Z_\mathcal{L}}&=&\mathrm{diag}(0, 0, 0, 0.1),\nonumber\\
\bar{D}_\mathcal{K}&=&(0.3760, 0.35, 0.3)^T,\nonumber
\end{eqnarray}
and the induced three-terminal quadratic Gaussian MT source coding problem defined by $(\Sigma_{\bar{Y}_\mathcal{K}},\bar{D}_\mathcal{L})$ has a tight sum-rate bound due to Theorem \ref{thmmain}. Hence we conclude that the above four-terminal numerical example of quadratic Gaussian MT source coding problem also satisfies the simple sufficient condition in Lemma \ref{lemmaBD}.

\section{A simplified sufficient condition}

Although the sufficient condition given in Theorem \ref{thmmain} is more inclusive than that in \cite{WangISIT09}, it is rather complicated and hard to verify. However, in the {\em non-degraded} case where the optimal BT scheme quantizes every source, and achieves all $L$ target distortions with equalities, the sufficient condition in Theorem 2 can be further simplified. Note that the non-degraded case is of special interest since all the previously known quadratic Gaussian MT source coding problems with tight sum-rate bound belong to this case.

\begin{corollary}\label{corollarymain}
For an MT source coding problem defined by $\Sigma_{Y_\mathcal{L}}$ and $D_\mathcal{L}$, if the optimal BT distortion matrix $\tilde{\boldsymbol{D}}$ satisfies $\mathrm{diag}(\tilde{\boldsymbol{D}})=D_\mathcal{L}$ and $\tilde{\boldsymbol{D}}^{-1}-\Sigma_{Y_\mathcal{L}}^{-1}$ is a p.d. matrix, then $R_{sum}^{BT}(\Sigma_{Y_\mathcal{L}},D_\mathcal{L})=R_{sum}(\Sigma_{Y_\mathcal{L}},D_\mathcal{L})$ if there exists a permutation $\pi$ and a $\pi^{(K)}$ block diagonal p.d. matrix $\Sigma_{N_\mathcal{L}}$ such that $\Sigma_{N_\mathcal{L}}\preceq\Sigma_{Y_\mathcal{L}}$,
\begin{eqnarray}
{\boldsymbol{\Lambda}}\stackrel{\Delta}{=}\tilde{\boldsymbol{D}}\Big({\boldsymbol{\Pi}}-\tilde{\boldsymbol{D}}^{-1} +\tilde{\boldsymbol{D}}^{-1}(\tilde{\boldsymbol{D}}^{-1}+\Sigma_{N_\mathcal{L}}^{-1}-\Sigma_{Y_\mathcal{L}}^{-1})^{-1} \tilde{\boldsymbol{D}}^{-1}\Big)\tilde{\boldsymbol{D}}\label{Lambdacorollary}
\end{eqnarray}
is a p.s.d. matrix, and
\begin{eqnarray}
\mathrm{~sign}(\big[{\tilde{\boldsymbol{\Gamma}}}\big]_{\pi_{2k-1},\pi_{2k-1}}) \cdot\big[\boldsymbol{\Lambda}\big]_{\pi_{2k-1},\pi_{2k}}&\ge& 2\Big|\big[{\boldsymbol{\Gamma}}\big]_{\pi_{2k-1},\pi_{2k}}\Big| -\sqrt{\big[{\boldsymbol{\Gamma}}\big]_{\pi_{2k-1},\pi_{2k-1}}\big[{\boldsymbol{\Gamma}}\big]_{\pi_{2k},\pi_{2k}}}\label{corcond}
\end{eqnarray}
is satisfied for all $k\in\mathcal{K}$, where $\tilde{\boldsymbol{\Gamma}}$ is defined in (\ref{Gammadef}) and
\begin{eqnarray}
{\boldsymbol{\Pi}}\stackrel{\Delta}{=}\mathrm{diag}\Big((\tilde{\boldsymbol{D}}\odot\tilde{\boldsymbol{D}})^{-1}D_\mathcal{L}\Big),\label{Pidef}
\end{eqnarray}
with $\odot$ denoting Hadamard product (entrywise product).
\end{corollary}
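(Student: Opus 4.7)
The plan is to specialize Theorem \ref{thmmain} by choosing the Lagrange multipliers $\boldsymbol{\Omega}$, $\boldsymbol{\Theta}_j$, and $\boldsymbol{\Pi}$ in a particular way so that the full set of subgradient KKT conditions collapses to the two explicit requirements stated in the corollary. Specifically, I would take $\boldsymbol{\Omega}=\boldsymbol{0}$, $\boldsymbol{\Theta}_j=\boldsymbol{0}$ for every $j\in\mathcal{K}$, and $\boldsymbol{\Pi}$ as defined in (\ref{Pidef}), and then check (\ref{mainKKT1})--(\ref{mainKKT6}) one by one.

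First, the non-degeneracy $\mathrm{diag}(\tilde{\boldsymbol{D}})=D_\mathcal{L}$ makes (\ref{mainKKT6}) trivial for every $\boldsymbol{\Pi}$, and the strict positive definiteness of $\tilde{\boldsymbol{D}}^{-1}-\Sigma_{Y_\mathcal{L}}^{-1}$ means $\Sigma_{Y_\mathcal{L}}^{-1}-\tilde{\boldsymbol{D}}^{-1}$ is invertible, so (\ref{mainKKT4}) forces $\boldsymbol{\Omega}=\boldsymbol{0}$. With this, (\ref{mainKKT1}) reduces exactly to (\ref{Lambdacorollary}) and the p.s.d.\ hypothesis on $\boldsymbol{\Lambda}$ takes care of the remaining positivity requirement. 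Next I would compute the diagonal of $\boldsymbol{\Lambda}$: because $\boldsymbol{\Pi}$ is diagonal, $[\tilde{\boldsymbol{D}}\boldsymbol{\Pi}\tilde{\boldsymbol{D}}]_{j,j}=\sum_i[\tilde{\boldsymbol{D}}]_{j,i}^2[\boldsymbol{\Pi}]_{i,i}$, so the choice (\ref{Pidef}) is precisely the linear system $(\tilde{\boldsymbol{D}}\odot\tilde{\boldsymbol{D}})\mathrm{diag}(\boldsymbol{\Pi})=D_\mathcal{L}$ that makes $[\boldsymbol{\Lambda}]_{j,j}=[\tilde{\boldsymbol{\Gamma}}]_{j,j}$ for \emph{every} $j\in\mathcal{L}$, which in particular gives (\ref{mainKKT3}) for $k=2K{+}1,\ldots,L$. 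Non-negativity of the entries of $\boldsymbol{\Pi}$ (so that $\boldsymbol{\Pi}$ is admissible as a Lagrange multiplier) can be traced back to the KKT conditions of the interior BT optimization at $\tilde{\boldsymbol{D}}$, where these entries coincide with the dual variables of the individual distortion constraints $[\boldsymbol{D}]_{j,j}\le D_j$.

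With $\boldsymbol{\Theta}_j=\boldsymbol{0}$, condition (\ref{mainKKT5}) is automatic, and the last remaining condition (\ref{mainKKT2}) asks that $\langle\boldsymbol{\Lambda}\rangle^\pi_j\in\bbdelta(\langle\Sigma_{N_\mathcal{L}}\rangle^\pi_j,\mathrm{diag}(\langle\tilde{\boldsymbol{\Gamma}}\rangle^\pi_j))$. By Lemma \ref{lemmasubdiff}, the set $\bbdelta$ is a line segment of $2\times2$ matrices whose diagonals are fixed at $\mathrm{diag}(\langle\tilde{\boldsymbol{\Gamma}}\rangle^\pi_j)$ and whose off-diagonals sweep a closed interval with endpoints $(2|\tilde\theta|-1)\,s\sqrt{[\tilde{\boldsymbol{\Gamma}}]_{\pi_{2k-1},\pi_{2k-1}}[\tilde{\boldsymbol{\Gamma}}]_{\pi_{2k},\pi_{2k}}}$ and $s\sqrt{[\tilde{\boldsymbol{\Gamma}}]_{\pi_{2k-1},\pi_{2k-1}}[\tilde{\boldsymbol{\Gamma}}]_{\pi_{2k},\pi_{2k}}}$. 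Because the diagonals of $\langle\boldsymbol{\Lambda}\rangle^\pi_j$ already agree with those of $\langle\tilde{\boldsymbol{\Gamma}}\rangle^\pi_j$ from the previous step, only the off-diagonal of $\langle\boldsymbol{\Lambda}\rangle^\pi_j$ needs to be placed in the interval. The upper end, $s\sqrt{[\tilde{\boldsymbol{\Gamma}}]_{\pi_{2k-1},\pi_{2k-1}}[\tilde{\boldsymbol{\Gamma}}]_{\pi_{2k},\pi_{2k}}}$, is automatic from $\boldsymbol{\Lambda}\succeq\boldsymbol{0}$ (off-diagonals of a p.s.d.\ matrix are bounded by the geometric mean of the corresponding diagonal entries), and the lower end reduces, after algebraic simplification, to exactly the inequality (\ref{corcond}).

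The main obstacle, in my view, is the last identification: rewriting the lower endpoint $(2|\tilde\theta|-1)\,s\sqrt{[\tilde{\boldsymbol{\Gamma}}]_{\pi_{2k-1},\pi_{2k-1}}[\tilde{\boldsymbol{\Gamma}}]_{\pi_{2k},\pi_{2k}}}$ of the subgradient segment in the form $2|[\tilde{\boldsymbol{\Gamma}}]_{\pi_{2k-1},\pi_{2k}}|-\sqrt{[\tilde{\boldsymbol{\Gamma}}]_{\pi_{2k-1},\pi_{2k-1}}[\tilde{\boldsymbol{\Gamma}}]_{\pi_{2k},\pi_{2k}}}$ appearing in (\ref{corcond}), and correctly tracking the sign factor $s$. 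This identification is expected to hinge on the fact that the optimal $\tilde{\boldsymbol{\Gamma}}$ sits exactly at the non-differentiable kink of $\underline{R}_{sum}$, so that the correlation coefficient of $\langle\tilde{\boldsymbol{\Gamma}}\rangle^\pi_j$ equals the threshold $\tilde\theta$ defined in (\ref{thetastardef}); otherwise, $\underline{R}_{sum}$ would be smooth at this point and the subdifferential would degenerate to a single gradient, making the condition much more stringent.
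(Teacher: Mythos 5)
Your proposal is correct and follows essentially the same route as the paper's proof: forcing $\boldsymbol{\Omega}=\boldsymbol{0}$ and $\boldsymbol{\Theta}_j=\boldsymbol{0}$ from the strict positive definiteness assumptions, deriving (\ref{Pidef}) from the diagonal-matching requirement $\mathrm{diag}(\boldsymbol{\Lambda})=\mathrm{diag}(\tilde{\boldsymbol{\Gamma}})$, and reducing the subdifferential membership (\ref{mainKKT2}) to an interval condition whose upper endpoint is automatic from $\boldsymbol{\Lambda}\succeq\boldsymbol{0}$ and whose lower endpoint is exactly (\ref{corcond}). The identification of the segment endpoint $(2|\tilde\theta|-1)s\sqrt{[\tilde{\boldsymbol{\Gamma}}]_{\pi_{2k-1},\pi_{2k-1}}[\tilde{\boldsymbol{\Gamma}}]_{\pi_{2k},\pi_{2k}}}$ with $s\bigl(2|[\tilde{\boldsymbol{\Gamma}}]_{\pi_{2k-1},\pi_{2k}}|-\sqrt{[\tilde{\boldsymbol{\Gamma}}]_{\pi_{2k-1},\pi_{2k-1}}[\tilde{\boldsymbol{\Gamma}}]_{\pi_{2k},\pi_{2k}}}\bigr)$ that you flag as the main obstacle is exactly the substitution the paper makes in its equation chain, so there is no gap.
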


\begin{proof}
First, due to the assumption that $\tilde{\boldsymbol{D}}^{-1}-\Sigma_{Y_\mathcal{L}}^{-1}\succ\boldsymbol{0}$, (\ref{mainKKT4}) implies that $\boldsymbol{\Omega}=\boldsymbol{0}$, which, combined with (\ref{mainKKT1}), directly leads to (\ref{Lambdacorollary}). On the other hand, $\tilde{\boldsymbol{D}}^{-1}-\Sigma_{Y_\mathcal{L}}^{-1}\succ\boldsymbol{0}$ also ensures that
\begin{eqnarray}
\tilde{\boldsymbol{\Gamma}}&=&\Big(\tilde{\boldsymbol{D}}^{-1}+\Sigma_{N_\mathcal{L}}^{-1}-\Sigma_{Y_\mathcal{L}}^{-1}\Big)^{-1}~\prec~\Sigma_{N_\mathcal{L}},
\end{eqnarray}
hence (\ref{mainKKT5}) is true if and only if $\boldsymbol{\Theta}_j=\boldsymbol{0}$ for all $j\in\mathcal{K}$.

Now (\ref{mainKKT2}) becomes
\begin{eqnarray}
\langle\boldsymbol{\Lambda}\rangle^\pi_j-\bbdelta\left(\langle\Sigma_{N_\mathcal{L}}\rangle^\pi_j, \mathrm{diag}(\langle\tilde{\boldsymbol{\Gamma}}\rangle^\pi_j)\right)&\ni&\boldsymbol{0},\forall j\in\mathcal{K},
\end{eqnarray}
then due to the fact that all $2\times 2$ matrices in $\bbdelta\left(\langle\Sigma_{N_\mathcal{L}}\rangle^\pi_j, \mathrm{diag}(\langle\tilde{\boldsymbol{\Gamma}}\rangle^\pi_j)\right)$ have the same diagonal elements as those of $\langle\tilde{\boldsymbol{\Gamma}}\rangle^\pi_j$,
we know that
\begin{eqnarray}
\big[{\boldsymbol{\Lambda}}\big]_{\pi_k,\pi_k}&=&\big[\tilde{\boldsymbol{\Gamma}}\big]_{\pi_k,\pi_k},~\forall~k=1,2,...,2K.\label{mainKKT3else}
\end{eqnarray}
Hence by combining (\ref{mainKKT3}) and (\ref{mainKKT3else}), we obtain
\begin{eqnarray}
&\mathrm{diag}(\boldsymbol{\Lambda})&=~\mathrm{diag}(\tilde{\boldsymbol{\Gamma}})\nonumber\\
\Leftrightarrow&\mathrm{diag}(\tilde{\boldsymbol{D}}\Big({\boldsymbol{\Pi}}-\tilde{\boldsymbol{D}}^{-1} +\tilde{\boldsymbol{D}}^{-1}(\tilde{\boldsymbol{D}}^{-1}+\Sigma_{N_\mathcal{L}}^{-1}-\Sigma_{Y_\mathcal{L}}^{-1})^{-1}\tilde{\boldsymbol{D}}^{-1}\Big)\tilde{\boldsymbol{D}})&=~ \mathrm{diag}(\Big(\tilde{\boldsymbol{D}}^{-1}+\Sigma_{N_\mathcal{L}}^{-1}-\Sigma_{Y_\mathcal{L}}^{-1}\Big)^{-1})\nonumber\\
\Leftrightarrow&\mathrm{diag}(\tilde{\boldsymbol{D}}{\boldsymbol{\Pi}}\tilde{\boldsymbol{D}})&=~ \mathrm{diag}(\tilde{\boldsymbol{D}})\nonumber\\
\Leftrightarrow&\sum_{j=1}^L[\tilde{\boldsymbol{D}}]_{i,j}^2\cdot[\boldsymbol{\Pi}]_{j,j}&=~[\tilde{\boldsymbol{D}}]_{i,i},~\forall~i\in\mathcal{L}\nonumber\\
\Leftrightarrow&(\tilde{\boldsymbol{D}}\odot \tilde{\boldsymbol{D}})\mathrm{diag}(\boldsymbol{\Pi})&=\mathrm{diag}(\tilde{\boldsymbol{D}})~=~D_\mathcal{L}\nonumber\\
\Leftrightarrow&\mathrm{diag}(\boldsymbol{\Pi})&=~(\tilde{\boldsymbol{D}}\odot \tilde{\boldsymbol{D}})^{-1}D_\mathcal{L},
\end{eqnarray}
and (\ref{Pidef}) is proved.

Finally, (\ref{mainKKT3else}) holds if there exists an $\alpha\in[0,1]$ such that
\begin{eqnarray}
\big[\boldsymbol{\Lambda}\big]_{\pi_{2k-1},\pi_{2k}}&=&\left(\alpha +(1-\alpha)(2{\Big|\frac{\big[{\tilde{\boldsymbol{\Gamma}}}\big]_{\pi_{2k-1},\pi_{2k}}} {\sqrt{\big[{\tilde{\boldsymbol{\Gamma}}}\big]_{\pi_{2k-1},\pi_{2k-1}}\big[{\tilde{\boldsymbol{\Gamma}}}\big]_{\pi_{2k},\pi_{2k}}}}\Big|}-1)\right) \nonumber\\
&&\cdot\mathrm{~sign}(\big[{\tilde{\boldsymbol{\Gamma}}}\big]_{\pi_{2k-1},\pi_{2k-1}}) \sqrt{\big[{\tilde{\boldsymbol{\Gamma}}}\big]_{\pi_{2k-1},\pi_{2k-1}}\big[{\tilde{\boldsymbol{\Gamma}}}\big]_{\pi_{2k},\pi_{2k}}}.\label{corcond1}
\end{eqnarray}
Now (\ref{corcond1}) is equivalent to
\begin{eqnarray}
\mathrm{~sign}(\big[{\tilde{\boldsymbol{\Gamma}}}\big]_{\pi_{2k-1},\pi_{2k-1}}) \cdot\big[\boldsymbol{\Lambda}\big]_{\pi_{2k-1},\pi_{2k}}&\le& \sqrt{\big[{\tilde{\boldsymbol{\Gamma}}}\big]_{\pi_{2k-1},\pi_{2k-1}}\big[{\tilde{\boldsymbol{\Gamma}}}\big]_{\pi_{2k},\pi_{2k}}}\label{corcond21}\\
\mathrm{and~}\mathrm{~sign}(\big[{\tilde{\boldsymbol{\Gamma}}}\big]_{\pi_{2k-1},\pi_{2k-1}}) \cdot\big[\boldsymbol{\Lambda}\big]_{\pi_{2k-1},\pi_{2k}}&\ge&2\Big|\big[{\tilde{\boldsymbol{\Gamma}}}\big]_{\pi_{2k-1},\pi_{2k}}\Big|- \sqrt{\big[{\tilde{\boldsymbol{\Gamma}}}\big]_{\pi_{2k-1},\pi_{2k-1}}\big[{\tilde{\boldsymbol{\Gamma}}}\big]_{\pi_{2k},\pi_{2k}}}\label{corcond22},
\end{eqnarray}
where (\ref{corcond21}) is automatically satisfied since
\begin{eqnarray}
\big[{{\boldsymbol{\Lambda}}}\big]_{\pi_{2k-1},\pi_{2k-1}}=\big[{\tilde{\boldsymbol{\Gamma}}}\big]_{\pi_{2k-1},\pi_{2k-1}},~ \big[{{\boldsymbol{\Lambda}}}\big]_{\pi_{2k},\pi_{2k}}=\big[{\tilde{\boldsymbol{\Gamma}}}\big]_{\pi_{2k},\pi_{2k}},~\mathrm{and~} \langle\boldsymbol{\Lambda}\rangle^\pi_j\succeq\boldsymbol{0}.
\end{eqnarray}
Hence (\ref{corcond}) must hold.
\end{proof}

\vspace{0.05in}
\noindent$\bullet$\hspace{0.08in}{Example 2:} the block-circulant case

We study a special class of quadratic Gaussian MT source coding problem named the block-circulant case.

Let $L=2m$ be an even number, and assume that the source covariance matrix $\Sigma_{Y_\mathcal{L}}$ is {\em block-circulant}, i.e., it is of the form
\begin{eqnarray}
\Sigma_{Y_\mathcal{L}}&=&
\left[\begin{array}{ccccc}
B_1&B_2&B_3&...&B_m\\
B_m&B_1&B_2&...&B_{m-1}\\
...&...&...&...&...\\
B_2&B_3&B_4&...&B_1\\\end{array}\right],\nonumber
\end{eqnarray}
where $B_i=B_{m+2-i}$ for $i=2,3,...,m$ are p.d. symmetric $2\times 2$ blocks of the form
\begin{eqnarray}
B_i&=&\left[\begin{array}{cc}b_{i,1} &b_{i,2}\\b_{i,2} &b_{i,1}\end{array}\right].
\end{eqnarray}
Denote $\mathbbm{C}_L$ as the set of all $L\times L$ block-circulant matrices. We state several important properties of block-circulant matrices.
\begin{itemize}
\item Any $\Sigma\in\mathbbm{C}_L$ can be diagonalized by
\begin{eqnarray}
\boldsymbol{G}_L&\stackrel{\Delta}{=}&\boldsymbol{F}_m\otimes\boldsymbol{F}_2,
\end{eqnarray}
with $\otimes$ denoting Kronecker product, and $\boldsymbol{F}_m$ being the $m\times m$ {\em real Fourier matrix} \cite{Allerton:09} (which is orthogonal with
$\boldsymbol{F}_m\boldsymbol{F}_m^T=\boldsymbol{I}_m$). For example, when $L=6$,
\begin{eqnarray}
\boldsymbol{G}_6&=&\boldsymbol{F}_3\otimes\boldsymbol{F}_2~=~\left[\begin{array}{rrrrrr}
    0.4082  &   0.4082  &      0  &     0  &   0.5774  &   0.5774\\
    0.4082  &  -0.4082  &     0  &     0  &   0.5774  &  -0.5774\\
    0.4082  &   0.4082  &   0.5000  &   0.5000  &  -0.2887  &  -0.2887\\
    0.4082  &  -0.4082  &   0.5000  &  -0.5000  &  -0.2887  &   0.2887\\
    0.4082  &   0.4082  &  -0.5000  &  -0.5000  &  -0.2887  &  -0.2887\\
    0.4082  &  -0.4082  &  -0.5000  &   0.5000  &  -0.2887  &   0.2887
    \end{array}\right].
\end{eqnarray}

\item $\mathbbm{C}_L$ is a ring under matrix addition and multiplication. In particular, $\mathbbm{C}_L$ is closed under the following operation
\begin{eqnarray}
\boldsymbol{A}\star\boldsymbol{B}&\stackrel{\Delta}{=}&\boldsymbol{A}-\boldsymbol{A}(\boldsymbol{A}+\boldsymbol{B})^{-1}\boldsymbol{A} ~=~\boldsymbol{B}-\boldsymbol{B}(\boldsymbol{A}+\boldsymbol{B})^{-1}\boldsymbol{B}~\in~\mathbbm{C}_L, \mathrm{~for~any~}\boldsymbol{A},\boldsymbol{B}\in\mathbbm{C}_L.
\end{eqnarray}

\item For any $\boldsymbol{A}\in\mathbbm{C}_L$, there are $2\cdot\lceil\frac{L+1}{2}\rceil$ degrees of freedom in the $L$ eigenvalues of $\boldsymbol{A}$, with $\lceil x\rceil$ denoting the smallest integer larger than $x$.
\end{itemize}

We say a quadratic Gaussian MT source coding problem belongs to the {\em block-circulant} case if the source covariance matrix is block-circulant and all the target distortions are equal, i.e., $\Sigma_{Y_\mathcal{L}}\in\mathbbm{C}_L$ and $D_\mathcal{L}=D\cdot\boldsymbol{1}$. An important fact for this special case, which follows directly from the properties of block-circulant matrices, is that the optimal BT distortion matrix can be expressed analytically with
\begin{eqnarray}
\tilde{\boldsymbol{D}}&=&\Sigma_{Y_\mathcal{L}}\star q\boldsymbol{I}_L,\label{DBTdefBC}
\end{eqnarray}
where $q$ satisfies
\begin{eqnarray}
\sum_{i=1}^L \frac{1}{\frac{1}{\lambda_i}+\frac{1}{q}}&=&LD,
\end{eqnarray}
with $\lambda_i$, $i\in\mathcal{L}$ being the $L$ eigenvalues of $\Sigma_{Y_\mathcal{L}}$.

Now we are ready to investigate the tightness condition provided by Wang {\em et al.} \cite{WangISIT09} for this block-circulant case, which is given in the following lemma, the proof is detailed in Appendix F.

\begin{lemma}\label{lemmaWangconditionBC}
For any block-circulant quadratic Gaussian MT source coding problem, Wang {\em et al.}'s tightness condition \cite[Lemma 4]{WangISIT09} for the sum-rate bound to be tight is equivalent to
\begin{eqnarray}
\mathrm{diag}\Big((\tilde{\boldsymbol{D}}\odot\tilde{\boldsymbol{D}})^{-1}D\boldsymbol{1}\Big)&\succeq& \tilde{\boldsymbol{D}}^{-1}-\tilde{\boldsymbol{D}}^{-1}(\tilde{\boldsymbol{D}}^{-1}+\lambda_{min}^{-1}\boldsymbol{I}_L -\Sigma_{Y_\mathcal{L}}^{-1})^{-1}\tilde{\boldsymbol{D}}^{-1},\label{WangcondBC}
\end{eqnarray}
with $\tilde{\boldsymbol{D}}$ defined in (\ref{DBTdefBC}) and $\lambda_{min}$ being the smallest eigenvalue of $\Sigma_{Y_\mathcal{L}}$.
\end{lemma}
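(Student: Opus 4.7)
The plan is to exploit the block-circulant symmetry to collapse Wang \emph{et al.}'s three free parameters $(\boldsymbol{D},\Sigma_{N_\mathcal{L}},\boldsymbol{\Pi})$ in Theorem \ref{thmWang} down to the single inequality (\ref{WangcondBC}). First I would note that since the equal-distortion constraint and $\Sigma_{Y_\mathcal{L}}\in\mathbbm{C}_L$ are both invariant under the block-circulant shift group, and since $\mathbbm{C}_L$ is closed under $\star$, the optimal BT distortion matrix $\tilde{\boldsymbol{D}}=\Sigma_{Y_\mathcal{L}}\star q\boldsymbol{I}_L$ itself lies in $\mathbbm{C}_L$ and has equal diagonal entries $D$. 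Hence it is natural (and by symmetry lossless) to take $\boldsymbol{D}=\tilde{\boldsymbol{D}}$ when evaluating Wang \emph{et al.}'s condition.

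Next, among the admissible diagonal $\Sigma_{N_\mathcal{L}}\in\mathscr{N}(\Sigma_{Y_\mathcal{L}})$ I would argue that one may further restrict to scalar matrices $\Sigma_{N_\mathcal{L}}=\sigma^2\boldsymbol{I}_L$ by averaging any feasible $(\Sigma_{N_\mathcal{L}},\boldsymbol{\Pi})$ over the cyclic shift group and exploiting convexity of the p.s.d.\ cone. The admissibility constraint $\Sigma_{N_\mathcal{L}}\preceq\Sigma_{Y_\mathcal{L}}$ then reads $\sigma^2\le\lambda_{min}$, and since $\tilde{\boldsymbol{\Gamma}}(\sigma^2)=(\tilde{\boldsymbol{D}}^{-1}+\sigma^{-2}\boldsymbol{I}_L-\Sigma_{Y_\mathcal{L}}^{-1})^{-1}$ is monotone nondecreasing in $\sigma^2$, the p.s.d.\ requirement is loosest at the boundary $\sigma^2=\lambda_{min}$, producing the specific scalar appearing inside $\tilde{\boldsymbol{\Gamma}}$ on the right-hand side of (\ref{WangcondBC}).

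With $(\boldsymbol{D},\Sigma_{N_\mathcal{L}})=(\tilde{\boldsymbol{D}},\lambda_{min}\boldsymbol{I}_L)$ fixed, the matrix in Wang \emph{et al.}'s condition expands to $\tilde{\boldsymbol{D}}\boldsymbol{\Pi}\tilde{\boldsymbol{D}}-\tilde{\boldsymbol{D}}+\tilde{\boldsymbol{\Gamma}}$, and the ``same diagonal as $\tilde{\boldsymbol{\Gamma}}$'' clause reduces to $\mathrm{diag}(\tilde{\boldsymbol{D}}\boldsymbol{\Pi}\tilde{\boldsymbol{D}})=\mathrm{diag}(\tilde{\boldsymbol{D}})=D\boldsymbol{1}$. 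Since $\boldsymbol{\Pi}$ is diagonal this is the entrywise linear system $(\tilde{\boldsymbol{D}}\odot\tilde{\boldsymbol{D}})\mathrm{diag}(\boldsymbol{\Pi})=D\boldsymbol{1}$, which uniquely determines $\boldsymbol{\Pi}=\mathrm{diag}\bigl((\tilde{\boldsymbol{D}}\odot\tilde{\boldsymbol{D}})^{-1}D\boldsymbol{1}\bigr)$. Conjugating the remaining p.s.d.\ inequality $\tilde{\boldsymbol{D}}\boldsymbol{\Pi}\tilde{\boldsymbol{D}}-\tilde{\boldsymbol{D}}+\tilde{\boldsymbol{\Gamma}}\succeq\boldsymbol{0}$ on both sides by $\tilde{\boldsymbol{D}}^{-1}$ and substituting this $\boldsymbol{\Pi}$ then yields precisely (\ref{WangcondBC}).

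The main obstacle is the symmetrization step in the second paragraph: while it is intuitive that the symmetry of $(\Sigma_{Y_\mathcal{L}},D\boldsymbol{1})$ should permit a scalar choice of $\Sigma_{N_\mathcal{L}}$, making this rigorous requires showing that averaging over the cyclic shift group preserves feasibility of Wang \emph{et al.}'s matrix condition. The nonlinear dependence on $\Sigma_{N_\mathcal{L}}$ enters only through $\tilde{\boldsymbol{\Gamma}}$, so one would appeal to operator concavity of the matrix inverse to argue that shifting-and-averaging $\Sigma_{N_\mathcal{L}}^{-1}$ controls the resulting $\tilde{\boldsymbol{\Gamma}}$ in the direction needed for the p.s.d.\ inequality to persist under symmetrization.
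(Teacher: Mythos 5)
Your proposal follows essentially the same route as the paper's Appendix F: fix $\boldsymbol{D}=\tilde{\boldsymbol{D}}$ and $\boldsymbol{\Pi}=\mathrm{diag}\big((\tilde{\boldsymbol{D}}\odot\tilde{\boldsymbol{D}})^{-1}D\boldsymbol{1}\big)$ via the equal-diagonal clause, then symmetrize $\Sigma_{N_\mathcal{L}}$ over the symmetry group of $\Sigma_{Y_\mathcal{L}}$ and invoke operator concavity of $\Theta\mapsto(\tilde{\boldsymbol{D}}^{-1}+\Theta^{-1}-\Sigma_{Y_\mathcal{L}}^{-1})^{-1}$ together with $\frac{1}{L}\mathrm{tr}(\Theta)\le\lambda_{min}$, which is exactly the paper's argument. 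The only caveat is that the averaging group must include the simultaneous within-block coordinate swap (the paper's $\Theta^\ddagger_k$) in addition to the block-cyclic shifts, since the shifts alone produce a $2$-periodic rather than scalar averaged $\Theta$.
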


With Lemma \ref{lemmaWangconditionBC}, one can easily test whether Wang {\em et al.}'s tightness condition is satisfied by a block-circulant case of quadratic Gaussian MT source coding problem. For example, let $L=4$ and
\begin{eqnarray}\Sigma_{Y_\mathcal{L}} = \left[
\begin{array}{rrrrrr}
    1.0000  &  0.5000  &  0.9750  &  0.4800\\
    0.5000  &  1.0000  &  0.4800  &  0.9750\\
    0.9750  &  0.4800  &  1.0000  &  0.5000\\
    0.4800  &  0.9750  &  0.5000  &  1.0000
\end{array}\right]~\in~\mathbbm{C}_4,
\end{eqnarray}
and $D_\mathcal{L}=0.1362\cdot\boldsymbol{1}$.
Then the optimal BT distortion matrix is
\begin{eqnarray}\tilde{\boldsymbol{D}} = \left[
\begin{array}{rrrrrr}
    0.1362  &  0.0189  &  0.1142  &  0.0018\\
    0.0189  &  0.1362  &  0.0018  &  0.1142\\
    0.1142  &  0.0018  &  0.1362  &  0.0189\\
    0.0018  &  0.1142  &  0.0189  &  0.1362
\end{array}\right].
\end{eqnarray}

We first use Lemma \ref{lemmaWangconditionBC} to test Wang {\em et al.}'s tightness condition, which is not satisfied since
\begin{eqnarray}
\mathrm{diag}\Big((\tilde{\boldsymbol{D}}\odot\tilde{\boldsymbol{D}})^{-1}D\boldsymbol{1}\Big)&=&4.1631\boldsymbol{I}_4\nonumber\\&\nsucceq & \left[
\begin{array}{rrrrrr}
    7.5599  &  5.4290  & -3.6183  & -5.7492\\
    5.4290  &  7.5599  & -5.7492  & -3.6183\\
   -3.6183  & -5.7492  &  7.5599  &  5.4290\\
   -5.7492  & -3.6183  &  5.4290  &  7.5599
\end{array}\right]\nonumber\\&=&\tilde{\boldsymbol{D}}^{-1}-\tilde{\boldsymbol{D}}^{-1}(\tilde{\boldsymbol{D}}^{-1}+\lambda_{min}^{-1}\boldsymbol{I}_L -\Sigma_{Y_\mathcal{L}}^{-1})^{-1}\tilde{\boldsymbol{D}}^{-1}.
\end{eqnarray}

However, it is easy to verify that this example does satisfy the condition given in Corollary \ref{corollarymain}, since when $\pi=\{1,2,3,4\}$ and
\begin{eqnarray}\Sigma_{N_\mathcal{L}} = \left[
\begin{array}{rrrrr}
    0.0250  &  0.0200  &     0  &      0\\
    0.0200  &  0.0250  &     0  &      0\\
         0  &     0  &  0.0250  &  0.0200\\
         0  &     0  &  0.0200  &  0.0250
\end{array}\right]~\in~\Upsilon_2(\pi),
\end{eqnarray}
$\tilde{\boldsymbol{\Gamma}}$ and ${\boldsymbol{\Lambda}}$ defined in (\ref{Gammadef}) and (\ref{Lambdacorollary}) satisfy
\begin{eqnarray}
\mathrm{sign}([\tilde{\boldsymbol{\Gamma}}]_{2k-1,2k})\cdot[{\boldsymbol{\Lambda}}]_{2k-1,2k}&=&0.0219 \nonumber\\&\ge&0.0171~=~2[\tilde{\boldsymbol{\Gamma}}]_{2k-1,2k}-\sqrt{[\tilde{\boldsymbol{\Gamma}}]_{2k-1,2k-1}[\tilde{\boldsymbol{\Gamma}}]_{2k,2k}}, ~k=1,2.
\end{eqnarray}

{\bf Remarks:}
\begin{itemize}
\item Unlike the known cases with tight sum-rate bound including the two-terminal case \cite{Wagner05}, the positive-symmetric case \cite{Wagner05}, and the BEEV-ED case \cite{Allerton:09}, some of the block-circulant cases might not have a tight sum-rate bound if they do not satisfy the requirements in Corollary \ref{corollarymain}.
\item We pick the block-circulant case as an example mainly because of the nice properties in this case that enable us to analytically evaluate the sufficient condition in Theorem \ref{thmWang} without a full search over $\Sigma_{N_\mathcal{L}}\in\mathscr{N}(\Sigma_{Y_\mathcal{L}})$.
\end{itemize}

\vspace{0.05in}
\noindent$\bullet$\hspace{0.08in}{Example 3:} another numerical example

Now we give a general numerical example that satisfies the requirement of Corollary \ref{corollarymain}.

Let $L=3$,
\begin{eqnarray}\Sigma_{Y_\mathcal{L}} = \left[
\begin{array}{rrrrr}
    1.0000  &  0.9500  &  0.7000\\
    0.9500  &  1.0000  &  0.6000\\
    0.7000  &  0.6000  &  1.0000
\end{array}\right],
\end{eqnarray}
and
\begin{eqnarray}
D_\mathcal{L}=(0.4, 0.45, 0.3)^T.
\end{eqnarray}

Let $\pi=\{1,2,3\}$ and
\begin{eqnarray}\Sigma_{N_\mathcal{L}} = \left[
\begin{array}{rrrrr}
    0.4827  &  0.5074  &       0\\
    0.5074  &  0.6205  &       0\\
         0  &       0  &  0.0512
\end{array}\right]
\end{eqnarray}
be a $\pi^{(K)}$ p.d. block diagonal matrix with $K=1$.

Then the BT minimum sum-rate bound for the MT source coding problem defined by $\Sigma_{Y_\mathcal{L}}$ and $D_\mathcal{L}$ is tight, since $\tilde{\boldsymbol{\Gamma}}$ and ${\boldsymbol{\Lambda}}$ defined in (\ref{Gammadef}) and (\ref{Lambdacorollary}) satisfy
\begin{eqnarray}
\mathrm{sign}([\tilde{\boldsymbol{\Gamma}}]_{1,2})\cdot[{\boldsymbol{\Lambda}}]_{1,2}&=&0.3596 \nonumber\\&\ge&0.2815~=~2[\tilde{\boldsymbol{\Gamma}}]_{1,2}-\sqrt{[\tilde{\boldsymbol{\Gamma}}]_{1,1}[\tilde{\boldsymbol{\Gamma}}]_{2,2}}.\nonumber
\end{eqnarray}

We have shown that the sum-rate tightness in the above numerical example is ensured by Corollary \ref{corollarymain}. In addition, it can be verified numerically that it does not satisfy the tightness condition provided by Wang {\em et al.} \cite{WangISIT09}.

\section{Conclusions}

We have provided a new sufficient condition for the BT sum-rate bound of quadratic Gaussian MT source coding problem to be tight. The matrix-distortion constrained two-terminal source coding problem was investigated with partially tighter sum-rate bound given. This result was then used to derive a new lower bound on the sum-rate of quadratic Gaussian $L$-terminal source coding problem, which was shown to coincide with the BT sum-rate bound under certain subgradient-based KKT conditions. To highlight the superior inclusiveness of our new condition, examples were shown to satisfy the tightness condition derived in this paper (while excluded from the so far best known tightness condition given by Wang {\em et al.} \cite{WangISIT09}).

We are currently investigating possible generalizations of techniques used
in the current paper to allow $3\times 3$ (or even larger) blocks in the observation noise covariance matrix.

\vspace{0.1in}
\begin{center}
{\bf Appendix A:} Proof of Lemma \ref{lemma2term}
\end{center}
\vspace{0.1in}
\begin{proof}
Before proving Lemma \ref{lemma2term}, we define an equivalent two-terminal problem, with
\begin{eqnarray}\Sigma_{Y_\mathcal{L}}=\left[\begin{array}{cc}v_1^2&\rho v_1v_2\\\rho v_1v_2& v_2^2\end{array}\right],\quad\mathrm{~and~} \boldsymbol{D}=\left[\begin{array}{cc}1&\theta\\\theta&1\end{array}\right].
\end{eqnarray}
Then we need to prove $R_{sum}(\Sigma_{Y_\mathcal{L}},\boldsymbol{D})\ge R_{\mu}(\Sigma_{Y_\mathcal{L}},\boldsymbol{D})$ and $R_{sum}(\Sigma_{Y_\mathcal{L}},\boldsymbol{D})\ge R_{lb}(\Sigma_{Y_\mathcal{L}},\boldsymbol{D})$.

To prove Lemma \ref{lemma2term}, let
\begin{eqnarray}
X=Y_1+Y_2+Z,
\end{eqnarray}
where $Z\sim \mathcal{N}(0,\sigma_{Z}^2)$ with $\sigma_{Z}^2=\frac{v_1v_2(1-\rho^2)}{\rho}$. Then the variance of $X$ can be computed as $\sigma_X^2=\frac{(v_1+v_2\rho)(v_2+v_1\rho)}{\rho}$, and it can be easily verified that
{\begin{eqnarray}
Y_\mathcal{L}&=&[\alpha_1, \alpha_2]^T\cdot X + [\tilde{N}_1, \tilde{N}_2]^T,
\end{eqnarray}}
with $\alpha_1=\frac{v_1\rho}{v_2+v_1\rho}$, $\alpha_2=\frac{v_2\rho}{v_1+v_2\rho}$, $[\tilde{N}_1, \tilde{N}_2]^T\sim \mathcal{N}\Big(\boldsymbol{0},\mathrm{diag}(\sigma_{\tilde{N}_1}^2,\sigma_{\tilde{N}_2}^2)\Big)$, and $\sigma_{\tilde{N}_1}^2=\frac{v_1^2v_2(1-\rho^2)}{v_2+v_1\rho}$, $\sigma_{\tilde{N}_2}^2= \frac{v_2^2v_1(1-\rho^2)}{v_1+v_2\rho}$. Moreover, any scheme that achieves a distortion matrix $\boldsymbol{D}$ on $Y_\mathcal{L}$ must be able to achieve a distortion of $[1~1]\cdot\boldsymbol{D}\cdot[1~1]^T+\sigma_{Z}^2$ on $X$.

Hence
{\begin{eqnarray}
H(W_\mathcal{L})&=&I(\boldsymbol{Y}_\mathcal{L},\boldsymbol{X};W_\mathcal{L})\nonumber\\
&=&I(\boldsymbol{X};W_\mathcal{L})+\sum_{i=1}^2I(\boldsymbol{Y}_i;W_i|\boldsymbol{X})\label{Lemma2proof1}\\
&=&h(\boldsymbol{X})-h(\boldsymbol{X}|W_\mathcal{L})+\sum_{i=1}^2h(\boldsymbol{Y}_i|\boldsymbol{X})-h(\boldsymbol{Y}_i;W_i|\boldsymbol{X})\nonumber\\
&\ge&\frac{n}{2}\log\frac{\sigma_{X}^2}{[1~1]\cdot\boldsymbol{D}\cdot[1~1]^T+\sigma_{Z}^2}
+\frac{n}{2}\log \frac{\sigma_{\tilde{N}_1}^2\sigma_{\tilde{N}_2}^2}{\gamma_1\gamma_2}\label{proofLemma102}\\
&\ge&\frac{n}{2}\log\frac{\sigma_{X}^2}{2+2\theta+\frac{v_1v_2(1-\rho^2)}{\rho}}
+\frac{n}{2}\log \frac{v_1^3v_2^3(1-\rho^2)^2}{(v_2+v_1\rho)(v_1+v_2\rho)\gamma_1\gamma_2},\nonumber
\end{eqnarray}}
where (\ref{Lemma2proof1}) uses the fact that $W_i\rightarrow Y_i^n\rightarrow\boldsymbol{X}\rightarrow(Y_j^n,W_j)$ form a Markov chain for any $i,j\in\{1,2\}$ and $i\neq j$, in (\ref{proofLemma102}) we define $\gamma_i\stackrel{\Delta}{=}\frac{1}{n}\sum_{j=1}^n\mathrm{var}(Y_{i,j}|W_i,\boldsymbol{X})$ and use the fact that Gaussian random variables maximize entropy over those with a fixed variance.

On the other hand, due to \cite[Lemma 1]{WangISIT09}, we known that $\frac{1}{n}\sum_{i=1}^n\mathrm{cov}(Y_\mathcal{L,i}|X_i,W_\mathcal{L})=\mathrm{diag}(\gamma_1,\gamma_2)$. Then \cite[Lemma 3]{WangISIT09} implies that
\begin{eqnarray}
\frac{1}{n}\sum_{i=1}^n\mathrm{cov}(Y_\mathcal{L,i}|X_i,W_\mathcal{L})\preceq \left(\boldsymbol{D}^{-1}+\Sigma_{\tilde{N}_\mathcal{L}}^{-1}-\Sigma_{Y_\mathcal{L}}^{-1}\right)^{-1}, \mathrm{~with~}\Sigma_{\tilde{N}_\mathcal{L}}=\big(\mathrm{diag}(\sigma_{\tilde{N}_1}^2,\sigma_{\tilde{N}_2}^2)\big),
\end{eqnarray}
i.e.,
\begin{eqnarray}
\mathrm{diag}(\gamma_1,\gamma_2)\preceq\left(\left[\begin{array}{cc}1&\theta\\\theta&1\end{array}\right]^{-1}+\frac{\rho}{v_1v_2(1-\rho^2)}\left[\begin{array}{cc}1&1\\1&1\end{array}\right]\right)^{-1},\label{proofLemma103}
\end{eqnarray}
which can be combined with (\ref{proofLemma102}) to form a semi-definite optimization problem that minimizes
\begin{eqnarray}
\mathcal{F}(\gamma_1,\gamma_2)&\stackrel{\Delta}{=}&\frac{1}{2}\log \frac{1}{\gamma_1\gamma_2}\end{eqnarray} over $\gamma_1$ and $\gamma_2$ subject to
\begin{eqnarray}
\mathcal{G}(\gamma_1,\gamma_2)&\stackrel{\Delta}{=}&\left[\begin{array}{cc}1&\theta\\\theta&1\end{array}\right]^{-1} +\frac{\rho}{v_1v_2(1-\rho^2)}\left[\begin{array}{cc}1&1\\1&1\end{array}\right]-\mathrm{diag}(\gamma_1^{-1},\gamma_2^{-1})\preceq\boldsymbol{0}.
\end{eqnarray}
The Lagrangian is
\begin{eqnarray}
\mathbbm{L}(\gamma_1,\gamma_2)&=&\mathcal{F}(\gamma_1,\gamma_2)+\mathrm{tr}\Big(\Lambda\mathcal{G}(\gamma_1,\gamma_2)\Big),
\end{eqnarray}
where $\Lambda$ is a positive semi-definite matrix. Then the KKT condition is given by
\begin{eqnarray}
\nabla_{\gamma_i}\mathbbm{L}(\gamma_1,\gamma_2)&=&0,\quad i=1,2,\label{KKT201}\\
\mathcal{G}(\gamma_1,\gamma_2)&\preceq&\boldsymbol{0},\label{KKT202}\\
\Lambda\mathcal{G}(\gamma_1,\gamma_2)&=&\boldsymbol{0}.\label{KKT203}
\end{eqnarray}
Solving the (\ref{KKT201}) and (\ref{KKT203}), we get two sets of solutions, namely,
\begin{eqnarray}
\gamma_1&=&1-\theta,\nonumber\\
\gamma_2&=&1-\theta,\nonumber\\
\Lambda&=&\frac{1-\theta}{2}\cdot\left[\begin{array}{cc}1&-1\\-1&1\end{array}\right],
\end{eqnarray}
and
\begin{eqnarray}
\gamma_1&=&\frac{v_1v_2(1-\rho^2)(1+\theta)}{v_1v_2(1-\rho^2)+2\rho(1+\theta)},\nonumber\\
\gamma_2&=&\frac{v_1v_2(1-\rho^2)(1+\theta)}{v_1v_2(1-\rho^2)+2\rho(1+\theta)},\nonumber\\
\Lambda&=&\frac{v_1v_2(1-\rho^2)(1+\theta)}{v_1v_2(1-\rho^2)+2\rho(1+\theta)}\cdot\left[\begin{array}{cc}1&1\\1&1\end{array}\right].
\end{eqnarray}
Then it is easy to verify that the first set of solution satisfies (\ref{KKT202}) if $\theta\ge\tilde{\theta}$, while the second set of solution satisfies (\ref{KKT202}) if $\theta\le\tilde{\theta}$. Hence the optimal solutions of $\gamma_1$ and $\gamma_2$ are
\begin{eqnarray}
\gamma_1=\gamma_2&=&\left\{\begin{array}{cc}\frac{v_1v_2(1-\rho^2)(1+\theta)}{v_1v_2(1-\rho^2)+2\rho(1+\theta)}&\theta\le\tilde{\theta}\\1-\theta&\theta>\tilde{\theta}\end{array}\right.,
\end{eqnarray}
which directly lead to (\ref{Lemma1statement}).

To prove tightness of the lower bound $\underline{R}_{sum}(\Sigma_{Y_\mathcal{L}},\boldsymbol{D})$ when $\theta\le\tilde{\theta}$, we construct a BT scheme with distortion matrix
\begin{eqnarray}
\tilde{\boldsymbol{D}}&=&(\Sigma_{Y_\mathcal{L}}^{-1}+\mathrm{diag}(q_1,q_2)^{-1})^{-1}\nonumber\\
&=&\left[\begin{array}{cc}\frac{(1+\theta)(v_1v_2(1-\rho^2)+\rho(1+\theta))}{(v_1v_2(1-\rho^2)+2\rho(1+\theta))}&\frac{\rho(1+\theta)^2}{(v_1v_2(1-\rho^2)+2\rho(1+\theta))}\\ \frac{\rho(1+\theta)^2}{(v_1v_2(1-\rho^2)+2\rho(1+\theta))}&\frac{(1+\theta)(v_1v_2(1-\rho^2)+\rho(1+\theta))}{(v_1v_2(1-\rho^2)+2\rho(1+\theta))}\end{array}\right],\nonumber
\end{eqnarray}
and sum-rate
\begin{eqnarray}
\frac{1}{2}\log\frac{|\Sigma_{Y_\mathcal{L}}|}{|\tilde{\boldsymbol{D}}|}=\frac{1}{2}\log\frac{v_1^2v_2^2(1-\rho^2)}{\frac{v_1v_2(1+\theta)^2(1-\rho^2)}{(v_1v_2(1-\rho^2)+2\rho(1+\theta))}}= \underline{R}_{sum}(\Sigma_{Y_\mathcal{L}},\boldsymbol{D}),
\end{eqnarray}
where
\begin{eqnarray}
q_1&=&\frac{v_1^2v_2(1-\rho^2)(1+\theta)}{v_1^2v_2(1-\rho^2)-(v_2-\rho v_1)(1+\theta)},\nonumber\\
q_2&=&\frac{v_1v_2^2(1-\rho^2)(1+\theta)}{v_1v_2^2(1-\rho^2)-(v_1-\rho v_2)(1+\theta)}.\nonumber
\end{eqnarray}
Then tightness is proved by verifying 
\begin{eqnarray}
{\boldsymbol{D}}-\tilde{\boldsymbol{D}}&=&\frac{\rho(1-\theta^2)-v_1v_2\theta(1-\rho^2)}{(v_1v_2(1-\rho^2)+2\rho(1+\theta))}\cdot\left[\begin{array}{cc}1&-1\\-1&1\end{array}\right] \nonumber\\&\succeq&\boldsymbol{0},
\end{eqnarray}
where the last matrix inequality is due to the facts that $f_1(\theta)\stackrel{\Delta}{=}(v_1v_2(1-\rho^2)+2\rho(1+\theta))>0$, $f_2(\theta)\stackrel{\Delta}{=}\rho(1-\theta^2)-v_1v_2\theta(1-\rho^2)$ is monotone decreasing in the range $\theta\in[-1,\tilde{\theta})$, $f_2(\tilde{\theta})=0$, and the assumption that $\theta\le\tilde{\theta}$.
\end{proof}

\begin{center}
{\bf Appendix B:} Proof of Lemma \ref{lemmasubdiff}
\end{center}

\begin{proof}
First, due to the assumption that $\tilde{\boldsymbol{D}}^{-1}-\Sigma_{Y_\mathcal{L}}^{-1}$ is a p.s.d. diagonal matrix, we must have
\begin{eqnarray}
\theta&=&\left\{\begin{array}{cc}\frac{\sqrt{1-2\rho^2+\rho^4+4\rho^2d_1^2d_2^2}-(1-\rho^2)}{2\rho d_1d_2}&\rho\ge 0\\ \frac{-\sqrt{1-2\rho^2+\rho^4+4\rho^2d_1^2d_2^2}-(1-\rho^2)}{2\rho d_1d_2}&\rho<0\end{array}\right.,
\end{eqnarray}
with $d_1=\sqrt{D_1}$ and $d_2=\sqrt{D_2}$. Now since
\begin{eqnarray}
\underline{R}_{sum}(\Sigma_{Y_\mathcal{L}},{\boldsymbol{D}})&=&\max\Big\{R_{lb}(\Sigma_{Y_\mathcal{L}},{\boldsymbol{D}}), R_{\mu}(\Sigma_{Y_\mathcal{L}},{\boldsymbol{D}})\Big\},\nonumber
\end{eqnarray}
we compute
\begin{eqnarray}
\nabla_{{\boldsymbol{D}}} R_{lb}(\Sigma_{Y_\mathcal{L}},{\boldsymbol{D}})\mid_{\boldsymbol{D}=\tilde{\boldsymbol{D}}} &=&\kappa\cdot\left[\begin{array}{cc}\frac{1}{D_1}&\frac{s(1-2|\theta|)}{\sqrt{D_1D_2}}\\\frac{s(1-2|\theta|)}{\sqrt{D_1D_2}}&\frac{1}{D_2}\end{array}\right],\nonumber\\
\nabla_{{\boldsymbol{D}}} R_{\mu}(\Sigma_{Y_\mathcal{L}},{\boldsymbol{D}})\mid_{\boldsymbol{D}=\tilde{\boldsymbol{D}}} &=&\chi\cdot\left[\begin{array}{cc}\frac{1}{D_1}&\frac{s}{\sqrt{D_1D_2}}\\\frac{s}{\sqrt{D_1D_2}}&\frac{1}{D_2}\end{array}\right],
\end{eqnarray}
where
\begin{eqnarray}
\kappa&=&\frac{\rho^4-2d_1d_2\rho^3-2\rho^2+4\rho^2d_1^2d_2^2+2d_1d_2\rho+1}{2(1-\rho^2)^2}-\frac{\rho^2+2d_1d_2\rho-1}{2(1-\rho^2)^2}\sqrt{1-2\rho^2+\rho^4+4\rho^2d_1^2d_2^2},\nonumber\\
\chi&=&-\frac{\rho^4+2d_2\rho^3d_1+4\rho^2d_2^2d_1^2-2\rho^2-2d_2\rho d_1+1}{2(1-\rho^2)^2}+\frac{2d_1d_2\rho -1+\rho^2}{2(1-\rho^2)^2}\sqrt{1-2\rho^2+\rho^4+4\rho^2d_1^2d_2^2}.
\end{eqnarray}
Finally, it is easy to verify that
\begin{eqnarray}
-\tilde{\boldsymbol{D}}\cdot\nabla_{{\boldsymbol{D}}} R_{lb}(\Sigma_{Y_\mathcal{L}},{\boldsymbol{D}})\mid_{\boldsymbol{D}=\tilde{\boldsymbol{D}}}\cdot\tilde{\boldsymbol{D}}&=&\left[\begin{array}{cc}{D}_1&s(1-2|\theta|)\sqrt{{D}_1{D}_2}\\ s(1-2|\theta|)\sqrt{{D}_1{D}_2}&{D}_2\end{array}\right],\nonumber\\
-\tilde{\boldsymbol{D}}\cdot\nabla_{{\boldsymbol{D}}} R_{\mu}(\Sigma_{Y_\mathcal{L}},{\boldsymbol{D}})\mid_{\boldsymbol{D}=\tilde{\boldsymbol{D}}}\cdot\tilde{\boldsymbol{D}}&=&\left[\begin{array}{cc}{D}_1&s\sqrt{{D}_1{D}_2}\\ s\sqrt{{D}_1{D}_2}&{D}_2\end{array}\right],\nonumber
\end{eqnarray}
and Lemma \ref{lemmasubdiff} readily follows.
\end{proof}

\begin{center}
{\bf Appendix C:} Proof of Lemma \ref{lemmasamestruct}
\end{center}

\begin{proof}
To prove Lemma \ref{lemmasamestruct}, we need to use \cite[Lemma 1]{WangISIT09}, which is stated in the following proposition for the sake of completion.

\begin{proposition}\label{proplemma1}
For integers $n$, $m$ and random variables $X$ and $\omega$, let $\boldsymbol{X}$ be a row vector of $n$ independent drawings of $X$, and $\boldsymbol{Y}(\omega)$ be any $1\times m$ vector of measurable functions of $\omega$. Then it holds that
\begin{eqnarray}
E\left[\Big(\boldsymbol{X}-E(\boldsymbol{X}|\omega)\Big)^T\boldsymbol{Y}(\omega)\right]&=&\boldsymbol{0}_{n\times m}.
\end{eqnarray}
\end{proposition}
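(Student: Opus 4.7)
The plan is to establish the matrix identity entrywise, reducing the $(i,j)$-th component to the classical orthogonality property of conditional expectation. Fixing $i \in \{1,\ldots,n\}$ and $j \in \{1,\ldots,m\}$, the $(i,j)$-th entry of $E\big[(\boldsymbol{X}-E(\boldsymbol{X}|\omega))^T \boldsymbol{Y}(\omega)\big]$ is simply the scalar $E\big[(X_i - E(X_i|\omega))\, Y_j(\omega)\big]$, where $X_i$ denotes the $i$-th coordinate of $\boldsymbol{X}$ and $Y_j(\omega)$ the $j$-th coordinate of $\boldsymbol{Y}(\omega)$.

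Next I would apply the tower property of conditional expectation to rewrite this scalar as
\begin{eqnarray}
E\big[(X_i - E(X_i|\omega))\, Y_j(\omega)\big] &=& E\Big[E\big[(X_i - E(X_i|\omega))\, Y_j(\omega)\,\big|\,\omega\big]\Big].\nonumber
\end{eqnarray}
Because $Y_j(\omega)$ is measurable with respect to $\omega$, it can be pulled outside the inner conditional expectation, yielding $E\big[Y_j(\omega)\cdot E[X_i - E(X_i|\omega)\,|\,\omega]\big]$. By linearity together with the defining property of conditional expectation, $E[X_i\,|\,\omega] - E(X_i|\omega) = 0$ almost surely, so the entire expression vanishes. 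Since this holds for every $(i,j)$, the desired matrix identity follows.

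The main ``obstacle'' here is essentially non-existent: this is just the standard orthogonality property of conditional expectation applied coordinatewise. The hypothesis that $\boldsymbol{X}$ consists of $n$ independent drawings of $X$ plays no role in the identity itself; it is presumably included because the proposition will be invoked in settings where the MT source blocks are i.i.d. The only implicit requirement is that each product $X_i\, Y_j(\omega)$ be integrable, which is automatic under the Gaussianity and finite-variance assumptions made elsewhere in the paper.
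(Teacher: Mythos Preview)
Your argument is correct: reducing to the $(i,j)$-th entry and invoking the tower property plus $\sigma(\omega)$-measurability of $Y_j(\omega)$ is exactly the standard orthogonality-of-residuals proof, and your observation that the i.i.d.\ hypothesis on $\boldsymbol{X}$ is irrelevant here is also accurate.

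As for comparison with the paper: there is nothing to compare. The paper does not prove this proposition at all; it simply quotes it as \cite[Lemma~1]{WangISIT09} ``for the sake of completion'' and then uses it as a black box in the proof of Lemma~\ref{lemmasamestruct}. So your proposal supplies a proof where the paper gives none.
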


Now (\ref{lemmasamestruct1}) and the definition of $W_\mathcal{L}$ imply that the Markov chains $W_i\rightarrow\boldsymbol{Y}_{i}\rightarrow\boldsymbol{X}_\mathcal{M}\rightarrow(\boldsymbol{Y}_{j},W_{j})$ and $W_j\rightarrow\boldsymbol{Y}_{j}\rightarrow\boldsymbol{X}_\mathcal{M}\rightarrow(\boldsymbol{Y}_{i},W_{i})$ hold. Hence (\ref{lemmasamestruct2}) must hold since
\begin{eqnarray}
\hspace{-0.15in}\Big[\mathrm{cov}(\boldsymbol{Y}_\mathcal{L}|\boldsymbol{X}_\mathcal{M},W_\mathcal{L})\Big]_{i,j} &=&E\Big[\big(\boldsymbol{Y}_i-E(\boldsymbol{Y}_i|\boldsymbol{X}_\mathcal{M},W_\mathcal{L})\big) \big(\boldsymbol{Y}_j-E(\boldsymbol{Y}_j|\boldsymbol{X}_\mathcal{M},W_\mathcal{L})\big)^T\Big]\nonumber\\
&=&E\Big[\big(\boldsymbol{Y}_i-E(\boldsymbol{Y}_i|\boldsymbol{X}_\mathcal{M},W_{i})\big) \big(\boldsymbol{Y}_j-E(\boldsymbol{Y}_j|\boldsymbol{X}_\mathcal{M},W_{j})\big)^T\Big]\label{newlemma4proof1}\\
&=&E\Big[\big(\boldsymbol{Y}_i-E(\boldsymbol{Y}_i|\boldsymbol{X}_\mathcal{M},\boldsymbol{Y}_j,W_{i},W_{j})\big) \big(\boldsymbol{Y}_j-E(\boldsymbol{Y}_j|\boldsymbol{X}_\mathcal{M},W_{j})\big)^T\Big]\label{newlemma4proof2}\\
&=&0,\label{newlemma4proof3}
\end{eqnarray}
where (\ref{newlemma4proof1}) and (\ref{newlemma4proof2}) are due to the above two Markov chains, and (\ref{newlemma4proof3}) used Proposition \ref{proplemma1} and the fact that $\Big(\boldsymbol{Y}_j-E(\boldsymbol{Y}_j|\boldsymbol{X}_\mathcal{M},W_{j})\Big)$ is a function of $\omega\stackrel{\Delta}{=}(\boldsymbol{X}_\mathcal{M}, \boldsymbol{Y}_j, W_{i}, W_{j})$.
\end{proof}

\begin{center}
{\bf Appendix D:} Proof of Lemma \ref{lemmaconnect}
\end{center}

\begin{proof}
First, given $\Sigma_{N_\mathcal{L}}\in\Upsilon_K(\pi)$ and $\Sigma_{N_\mathcal{L}}\preceq\Sigma_{Y_\mathcal{L}}$, we can always apply (\ref{Xconst1}) to find an $M\times L$ matrix $\boldsymbol{H}$ and (\ref{Xconst2}) to construct $M$ remote sources $X_\mathcal{M}$ such that (\ref{correlationstruct}) holds. This implies that $\Sigma_{N_\mathcal{L}}=\mathrm{cov}({\boldsymbol{Y}_\mathcal{L}|\boldsymbol{X}_\mathcal{M}})\in\Upsilon_K(\pi)$.
Then we can apply Lemma \ref{lemmasamestruct}, and obtain that $\mathrm{cov}({\boldsymbol{Y}_\mathcal{L}|\boldsymbol{X}_\mathcal{M},W_\mathcal{L}})\in\Upsilon_K(\pi)$. Hence we can denote
\begin{eqnarray}
\boldsymbol{\Gamma}~\stackrel{\Delta}{=}~\mathrm{cov}({\boldsymbol{Y}_\mathcal{L}|\boldsymbol{X}_\mathcal{M},W_\mathcal{L}})~\in~\Upsilon_K(\pi),
\end{eqnarray}
which takes form of (\ref{Gammadef}).

On the other hand, due to (\ref{correlationstruct}), we know that any scheme that achieves a distortion matrix of $\boldsymbol{D}$ on $Y_\mathcal{L}$ must be able to achieve a distortion matrix of $\boldsymbol{A}\boldsymbol{D}\boldsymbol{A}^T+\boldsymbol{B}$ on $X_\mathcal{M}$.

Similar to (\ref{Lemma2proof1}), we write
\begin{eqnarray}
&&H(W_\mathcal{L})\nonumber\\&=&I(\boldsymbol{Y}_\mathcal{L},\boldsymbol{X};W_\mathcal{L})\nonumber\\
&=&I(\boldsymbol{X};W_\mathcal{L})+\sum_{i=1}^KI(\boldsymbol{Y}_{\{\pi_{2k-1},\pi_{2k}\}};W_{\{\pi_{2k-1},\pi_{2k}\}}|\boldsymbol{X}_\mathcal{M}) +\sum_{i=K+1}^LI(\boldsymbol{Y}_{\pi_{i}};W_{\pi_{i}}|\boldsymbol{X}_\mathcal{M})\label{proofLemma201}\\
&=&h(\boldsymbol{X})-h(\boldsymbol{X}|W_\mathcal{L})+\sum_{i=1}^KI(\boldsymbol{Y}_{\{\pi_{2k-1},\pi_{2k}\}};W_{\{\pi_{2k-1},\pi_{2k}\}}|\boldsymbol{X}_\mathcal{M}) \nonumber\\&&+\sum_{i=K+1}^L\big(h(\boldsymbol{Y}_{\pi_{i}}|\boldsymbol{X}_\mathcal{M})-h(\boldsymbol{Y}_{\pi_{i}};W_{\pi_{i}}|\boldsymbol{X}_\mathcal{M})\big)\\
&\ge&\frac{1}{2}\log\frac{|\Sigma_{X_\mathcal{M}}|}{|\boldsymbol{A}\boldsymbol{D}\boldsymbol{A}^T+\boldsymbol{B}|}
+\sum_{i=1}^KI(\boldsymbol{Y}_{\{\pi_{2k-1},\pi_{2k}\}};W_{\{\pi_{2k-1},\pi_{2k}\}}|\boldsymbol{X}_\mathcal{M}) +\frac{1}{2}\sum_{i=K+1}^L\log \frac{\sigma_{N_{\pi_i}}^2}{\gamma_i},\label{proofLemma202}
\end{eqnarray}
where (\ref{proofLemma202}) comes from the assumption that the achieved distortion is no larger than $\boldsymbol{D}$ in the positive definite sense, and the definitions $\mathrm{cov}({\boldsymbol{Y}_{\{\pi_{2k-1},\pi_{2k}\}}|W_{\{\pi_{2k-1},\pi_{2k}\}},\boldsymbol{X}_\mathcal{M}})=\boldsymbol{\Gamma}_k$ and $\gamma_i=\frac{1}{n}\sum_{j=1}^n\mathrm{var}(Y_{i,j}|W_i,\boldsymbol{X})$. Now comparing (\ref{lemmaconnentstatement}) with (\ref{proofLemma202}), we only need to show that
\begin{eqnarray}
I(\boldsymbol{Y}_{\{\pi_{2k-1},\pi_{2k}\}};W_{\{\pi_{2k-1},\pi_{2k}\}}|\boldsymbol{X}_\mathcal{M}) &\ge&nR_{sum}(\Sigma_{\boldsymbol{Y}_{\{\pi_{2k-1},\pi_{2k}\}}|\boldsymbol{X}_\mathcal{M}},\boldsymbol{\Gamma}_k)\label{lemma4proof1}
\end{eqnarray}
holds for any $k\in\mathcal{K}$.

Assume that (\ref{lemma4proof1}) does not hold for some $k\in\mathcal{K}$, i.e., there exist encoders $\psi_{\pi_{2k-1}}^{(n)}$ and $\psi_{\pi_{2k}}^{(n)}$ such that
\begin{eqnarray}
\mathrm{cov}({\boldsymbol{Y}_{\{\pi_{2k-1},\pi_{2k}\}}|W_{\{\pi_{2k-1},\pi_{2k}\}},\boldsymbol{X}_\mathcal{M}})&=&\boldsymbol{\Gamma}_k,\nonumber\\
I(\boldsymbol{Y}_{\{\pi_{2k-1},\pi_{2k}\}};W_{\{\pi_{2k-1},\pi_{2k}\}}|\boldsymbol{X}_\mathcal{M}) &<&nR_{sum}(\Sigma_{\boldsymbol{Y}_{\{\pi_{2k-1},\pi_{2k}\}}|\boldsymbol{X}_\mathcal{M}},\boldsymbol{\Gamma}_k).
\end{eqnarray}
Then consider the matrix-distortion constrained two-terminal problem with sources
\begin{eqnarray}
\tilde{Y}_{\{\pi_{2k-1},\pi_{2k}\}}\sim\mathcal{N}(\boldsymbol{0}, \Sigma_{\boldsymbol{Y}_{\{\pi_{2k-1},\pi_{2k}\}}|\boldsymbol{X}_\mathcal{M}})
\end{eqnarray}
and target distortion matrix $\boldsymbol{\Gamma}_k$. Now let $\boldsymbol{X}_\mathcal{M}$ be a length-$n$ block of samples independently draw from $X_\mathcal{M}=\boldsymbol{A} Y_\mathcal{L} + Z_\mathcal{L}$ according to (\ref{Xconst2}). Also assume that $\boldsymbol{X}_\mathcal{M}$ is independent of the sources $\tilde{\boldsymbol{Y}}_{\{\pi_{2k-1},\pi_{2k}\}}$ and available at both the encoders and the decoder. Let
\begin{eqnarray}
\bar{\boldsymbol{Y}}_{\{\pi_{2k-1},\pi_{2k}\}}&=&\tilde{\boldsymbol{Y}}_{\{\pi_{2k-1},\pi_{2k}\}}+\boldsymbol{H}_{\mathcal{M},\{\pi_{2k-1},\pi_{2k}\}}^T\boldsymbol{X}_\mathcal{M},
\end{eqnarray}
where $\boldsymbol{H}$ is the $M\times L$ matrix satisfying (\ref{correlationstruct}).
It is obvious that $\bar{\boldsymbol{Y}}_{\{\pi_{2k-1},\pi_{2k}\}}$ has a covariance matrix of $\Sigma_{\boldsymbol{Y}_{\{\pi_{2k-1},\pi_{2k}\}}}$, hence we can blindly apply the same encoders $\psi_{\pi_{2k-1}}^{(n)}$ and $\psi_{\pi_{2k}}^{(n)}$ on $\bar{\boldsymbol{Y}}_{\{\pi_{2k-1},\pi_{2k}\}}$ to generate $W_{\{\pi_{2k-1},\pi_{2k}\}}$ before using Slepian-Wolf coding with decoder side information $\boldsymbol{X}_\mathcal{M}$, to achieve a final rate of
\begin{eqnarray}
H(W_{\{\pi_{2k-1},\pi_{2k}\}}|\boldsymbol{X}_\mathcal{M})
&=&I(\boldsymbol{Y}_{\{\pi_{2k-1},\pi_{2k}\}};W_{\{\pi_{2k-1},\pi_{2k}\}}|\boldsymbol{X}_\mathcal{M})\nonumber\\
&<&nR_{sum}(\Sigma_{\boldsymbol{Y}_{\{\pi_{2k-1},\pi_{2k}\}}|\boldsymbol{X}_\mathcal{M}},\boldsymbol{\Gamma}_k),
\end{eqnarray}
and a distortion matrix of $\boldsymbol{\Gamma}_k=\mathrm{cov}({\boldsymbol{Y}_{\{\pi_{2k-1},\pi_{2k}\}}|W_{\{\pi_{2k-1},\pi_{2k}\}},\boldsymbol{X}_\mathcal{M}})$, which contradicts with the definition of $R_{sum}(\Sigma_{\boldsymbol{Y}_{\{\pi_{2k-1},\pi_{2k}\}}|\boldsymbol{X}_\mathcal{M}},\boldsymbol{\Gamma}_k)$. Then Lemma \ref{lemmaconnect} follows from (\ref{proofLemma201}), (\ref{lemma4proof1}), and Lemma \ref{lemma2term}.
\end{proof}

\begin{center}
{\bf Appendix E:} Proof of Lemma \ref{lemmaBD}
\end{center}
\begin{proof}
First, it is obvious that
$R^{BT}_{sum}(\Sigma_{{Y}_\mathcal{L}},{D}_\mathcal{L})
~=~R_{sum}^{BT}(\Sigma_{\bar{Y}_\mathcal{K}},\bar{D}_\mathcal{K})$.
Then assume that there is a sequence of schemes
$\{(\phi^{(n)}_\mathcal{L},\psi^{(n)}_\mathcal{L}):n\in\mathbb{N}^+\}$ such that
\begin{eqnarray}
\limsup_{n\rightarrow\infty} \sum_{j\in\mathcal{L}} R_j^{(n)}
&<&R_{sum}^{BT}(\Sigma_{\bar{Y}_\mathcal{K}},\bar{D}_\mathcal{K}),\\
\limsup_{n\rightarrow\infty} \frac{1}{n}E\Big[({Y}_{j,i}-\hat{{Y}}_{j,i})^2
\Big]&\le&D_j,\mathrm{~~for~any~}j\in\mathcal{L}.
\end{eqnarray}
Now consider another sequence of schemes
$\{(\bar{\phi}^{(n)}_\mathcal{L},\bar{\psi}^{(n)}_\mathcal{L}):n\in
\mathbb{N}^+\}$ such that for any $k\in\mathcal{K}$,
\begin{eqnarray}
\bar{\phi}^{(n)}_{\mathbbm{i}(k)}(\boldsymbol{Y}_{\mathbbm{i}(k)})
&=&\boxtimes_{j\in\mathcal{S}_k}\bar{W}_j,\\
\bar{\phi}^{(n)}_j(\boldsymbol{Y}_j)&\equiv&0\mathrm{~for~any~}j
\in\mathcal{S}_k-\{\mathbbm{i}(k)\},
\end{eqnarray}
where
\begin{eqnarray}
\bar{W}_{\mathbbm{i}(k)}&\stackrel{\Delta}{=}&{W}_{\mathbbm{i}(k)}
~=~{\phi}^{(n)}_{\mathbbm{i}(k)}(\boldsymbol{Y}_{\mathbbm{i}(k)}),\\
\bar{W}_{j}&\stackrel{\Delta}{=}
&{\phi}^{(n)}_{j}(\boldsymbol{Y}_{\mathbbm{i}(k)}+\boldsymbol{Z}_j),
\end{eqnarray}
with $\bar{Z}_j\sim\mathcal{N}(0,\sigma_{Z_j}^2)$ being independent of $Y_\mathcal{L}$, ``$\boxtimes$" denotes Cartesian product, and
\begin{eqnarray}
\bar{\psi}^{(n)}_{\mathbbm{i}(k)}(W_\mathcal{L})
&=&{\psi}^{(n)}_{\mathbbm{i}(k)}(\bar{W}_\mathcal{L}).
\end{eqnarray}

Then we must have
\begin{eqnarray}
R_{sum}(\phi^{(n)}_\mathcal{L},\psi^{(n)}_\mathcal{L})
&=&R_{sum}(\bar{\phi}^{(n)}_\mathcal{L},\bar{\psi}^{(n)}_\mathcal{L}),\\
\Rightarrow \limsup_{n\rightarrow\infty}R_{sum}(\bar{\phi}^{(n)}_\mathcal{L},
\bar{\psi}^{(n)}_\mathcal{L})
&=&\limsup_{n\rightarrow\infty}R_{sum}(\phi^{(n)}_\mathcal{L},
\psi^{(n)}_\mathcal{L})\nonumber\\
&<&R_{sum}^{BT}(\Sigma_{\bar{Y}_\mathcal{K}},\bar{D}_\mathcal{K}),
\end{eqnarray}
and
\begin{eqnarray}
\limsup_{n\rightarrow\infty} \frac{1}{n}E\Big[\Big({Y}_{j,i}
-E({{Y}}_{j,i}|\bar{W}_\mathcal{L})\Big)^2\Big]
\hspace{-0.05in}&\le&\left\{\begin{array}{lr}D_j&j
=\mathbbm{i}(k)\mathrm{~for~some~}k\in\mathcal{K}\\
D_{\mathbbm{i}(k)}+\sigma_{Z_j}^2~\le~D_j&j\in\mathcal{S}_k
-\{\mathbbm{i}(k)\}\mathrm{~for~some~}k\in\mathcal{K}\end{array}
\right..\nonumber
\end{eqnarray}
Hence the sequence of schemes
$\{(\bar{\phi}^{(n)}_\mathcal{L},\bar{\psi}^{(n)}_\mathcal{L}):n
\in\mathbb{N}^+\}$ achieves the distortion vector $D_\mathcal{L}$
and a sum-rate smaller than
$R_{sum}^{BT}(\Sigma_{\bar{Y}_\mathcal{K}},\bar{D}_\mathcal{K})$.
On the other hand,
$\{(\bar{\phi}^{(n)}_\mathcal{L},\bar{\psi}^{(n)}_\mathcal{L}):n
\in\mathbb{N}^+\}$ is also an achievable sequence of schemes for the
induced $K$-terminal problem, for which the BT sum-rate bound
$R^{BT}_{sum}(\Sigma_{{Y}_\mathcal{L}},{D}_\mathcal{L})
~=~R_{sum}^{BT}(\Sigma_{\bar{Y}_\mathcal{K}},\bar{D}_\mathcal{K})$
is known to be tight, leading to a contradiction.
\end{proof}

\begin{center}
{\bf Appendix F:} Proof of Lemma \ref{lemmaWangconditionBC}
\end{center}
\begin{proof}
We only need to show that if \begin{eqnarray}
\mathrm{diag}\Big((\tilde{\boldsymbol{D}}\odot\tilde{\boldsymbol{D}})^{-1}D\boldsymbol{1}\Big)&\succeq& \tilde{\boldsymbol{D}}^{-1}-\tilde{\boldsymbol{D}}^{-1}(\tilde{\boldsymbol{D}}^{-1}+\Theta^{-1} -\Sigma_{Y_\mathcal{L}}^{-1})^{-1}\tilde{\boldsymbol{D}}^{-1}\label{WangcondBCtheta}
\end{eqnarray}
holds for some p.s.d. diagonal matrix $\Theta=\mathrm{diag}(\mu_1,\mu_2,...,\mu_L)$ such that
\begin{eqnarray}
\Sigma_{Y_\mathcal{L}}&\succeq&\Theta,\label{WangcondBCtheta1}
\end{eqnarray}
then (\ref{WangcondBC}) must also hold.

In fact, due to the symmetric properties of block-circulant matrices, it is easy to show that if both (\ref{WangcondBCtheta}) and (\ref{WangcondBCtheta1}) hold for $\Theta=\mathrm{diag}(\mu_1,\mu_2,...,\mu_L)$, then they must also hold for
\begin{eqnarray}
\Theta^\dagger_k&=&\mathrm{diag}\Big(\mu_{\varsigma(k,1)},\mu_{\varsigma(k,2)}, \mu_{\varsigma(k+1,1)},\mu_{\varsigma(k+1,2)},...,\mu_{\varsigma(k+m-1,1)},\mu_{\varsigma(k+m-1,2)}\Big),
\end{eqnarray}
for any $k\in\{0,1,...,m-1\}$, as well as
\begin{eqnarray}
\Theta^\ddagger_k&=&\mathrm{diag}\Big(\mu_{\varsigma(k,2)},\mu_{\varsigma(k,1)}, \mu_{\varsigma(k+1,2)},\mu_{\varsigma(k+1,1)},...,\mu_{\varsigma(k+m-1,2)},\mu_{\varsigma(k+m-1,1)}\Big),
\end{eqnarray}
where $\varsigma(j,i)\stackrel{\Delta}{=}2\cdot(j~\mathrm{mod}~m)+i$.
Hence (\ref{WangcondBC}) must be true since
\begin{eqnarray}
\mathrm{diag}\Big((\tilde{\boldsymbol{D}}\odot\tilde{\boldsymbol{D}})^{-1}D\boldsymbol{1}\Big)&\succeq& \frac{1}{L}\sum_{k=1}^m\left[\tilde{\boldsymbol{D}}^{-1}-\tilde{\boldsymbol{D}}^{-1}(\tilde{\boldsymbol{D}}^{-1}+(\Theta^\dagger_k)^{-1} -\Sigma_{Y_\mathcal{L}}^{-1})^{-1}\tilde{\boldsymbol{D}}^{-1}\right] \nonumber\\&&+ \frac{1}{L}\sum_{k=1}^m\left[\tilde{\boldsymbol{D}}^{-1}-\tilde{\boldsymbol{D}}^{-1}(\tilde{\boldsymbol{D}}^{-1}+(\Theta^\ddagger_k)^{-1} -\Sigma_{Y_\mathcal{L}}^{-1})^{-1}\tilde{\boldsymbol{D}}^{-1}\right]\nonumber\\
&\succeq&\tilde{\boldsymbol{D}}^{-1}-\tilde{\boldsymbol{D}}^{-1}(\tilde{\boldsymbol{D}}^{-1}+(\frac{1}{L}\sum_{k=1}^m\Theta^\dagger_k +\frac{1}{L}\sum_{k=1}^m\Theta^\ddagger_k)^{-1} -\Sigma_{Y_\mathcal{L}}^{-1})^{-1}\tilde{\boldsymbol{D}}^{-1}\label{prooflemma51}\\
&\succeq&\tilde{\boldsymbol{D}}^{-1}-\tilde{\boldsymbol{D}}^{-1}(\tilde{\boldsymbol{D}}^{-1}+\lambda_{min}^{-1}\boldsymbol{I}_L -\Sigma_{Y_\mathcal{L}}^{-1})^{-1}\tilde{\boldsymbol{D}}^{-1}\label{prooflemma52},
\end{eqnarray}
where (\ref{prooflemma51}) is due to the concavity of $\tilde{\boldsymbol{D}}^{-1}-\tilde{\boldsymbol{D}}^{-1}(\tilde{\boldsymbol{D}}^{-1}+\Theta^{-1} -\Sigma_{Y_\mathcal{L}}^{-1})^{-1}\tilde{\boldsymbol{D}}^{-1}$ with respect to $\Theta$, and (\ref{prooflemma52}) uses the fact that
\begin{eqnarray}
&&\Sigma_{Y_\mathcal{L}}~\succeq~\Theta^\dagger_k,~~\Sigma_{Y_\mathcal{L}}~\succeq~\Theta^\ddagger_k\nonumber\\
&\Rightarrow&\Sigma_{Y_\mathcal{L}}~\succeq~\frac{1}{L}\sum_{k=1}^m\Theta^\dagger_k +\frac{1}{L}\sum_{k=1}^m\Theta^\ddagger_k~=~\frac{1}{L}\sum_{i=1}^L \mu_i \boldsymbol{I}_L\nonumber\\&\Rightarrow&\frac{1}{L}\sum_{i=1}^L \mu_i\le\lambda_{min}.
\end{eqnarray}
\end{proof}

\thebibliography{99} \vspace*{-0.in} \itemsep -0.0in \small
\vspace*{-0.in}

\bibitem{Berger77} T.~Berger, ``Multiterminal source coding",
{\it The Inform. Theory Approach to Communications,} G.~Longo,
Ed.,~New York: Springer-Verlag, 1977.

\bibitem{Tung77} S. Tung, {\em Multiterminal Rate-distortion
Theory,} Ph. D. Dissertation, School of Electrical Engineering,
Cornell University, Ithaca, NY, 1978.

\bibitem{VisBergerQGCEO}
H. Viswanathan and T. Berger, ``The quadratic Gaussian CEO problem,"
{\it IEEE Trans. Inform. Theory}, vol. 43, pp. 1549-1559, Sept.
1997.

\bibitem{Oohama05} Y.~Oohama,
``Rate-distortion theory for Gaussian multiterminal source coding
systems with several side informations at the decoder," {\it IEEE
Trans. Inform. Theory}, vol.~51, pp.~2577-2593, July 2005.

\bibitem{Wagner05} A. Wagner, S. Tavildar, and P. Viswanath,
``The rate region of the quadratic Gaussian two-terminal
source-coding problem," {\em IEEE Trans. Inform. Theory}, vol. 54,
pp. 1938-1961, May 2008.

\bibitem{WangISIT09} J. Wang, J. Chen, and X. Wu, ``On the
sum rate of Gaussian multiterminal source coding: new proofs and
results," to appear in {\it IEEE Trans. Inform. Theory}.

\bibitem{Allerton:09} Y. Yang and Z. Xiong,
``The sum-rate bound for a new class of quadratic Gaussian
multiterminal source coding problems,"
submitted to {\it IEEE Trans. Inform. Theory}, Sept. 2009.
See also Y. Yang and Z. Xiong,
``The sum-rate bound for a new class of quadratic Gaussian
multiterminal source coding problems,"
Proc. Allerton'09, Monticello, IL, Oct. 2009.

\bibitem{YangITA10}  Y. Yang, Y. Zhang, and Z. Xiong, ``A new sufficient condition for sum-rate tightness in quadratic Gaussian multiterminal source coding," Proc. ITA'10, San Diego, CA, February 2010.

\bibitem{Oohama97} Y.~Oohama, ``Gaussian multiterminal source coding,"
{\it IEEE Trans. Inform. Theory}, vol. 43, pp. 1912-1923, Nov.~1997.

\bibitem{subgradientbook} D. P. Bertsekas, {\em Nonlinear Programming}, 2nd ed. Belmont, MA: Athena Scientific, 1999.

\bibitem{Boydbook} S. Boyd, L. Vandenberghe, {\em Convex Optimization}, Cambridge University Press, Cambridge 2004.
\end{document}